\title{Minimum Star Partitions of Simple Polygons in Polynomial Time}
\crefname{observation}{Observation}{Observations}
\crefname{claim}{Claim}{Claims}
\crefname{cfigure}{Figure}{Figures}
\newcommand{\RR}{\mathbb{R}}
\newcommand{\NN}{\mathbb{N}}
\newcommand\Scen{S^{\textsc{(centers)}}}
\newcommand\Sint{S^{\textsc{(internal)}}}
\newcommand\Sbor{S^{\textsc{(border)}}}
\newcommand\poly{\mathrm{poly}}
\newcommand\IfRestateTF{%
  \ifx\label\thmt@gobble@label 
    \expandafter\@firstoftwo
  \else
    \expandafter\@secondoftwo
  \fi
}
\newcommand{\RestateRemark}{\IfRestateTF{{\normalfont\bfseries (Restated) }}{}}
\begin{document}

\maketitle

\begin{abstract}
We devise a polynomial-time algorithm for partitioning a simple polygon $P$
into a minimum number of star-shaped polygons.
The question of whether such an algorithm exists has been open for more than four decades [Avis and Toussaint, Pattern Recognit., 1981] and it has been repeated frequently, for example in O'Rourke's famous book [\emph{Art Gallery Theorems and Algorithms}, 1987].
In addition to its strong theoretical motivation, the problem is also motivated by practical domains such as CNC pocket milling, motion planning, and shape parameterization.

The only previously known algorithm for a non-trivial special case is for $P$ being both monotone and rectilinear [Liu and Ntafos, Algorithmica, 1991].
For general polygons, an algorithm was only known for the restricted version in which Steiner points are disallowed [Keil, SIAM J. Comput., 1985], meaning that each corner of a piece in the partition must also be a corner of $P$. Interestingly, the solution size for the restricted version may be linear for instances where the unrestricted solution has constant size.
The covering variant in which the pieces are star-shaped but allowed to overlap---known as the \emph{Art Gallery Problem}---was recently shown to be $\exists\mathbb R$-complete and is thus likely not in NP [Abrahamsen, Adamaszek and Miltzow, STOC 2018 \&\ J. ACM 2022]; this is in stark contrast to our result.
Arguably the most related work to ours is the polynomial-time algorithm to partition a simple polygon into a minimum number of \emph{convex} pieces by
Chazelle and Dobkin~[STOC, 1979 \&\ Comp. Geom., 1985].
\end{abstract}
\begin{figure}
\centering
\includegraphics[page=10,width=0.7\linewidth]{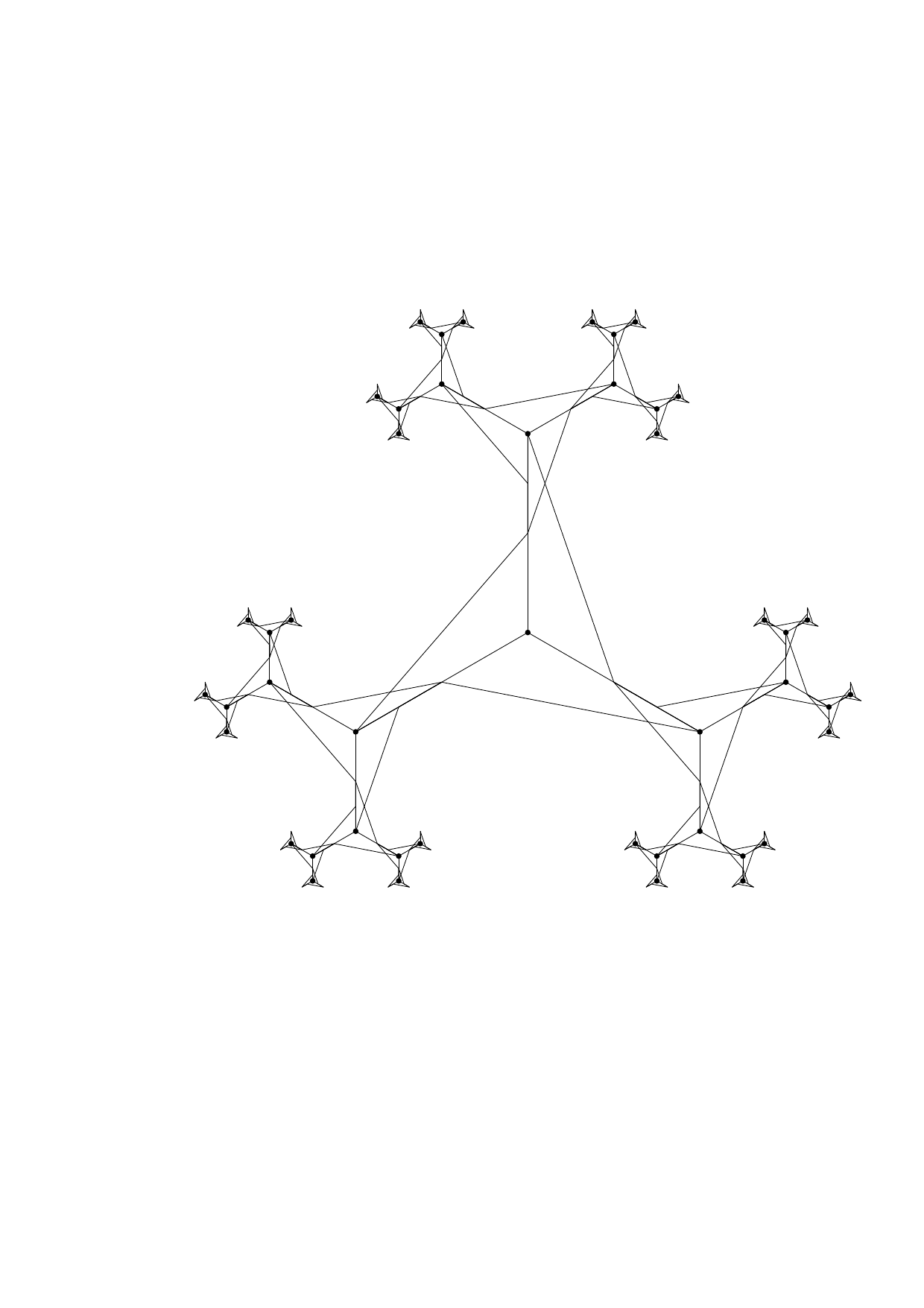}
\caption{Repeating the patterns, we obtain polygons where star centers and corners of pieces of arbitrarily high degree are required.
}
\label{fig:art}
\end{figure}






\section{Introduction}\label{sec:intro}

A simple polygon $Q$ is called \emph{star-shaped} if there is a point $A$ in $Q$ such that for all points $B$ in $Q$, the line segment $AB$ is contained in $Q$.
Such a point $A$ is called a \emph{star center} of $Q$.
A \emph{star partition} of a polygon $P$ is a set of pairwise non-overlapping star-shaped simple polygons whose union equals $P$; see \Cref{fig:art}.
The polygons constituting the star partition are called the \emph{pieces} of the partition.

Avis and Toussaint~\cite{DBLP:journals/pr/AvisT81} described in 1981, an algorithm running in $O(n\log n)$ time to partition a simple polygon (i.e., a polygon without holes) into at most $\lfloor n/3\rfloor$ star-shaped pieces---where $n$ denotes the number of corners of the polygon---based on Fisk's constructive proof~\cite{DBLP:journals/jct/Fisk78a} of Chv\'{a}tal's Art Gallery Theorem~\cite{chvatal1975combinatorial}.
Avis and Toussaint~\cite{DBLP:journals/pr/AvisT81} wrote: ``\emph{An interesting open problem would be to try to find the decomposition into the minimum number of star-shaped polygons.}''
This question has been repeated in several other papers~\cite{toussaint1982computational,DBLP:journals/tit/ORourkeS83,DBLP:journals/siamcomp/Keil85,DBLP:journals/pieee/Shermer92} and also in O'Rourke's well-known book~\cite{o1987art}: ``\emph{Can a variant of Keil's dynamic programming approach~\cite{DBLP:journals/siamcomp/Keil85} be used to find star partitions permitting Steiner points\footnote{\label{footnote:steiner-points}A \emph{Steiner point} is a corner of a piece in the partition which is \emph{not} a corner of the input polygon. We discuss the challenges and importance of allowing Steiner points later in this section.}? Chazelle was able to achieve $O(n^3)$ for minimum convex partition with Steiner points via a very complex dynamic programming algorithm~\cite{chazelle1980computational}, but star partitions seem even more complicated.}''
Before our work, the problem was not known to be in \textsf{NP} and not even an exponential-time algorithm was known.
In this paper, we resolve the open problem by providing a polynomial-time algorithm,
thereby closing a research question that has been open for more than four decades.

\begin{restatable}{theorem}{MainTheorem}
\label{thm:main-theorem}\RestateRemark
There is an algorithm performing $O(n^{105})$ arithmetic operations that partitions a simple polygon with $n$ corners into a minimum number of star-shaped pieces.
The number of bits used to represent each Steiner point in the constructed solution is $O(K)$ where $K$ is the total number of bits used to represent the corners of $P$.
\end{restatable}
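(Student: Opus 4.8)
The plan is to reduce the problem to a finite (polynomial-size) search space, by first pinning down a polynomial-size set of candidate locations for star centers, Steiner points, and edges of pieces, and then running a dynamic program over that discrete structure. I would proceed in roughly the following steps.

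\textbf{Step 1: Structural characterization of an optimal partition.} I would first argue that there exists an optimal star partition in ``general position'' with respect to certain combinatorial features of $P$: concretely, that the edges of the pieces can be taken to lie along a polynomial-size family of lines determined by the reflex vertices of $P$ (lines through pairs of reflex vertices, lines supporting edges of $P$, and bisector-type lines arising from visibility constraints), and that each star center, if not already a vertex of $P$, lies at an intersection of a constant number of such lines. The key geometric fact to establish is that one can ``straighten'' and ``slide'' pieces without increasing their number or destroying star-shapedness: every edge of a piece that is not forced against $\partial P$ can be pushed until it becomes collinear with a reflex vertex or another such edge, and every star center can be moved to an extreme position in its kernel. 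This is the step that produces the $O(K)$-bit bound on Steiner points, since an intersection of lines through input vertices has bit-complexity $O(K)$.

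\textbf{Step 2: Discretize the arrangement.} Using Step 1, I would build the arrangement $\mathcal{A}$ of the polynomial-size candidate line family, intersected with $P$; this has polynomially many cells, edges, and vertices, and by Step 1 some optimal partition is ``conforming'' to $\mathcal{A}$, i.e.\ each piece is a union of cells of $\mathcal{A}$ (or each piece boundary is a union of $\mathcal{A}$-edges). The candidate star centers are then among the $O(\poly(n))$ vertices of $\mathcal{A}$ together with the corners of $P$.

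\textbf{Step 3: Dynamic programming.} With a conforming optimal solution guaranteed, I would set up a dynamic program whose subproblems are sub-polygons of $P$ cut off by chords drawn along $\mathcal{A}$-edges, indexed by a bounded amount of ``interface'' information (which portions of the cutting chords are already seen by which candidate center, etc.). The recurrence guesses the next piece — specified by its star center and its boundary along $\mathcal{A}$ — checks star-shapedness of the piece with respect to that center in polynomial time, and recurses on the remaining region. The final answer is the minimum over all ways of decomposing $P$. The exponent $105$ would then fall out of bookkeeping: the number of lines, hence cells, hence the size of the interface state, hence the table size, raised to appropriate powers.

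\textbf{Main obstacle.} The crux, and the reason the naive approach fails, is Step 1 — more precisely, showing that a bounded-complexity candidate set \emph{suffices} simultaneously for star centers and for piece boundaries. Star-shapedness is a global, non-local constraint (a piece is star-shaped iff the intersection of half-planes defined by its reflex edges is nonempty and contains a point seeing the whole piece), so perturbing one piece's boundary can ruin a neighbor's kernel; the sliding/straightening argument has to be carried out carefully and perhaps in a canonical order so that it terminates and stays within the line family. A secondary difficulty is keeping the DP state polynomial: sub-polygons arising from multiple chords need not be simply described, so I expect to need a lemma that an optimal conforming partition admits a recursive ``ham-sandwich''-style decomposition by single chords, ensuring the subproblems form a polynomial-size lattice. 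I would expect Step 1 to consume the bulk of the technical work, with Steps 2–3 being comparatively routine once the structural theorem is in hand.
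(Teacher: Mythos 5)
There is a genuine gap, and it sits exactly where you locate the crux: Step 1 is false as stated. You propose that some optimal partition has its piece edges along a polynomial-size family of lines \emph{determined by the input} (lines through pairs of reflex vertices, edge-supporting lines, and similar), with each star center at an intersection of a constant number of such lines; this would place every star center in the set $V_1$ (or $V_c$ for some constant $c$) of bounded-degree points over the corners of $P$. The paper's Figure~\ref{fig:art} gives families of polygons with \emph{unique} optimal partitions in which some star centers and Steiner points have degree $\Theta(n)$: they lie in $V_i\setminus V_{i-1}$ only for $i=\Omega(n)$, because each tripod point is defined from two previously-constructed star centers (not from corners of $P$), and these dependencies chain to depth $\Theta(n)$. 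Since $|V_i|=\Theta(n^{4^i})$, no amount of careful sliding or straightening can land all star centers in a polynomial-size arrangement built once from the input; the candidate set must itself be constructed \emph{recursively}. Consequently your Step 2 (a fixed arrangement $\mathcal{A}$ to which an optimal partition conforms) cannot exist, and the $O(K)$ bit bound does not follow from ``intersection of lines through input vertices''; in the paper it follows instead from an $O(n)$-length encoding of each center along the tripod tree (\cref{cor:bit-complexity}).

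The machinery the paper uses to repair this is substantial and is not a refinement of your sliding argument but a different mechanism. Coordinate-maximum partitions show that a star center is pinned only by edges of $P$, diagonals between concave corners, or extensions of \emph{tripod legs} (\cref{lem:property-of-coordmax}); tripods are then shown to admit a consistent orientation toward a root face, so the dependencies form a rooted tree (\cref{thm:constructability}); and a \emph{greedy choice} lemma (\cref{lem:less-restrictive,lem:greedy-constructable}) shows that for each of the $O(n^3)$ candidate support triples only the least restrictive tripod arrangement need be considered, which is computed by recursively running the entire algorithm on the $O(n^2)$ subpolygons cut off by diagonals. Only then does one obtain $O(n^6)$ candidate centers, from which area-maximum partitions yield the candidate Steiner points, and the DP runs over short and long separators anchored on $\partial P$ (using that every piece touches the boundary, \cref{cor:outer-planarity}) rather than over cells of an arrangement. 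Your Step 3 is closer in spirit to the paper's second phase, but without the recursive, tripod-based construction of the candidate set the overall plan does not go through.
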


\paragraph{Related work.}
The minimum star partition problem belongs to the class of \emph{decomposition problems}, which forms an old and large sub-field in computational geometry.
In all of these problems, we want to \emph{decompose} a polygon $P$ into polygonal \emph{pieces} which are in some sense simpler than the original polygon $P$.
Here, the union of the pieces should be $P$, and we usually seek a decomposition into as few pieces as possible.
A decomposition where the pieces may overlap is called a \emph{cover}, and a decomposition where the pieces are pairwise interior-disjoint is called a \emph{partition}.
This leads to a wealth of interesting problems,
depending on the assumptions about the input polygon $P$ and the requirements on the pieces.
There is a vast literature about such decomposition problems, as documented in several highly-cited books and survey papers that give an overview of the state-of-the-art at the time of publication~\cite{keil1985minimum,chazelle1985approximation,o1987art,DBLP:journals/pieee/Shermer92,chazelle1994decomposition,keil1999polygon,o2004polygons}.
Some of the most common variations are
\begin{itemize}[noitemsep,topsep=0pt,parsep=0pt,partopsep=0pt]
\item whether the input polygon $P$ is simple or may have holes, 
\item whether we seek a cover or a partition,
\item whether we allow Steiner points${}^\text{\ref{footnote:steiner-points}}$ or not,
\item what shape of pieces we allow; let us mention that for \emph{partitioning} a simple polygon, variants have been studied with polygonal pieces that are convex~\cite{DBLP:journals/tc/FengP75,DBLP:journals/tc/Schachter78,chazelle1985optimal,DBLP:journals/siamcomp/Keil85,DBLP:conf/focs/Chazelle82,DBLP:journals/iandc/HertelM85,DBLP:journals/ijcga/KeilS02}, star-shaped~\cite{DBLP:journals/tc/FengP75,DBLP:journals/siamcomp/Keil85,DBLP:journals/algorithmica/LiuN91}, monotone~\cite{DBLP:journals/ipl/GareyJPT78,DBLP:journals/ipl/LiuN88}, spiral-shaped~\cite{DBLP:journals/siamcomp/Keil85}, ``fat''~\cite{DBLP:journals/comgeo/Kreveld98,DBLP:journals/comgeo/Damian04,DBLP:conf/cccg/BuchinS21}, ``small''~\cite{DBLP:conf/esa/ArkinD0GMPT20,damian2004computing}, ``circular''~\cite{DBLP:conf/cccg/Damian-IordacheO03}, triangles~\cite{DBLP:journals/jal/AsanoAP86,DBLP:journals/dcg/Chazelle91}, quadrilaterals~\cite{DBLP:conf/compgeom/Lubiw85,DBLP:conf/compgeom/Muller-HannemannW97} and trapezoids~\cite{DBLP:journals/jacm/AsanoAI86}.
\end{itemize}

Closely related to our problem is that of \emph{covering} a polygon with a minimum number of star-shaped pieces.
This is usually known as the \emph{Art Gallery Problem} and described equivalently as the task of placing guards (star centers) so that each point in the polygon can be seen by at least one guard.
Interestingly, the Art Gallery Problem has been shown to be $\exists \mathbb R$-complete~\cite{DBLP:journals/jacm/AbrahamsenAM22} and it is thus not likely to be in \textsf{NP}. This is in stark contrast to our main result, which shows that the corresponding \emph{partitioning} problem is in \textsf{P}.
Covering a polygon with a minimum number of convex pieces is likewise $\exists \mathbb R$-complete~\cite{DBLP:conf/focs/Abrahamsen21}.

If the polygon $P$ can have holes, the minimum star partition problem is known to be \textsf{NP}-hard, whether or not Steiner points are allowed~\cite{o1987art}; again in contrast to our result.

Keil~\cite{DBLP:journals/siamcomp/Keil85} gave polynomial-time algorithms for partitioning simple polygons into various types of pieces where Steiner points are not allowed.
Among these algorithms is an $O(n^7\log n)$ time algorithm for finding a minimum star partition of a simple polygon without Steiner points, but the unrestricted version of the problem (with Steiner points allowed) remained open.
Let us mention that there are polygons where $\Theta(n)$ pieces are needed when Steiner points are not allowed, whereas $2$ pieces are sufficient when they are allowed; see \Cref{fig:badwithoutSteiner} (left).
Therefore, our algorithm in general constructs partitions that are significantly smaller.
This highlights an interesting difference between minimum star partitions and convex partitions:
A minimum convex partition without Steiner points has at most $4$ times as many pieces as when Steiner points are allowed~\cite{DBLP:journals/iandc/HertelM85}.
Another difference is that an arbitrarily small perturbation of a single corner can change the size of the minimum star partition between $1$ and $\Theta(n)$, whereas the change in size of the minimum convex partition is at most $1$; see \Cref{fig:badwithoutSteiner} (right).
In that sense, minimum star partitions are much more sensitive to the input.

\begin{figure}
\centering
\includegraphics[page=21]{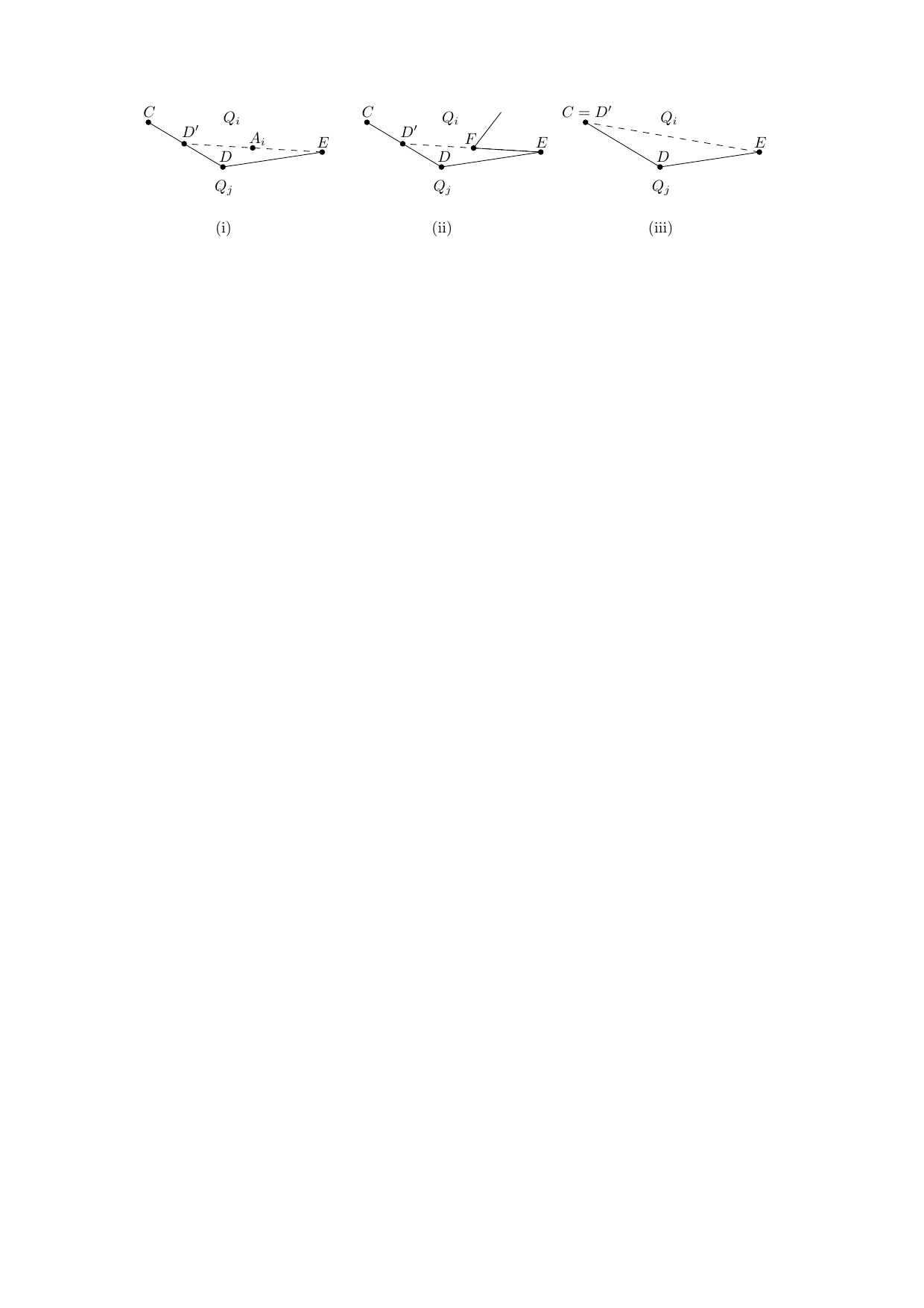}
\caption{
\emph{Left:}
A polygon that is partitioned into two star-shaped pieces using the Steiner point $S$.
A star center that can see the two middle groups of spikes must be placed at or close to $A_1$, while a star center that sees the outer groups must be at or close to $A_2$.
Without Steiner points we need $\Theta(n)$ pieces to partition at least one of the groups of spikes.
\emph{Right:}
Moving the bottom corner a bit up changes the size of a minimum star partition from $1$ to $\Theta(n)$.
}
\label{fig:badwithoutSteiner}
\end{figure}

Unrestricted partitioning problems (that is, allowing Steiner points), are seemingly much more challenging to design algorithms for.
Chazelle and Dobkin~\cite{chazelle1980computational,chazelle1985optimal} proved already in 1979 that a simple polygon can be partitioned into a minimum number of \emph{convex} pieces in $O(n^3)$ time, by designing a rather complicated dynamic program.
Asano, Asano and Imai~\cite{DBLP:journals/jacm/AsanoAI86} gave an $O(n^2)$-time algorithm for partitioning a simple polygon into a minimum number of trapezoids, each with a pair of horizontal edges.
However, the minimum partitioning problem has remained open for most other shapes of pieces (e.g.\ triangles, spiral-shaped, and---until now---star-shaped).

Liu and Ntafos~\cite{DBLP:journals/algorithmica/LiuN91} also studied the minimum star partition problem, but with restrictions on the input polygon. They describe an algorithm for partitioning simple monotone and rectilinear\footnote{A polygon $P$ is \emph{monotone} if there is a line $\ell$ such that the intersection of $P$ with any line orthogonal to $\ell$ is connected, and $P$ is \emph{rectilinear} if all sides are either vertical or horizontal.} polygons into a minimum number of star-shaped polygons in $O(n)$ time, and a $6$-approximation algorithm for simple rectilinear polygons that are not necessarily monotone.

\paragraph{Challenges.}
As argued above, star partitions are very sensitive to the input polygon, and allowing Steiner points is in general necessary to obtain a partition with few pieces (\Cref{fig:badwithoutSteiner}). In order to demonstrate the complicated nature of optimal star partitions, let us also consider \Cref{fig:art}, which shows (representatives of) two families of polygons with arbitrarily many corners and unique optimal star partitions.
In both examples, some star centers and Steiner points depend on as many as $\Theta(n)$ corners of $P$.
The example to the right shows that star centers and Steiner points of \emph{degree} $\Theta(n)$ are also needed, where points $V_i$ of degree $i$ are defined as follows.
The points $V_0$ are the corners of $P$; and $V_{i+1}$ are the intersection points between two non-parallel lines, each through a pair of points in $V_i$.
The size of $V_i$ grows as $\Theta(n^{4^i})$, so we cannot iterate through the possible star centers and Steiner points.
This is in contrast to the problem without Steiner points studied by Keil~\cite{DBLP:journals/siamcomp/Keil85}.
Here, by definition, the corners of the pieces are in $V_0$ and it is not hard to see that the star centers can be chosen from $V_1$, of which there are ``only'' $O(n^4)$.

Since we cannot iterate through all possible star centers and Steiner points, we devise a two-phase algorithm, as follows.
In the first phase, we find polynomially many relevant points, so that we are sure that an optimal solution can be constructed using a subset of those points as star centers and Steiner points.
In the second phase, we use dynamic programming to find optimal solutions to larger and larger subpolygons, using only the constructed points from the first phase.
We note however that the phases are intertwined as the algorithm for the first phase calls the complete partitioning algorithm recursively on subpolygons.
The argument that the set of points constructed in the first phase is sufficient relies on several structural results about optimal star partitions which we believe are interesting in their own right.

\subsection{Practical Motivation}\label{sec:practical}

Besides being interesting from a theoretical angle, star partitions are useful in various practical domains; below we mention a few examples.
Many of the papers mentioned below describe algorithms for computing star partitions with no guarantee of finding an optimal one.

\paragraph{CNC pocket milling.}
Our first motivation comes from the generation of toolpaths for milling machines.
CNC milling is the computer-aided process of cutting some specified shape into a piece of material---such as steel, wood, ceramics, or plastic---using a milling machine.
When milling a pocket, spirals are a popular choice of toolpath, since the entire pocket can be machined without retracting the tool and sharp corners on the path can be largely avoided.
Some of the proposed methods to generate spirals require the shape of the pocket to be star-shaped, for instance because they rely on radial interpolation between curves that morph a single point (a star center) to the boundary of the pocket~\cite{bieterman2003curvilinear,patel2019effect,banerjee2012process}.
When milling a non-star-shaped pocket, we therefore seek to first partition the pocket into star-shaped regions, each of which can then be milled by their own spiral.
We want a star partition rather than a star cover, since it is a waste of time to cover the same area more than once.
In order to minimize the number of retractions (lifting and moving the tool from one spiral to the next), we want a partition into a minimum number of star-shaped regions.

\paragraph{Motion planning.}
Star partitions are also useful in the domain of motion planning.
Varadhan and Manocha~\cite{varadhan2005star} describe such an approach. They first partition the free space into star-shaped regions to subsequently construct a route for an agent from one point to another in the free space using the stars.
In each star, we route from the point of entrance to the star center and from there to a common boundary point with the next star.
Similar applications of star partitions are described in~\cite{dobson2014sparse, SPBG:SPBG07:073-080, lien2009planning, zhang2006fast}.

\paragraph{Capturing the shape of a polygon.}
Star partitions can be used to blend/morph one polygon into another~\cite{shapira1995shape,etzion1997compatible}, for shape matching and retrieval~\cite{yu2011computing}, and they are also used in shape parameterization~\cite{xia2010parameterization}.

\subsection{Technical Overview}
\label{sec:technical-overview}

To enable our algorithm, we had to identify a multitude of interesting structural properties of optimal star partitions, which are interesting in their own right.
In this section, we outline the most important of these properties and explain informally how they are used to derive a polynomial-time algorithm.
Naturally, we sometimes stay vague or glance over complicated details in order to hide complexity to make the technical overview easily accessible.

\begin{figure}
\centering
\includegraphics[page=22]{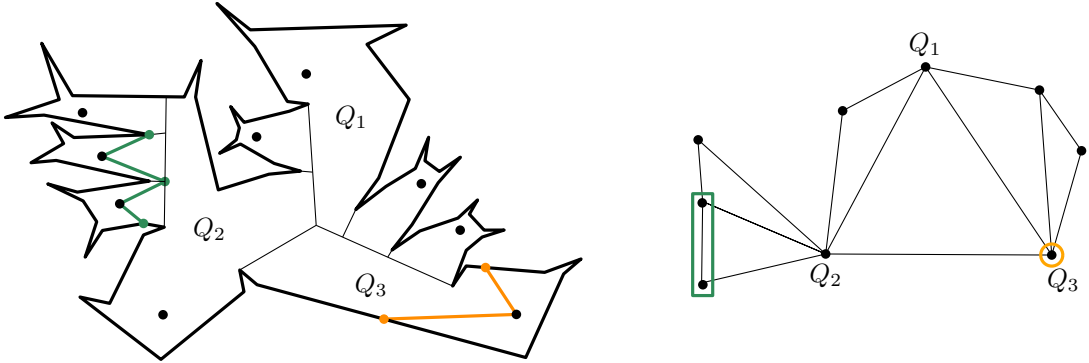}
\caption{
\emph{Left:}
A polygon with a star partition and an example of a short (orange) and a long (green) separator.
\emph{Right:}
The dual graph of the partition.
The short and long separators of the partition correspond to vertices, respectively edges, in the graph.
}
\label{fig:partitionDual}
\end{figure}

\paragraph{Separators.}
Similar to the algorithms for related partitioning problems~\cite{chazelle1985optimal,DBLP:journals/siamcomp/Keil85}, we use dynamic programming:
We compute optimal star partitions of larger and larger subpolygons $P'$ contained in the input polygon $P$.
For dynamic programming to work, we need an appropriate type of separator which separates the subpolygon $P'$ from the rest of $P$.
To this end, a useful (and non-trivial) property is that there exists an optimal partition in which each piece shares boundary with~$P$; as we will see in \Cref{sec:structure} (\Cref{cor:outer-planarity}).
This suggests that we use separators consisting of two or four segments of the following forms:

\begin{itemize}
\item \emph{Short separator:} $B_1$-$A_1$-$B_2$.
A piece with star center $A_1$ that shares boundary points $B_1$ and $B_2$ with~$P$.

\item \emph{Long separator:} $B_1$-$A_1$-$S$-$A_2$-$B_2$.
Each $A_i$ is the star center of a piece that shares the boundary point $B_i$ with $P$ and the point $S$ is a common point of the boundaries of the two pieces.
\end{itemize}

A state of our dynamic program consists of a separator and is used to calculate how many pieces we need to partition the associated subpolygon, which is the part of $P$ on one side of the separator.
We start with trivial short separators of two types: (i) degenerate ones of the form $B$-$A$-$B$ for a star center $A$ that can see a boundary point $B$, and (ii) $B_1$-$A$-$B_2$ where $B_1$ and $B_2$ are points on the same edge of $P$ so that the separator encloses a triangle.
We describe a few elementary operations to create partitions of larger subpolygons from smaller ones by merging two compatible separators into one that covers the union of the two subpolygons.

The main difficulty lies in choosing polynomially many candidates for the star centers $A_i$, the boundary points $B_i$ and the common points $S$, so that we can be sure that our algorithm eventually constructs an optimal partition.
As already mentioned, our algorithm has two phases, and in the first phase we compute a set of $O(n^6)$ points that are guaranteed to contain the star centers of an optimal partition. 
In Section~\ref{sec:area-max}, we show that we can use these potential star centers to also specify polynomially many candidates for the points $B_i$ and $S$.
In a second phase, the algorithm uses the constructed points to iterate through all relevant separators.

\begin{figure}
\centering
\includegraphics[page=25]{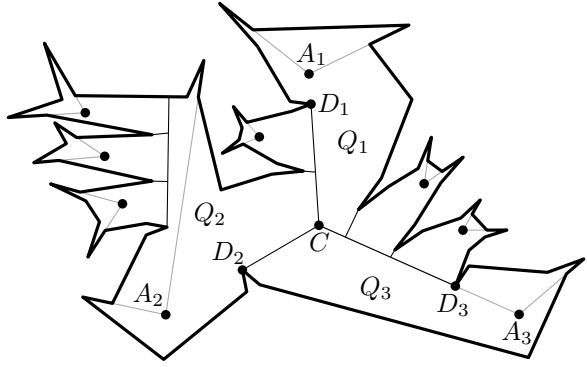}
\caption{
The same polygon and partition as in \Cref{fig:partitionDual}, where the pieces $Q_1,Q_2,Q_3$ form a tripod with supports $D_1,D_2,D_3$ and tripod point $C$.
The star centers are coordinate maximum and the gray segments show how they are constructed.
The tripod is used to construct $A_3$.
}
\label{fig:extripod}
\end{figure}

\paragraph{Tripods.}
A structure that plays a crucial role in our characterization of the star centers is that of a \emph{tripod}; see \Cref{fig:extripod} for an example of a partition with one tripod.
Three pieces $Q_1,Q_2,Q_3$ with star centers $A_1,A_2,A_3$ form a \emph{tripod} with \emph{tripod point} $C$ if the following two properties hold.
\begin{itemize}
\item
There are concave corners $D_1,D_2,D_3$ of $P$ such that $D_i\in A_iC$ for each $i\in\{1,2,3\}$.
These corners are called the \emph{supports} of the tripod.

\item
The union $Q_1\cup Q_2\cup Q_3$ contains a (sufficiently small) disk centered at $C$.
\end{itemize}
Note that it follows that the segment $D_iC$ is on the boundary of the piece $Q_i$.
Such a segment $D_iC$ is called a \emph{leg} of the tripod.
Furthermore, the edges of $P$ incident to the supports $D_i$ are either all to the left or all to the right of the legs (when each leg $D_iC$ is oriented from $D_i$ to $C$); otherwise a disk around $C$ could not be seen by the star centers $A_i$.

\paragraph{Constructing star centers.}
We can define a set of points containing the star centers as follows.
Let $V_0$ be the corners of $P$ and define recursively $V_{i+1}$ as the intersection points between any two non-parallel lines each containing two points from $V_i$.
It follows that $V_i\subset V_{i+1}$.
Tripods cause star centers to depend on each other in complex ways:
If two of the participating star centers $A_1$ and $A_2$ are in $V_i\setminus V_{i-1}$, then the tripod point $C$ is in general in $V_{i+1}\setminus V_i$ and the third star center $A_3$ will be in $V_{i+2}$.
See for instance \Cref{fig:art} for two examples; both with unique optimal star partitions.
Here, all neighbouring pieces form tripods, and in the right figure only $V_i$ with $i=\Omega(n)$ contains all the star centers of the optimal partition.

We obtain powerful insights about the solution structure by considering a so-called \emph{coordinate maximum} optimal partition.
We can write the star centers $A_1,\ldots,A_k$ of an optimal partition in increasing lexicographic order (that is, sorted with respect to $x$-coordinates and using the $y$-coordinates to break ties).
We can then consider the vector of star centers $\langle A_1,\ldots,A_k\rangle$ which is maximum in lexicographic order among all sets of star centers of optimal partitions.
We show that there exists a partition realizing the maximum, which is our \emph{coordinate maximum} partition (\Cref{lem:coordmax}).
The star centers of the partition in \Cref{fig:extripod} have been maximized in this sense.
By analyzing a coordinate maximum partition, we conclude in \Cref{sec:structure} (\Cref{lem:property-of-coordmax}) that there are essentially only two ways in which a star center $A$ can be restricted.
In both cases, $A$ is forced to be contained in a specific half-plane $H$ bounded by a line $\ell$, and $\ell$ is of one of the following types:
(i) $\ell$ contains two corners of $P$, (ii) $\ell$ contains a tripod point $C$ and one of the associated supporting concave corners $D_i$.
The star center $A$ can then be chosen as the intersection point between two lines, each of type (i) or (ii).
Note that in each tripod, the legs $D_1C\cup D_2C\cup D_3C$ partition $P$ into three parts; since $P$ is a simple polygon, it is thus impossible that the star centers depend on each other in a cyclic way.
It follows that the star centers can be chosen from $V_i$ for a sufficiently high value of $i$.

\begin{figure}
\centering
\includegraphics[page=23]{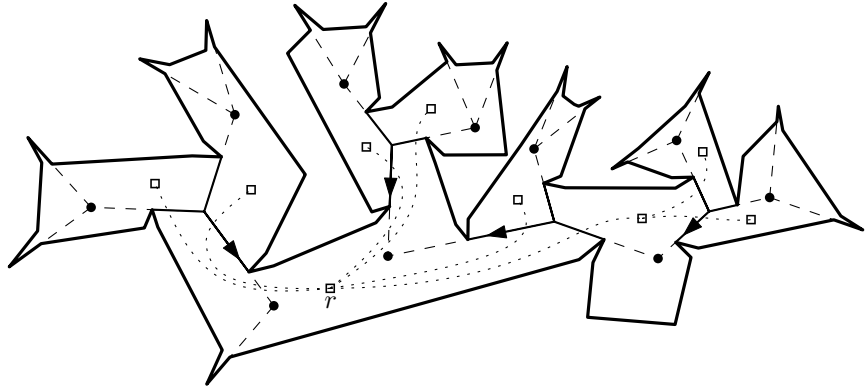}
\caption{A polygon $P$ with a star partition using ten pieces and four tripods.
The legs of the tripods partition $P$ into nine faces $\mathcal F$.
The disks are star centers and the squares denote the vertices of the dual graph $G$ of the faces $\mathcal F$.
The dashed segments indicate how the star centers are defined or used to define other centers by tripods.
The tripods have consistent orientation towards the root $r$ and the edges of the tree $\mathcal T$ are shown as dotted curves.
}
\label{fig:tripodconsistent}
\end{figure}

\paragraph{Orientation of tripods.}
Each tripod is defined from two of the participating star centers, say $A_1$ and $A_2$, and takes part in defining the third star center, $A_3$.
Hence, we can consider the tripod to have an orientation: it is directed from $(A_1, A_2)$ towards $A_3$.
The legs of all tripods partition $P$ into a set of faces $\mathcal F$; see \Cref{fig:tripodconsistent}.
One face can contain several pieces, since the tripod legs are in general only a subset of the piece boundaries.
We will denote one of the faces as the root $r$.
The faces $\mathcal F$ induce a dual graph $G$, in which each tripod corresponds to a triangle in $G$.
Traversing $\mathcal F$ in breadth-first search order from the root $r$ defines a rooted tree $T$, which is a subgraph of $G$.
Each node $u$ in $T$ has an even number of children---two for each tripod for which $u$ is the face closest to $r$ among the three faces containing the pieces of the tripod.
In order to successfully apply dynamic programming, we need the tripods to have a \emph{consistent} orientation in the following sense:
If the face $u$ is a parent of $v$, then the corresponding tripod should be directed towards the star center in~$u$.
As we will see in \Cref{sec:structure} (part of \Cref{thm:constructability}), there exists an optimal partition where the tripods have a consistent orientation.
This requires a modification to the coordinate maximum partition: Whenever a tripod violates the desired orientation, we choose a subset of the star centers and move them in a specific direction as much as possible to eliminate the illegal tripod.
We describe such a process that must terminate, and then we are left with tripods of consistent orientation.

With consistent orientation, the star centers in the leaves of $T$ belong to the set $V_1$ (which is constructed from lines through the corners of $P$), and in general, the star centers in a face $u$ can be constructed by tripods involving centers in the children of $u$ as well as lines through the corners of $P$ bounding the face $u$.
The star centers in the root face $r$ are constructed at last, potentially depending on all previously constructed star centers.

\begin{figure}
\centering
\includegraphics[page=24]{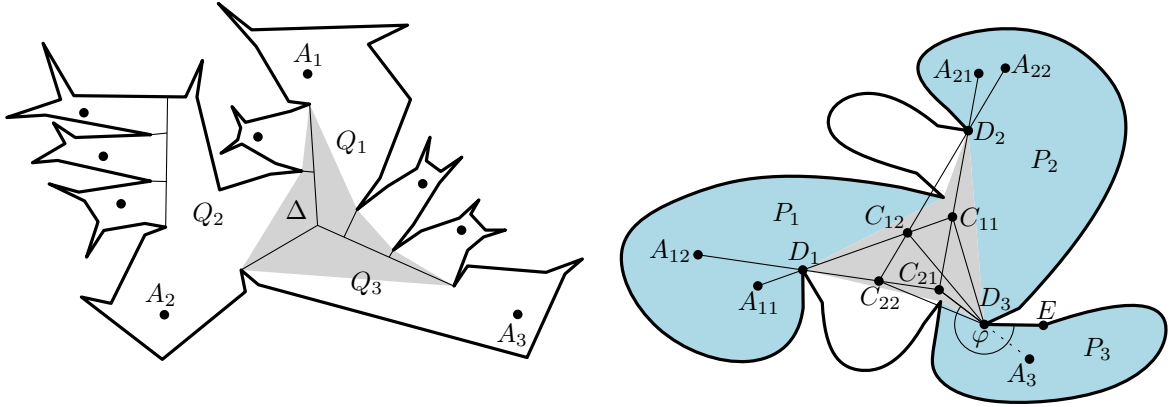}
\caption{\emph{Left:} A partition with a tripod and the pseudo-triangle shown in gray.
\emph{Right:}
For a tripod with supports $D_1,D_2,D_3$ directed towards $P_3$, we find the optimal partitions in the subpolygons $P_1$ and $P_2$.
There are two choices for the star centers that see $D_1$ and $D_2$, leading to four possible sets of legs of the tripod.
We want to choose the combination that minimizes the angle $\varphi$ from the edge $D_3E$ clockwise to the leg to $D_3$.
The angle is minimized when choosing $A_{12}$ and $A_{22}$, but this choice is invalid since the legs $D_2C_{22}$ and $D_3C_{22}$ would intersect the boundary of $P$.
We will therefore choose $A_{12}$ and $A_{21}$, which give the second-best option with the tripod point $C_{21}$.
Curved parts indicate that the details have not been shown.
}
\label{fig:greedy}
\end{figure}

\paragraph{Greedy choice.}
Consider three concave corners $D_1,D_2,D_3$ of $P$ which are the supports of a tripod in an optimal partition $\mathcal Q$.
Let the associated star centers be $A_1,A_2,A_3$ and suppose that the tripod is directed towards $A_3$.
The three shortest paths in $P$ between these supports enclose a region which we call the \emph{pseudo-triangle}~$\Delta$ of the tripod; see \Cref{fig:greedy} (left).
As we will see in \Cref{sec:structure} (part of \Cref{lem:property-of-coordmax}), there is an optimal star partition where no star centers are in the pseudo-triangle of any tripod, and this is a property we maintain throughout our modifications. 
Consider a connected component $P'$ of $P^- \coloneqq P\setminus \Delta$.
Note that $P'$ is separated from the rest of $P$ by a single diagonal of $P$ which is part of the boundary of $\Delta$.
Since no star centers are in $\Delta$, the restriction of $\mathcal Q$ to $P'$ is a star partition of $P'$.
Furthermore, in the optimal star partition we are working with, we can assume that this restriction is a \emph{minimum} partition of $P'$, since otherwise we could replace the partition in $P'$ by one with less pieces and use an extra piece to cover $\Delta$, thereby obtaining an equally good partition of $P$ without this tripod. 

Let $P_i$ be the connected component of $P^{-}$ containing $A_i$, so that the tripod is used to define the star center $A_3$ in $P_3$ using star centers $A_1$ and $A_2$ in $P_1$ and $P_2$, respectively; see \Cref{fig:greedy} (right).
In all connected components except $P_1,P_2,P_3$, we can choose an arbitrary optimal partition.
There may be several optimal partitions of $P_1$ and $P_2$, and any combination of two partitions may lead to different legs of the tripod (since $A_1$ and $A_2$ may be placed differently) and thus to different restrictions on the center $A_3$ in $P_3$.
In fact, there can be an exponential number of possible restrictions on $A_3$.
However, as shown in \Cref{sec:greedy-choice}, we can apply a \emph{greedy choice}:
We can use the combination of partitions of $P_1$ and $P_2$ that leads to the mildest restriction on $A_3$, in the sense that we want to minimize the angle $\varphi$ inside $P$ between the leg $D_3C$ and the edge of $P$ incident to $D_3$ which is also an edge of $P_3$.
Hence, we can use the greedy choice to restrict our attention to a single pair of optimal partitions of the subpolygons $P_1$ and $P_2$.

\paragraph{Bounding star centers and Steiner points.}
There are $O(n^3)$ possible triples of supports of tripods and using the greedy choice, we can restrict our attention to a specific pair of star centers that define the third star center for each tripod.
Since a star center may be defined from two tripods, we get a bound of $O(n^6)$ on the number of star centers that we need to consider.

We also need polynomial bounds on the other points defining the separators, namely the boundary points $B_i$ that the pieces share with $P$ and the points $S$ that neighbouring pieces share with each other.
Some of these points may be corners of $P$, but the rest will be Steiner points, i.e., not corners of $P$.
Suppose that we know the star centers $\mathcal A=\{A_1,\ldots,A_k\}$ of the pieces in an optimal partition.
We can then consider a partition $\mathcal Q = \{Q_1,\ldots,Q_k\}$ where $A_i$ is a star center of $Q_i$ and we have maximized the vector of areas $\langle a(Q_1),\ldots,a(Q_k)\rangle$ in lexicographic order.
As we will see (\Cref{lem:areamax}), such a partition $\mathcal Q$ exists (for any fixed set of star centers $\mathcal A$), and in \Cref{sec:area-max} we show that $\mathcal Q$ has the property that each Steiner point in the interior of $P$ is defined by at most five star centers and two corners of $P$.
Hence, there are at most $O(n^{6\cdot 5}\cdot n^2)=O(n^{32})$ relevant Steiner points to try out.
A Steiner point on the boundary $\partial P$ will be defined by an edge of $P$ and, in the worst case, a line through two star centers, which gives $O(n^{13})$ possibilities.
Hence, we can bound the number of possible long separators by $O(n^{13}\cdot n^6\cdot n^{32}\cdot n^6\cdot n^{13})=O(n^{70})$.

In \Cref{app:struct}, we give an elementary proof of a structural result that reduces the number of Steiner points needed on the boundary of $P$ to only $O(n)$.
It might be possible to use this result to design a faster algorithm than the one presented here, but the proof relies on many modifications to the partition, so it is not clear if our algorithm can be modified to find the resulting partition.

\paragraph{Algorithm.}
Our algorithm now works as follows.
In the first phase, we consider each diagonal of $P$, and we recursively find all relevant optimal partitions of the subpolygon on one side of the diagonal.
Once this has been done for all diagonals, we consider each possible triple of concave corners of $P$ supporting a tripod, and we use the greedy choice to select the pair of star centers that can be used to define the third star center of the tripod.
We then construct $O(n^6)$ possible star centers by considering all pairs of (i) tripods, (ii) lines through two corners of $P$, and (iii) one tripod and one line through two corners of $P$.
The set of potential star centers leads to polynomially many Steiner points and separators as described above.
In the second phase, we use dynamic programming to find out how many pieces we need in the subpolygon defined by each separator. The total running time turns out to be $O(n^{107})$ or within $O(n^{105})$ arithmetic operations.

\subsection{Open Problems \& Discussion}

Although polynomial, our algorithm is too slow to be of much practical use.
Our main result is showing that the problem is polynomial-time solvable, so
in order to facilitate understanding and verification of our work, we decided to give a description of the algorithm that is as simple as possible, and consequently we did not further optimize the running time.
Although we believe that it is possible to optimize the algorithm significantly (for instance using structural insights from \Cref{app:struct}), it seems that our approach will remain impractical.
Hence, it is interesting whether a practical constant-factor approximation algorithm exists.
For the minimum convex partition problem, the following wonderfully simple algorithm produces a partition with at most twice as many pieces as the minimum~\cite{chazelle1985approximation}:
For each concave corner $C$ of the input polygon $P$, cut $P$ along an extension of an edge incident to $C$ until we reach the boundary of $P$ or a previously constructed cut.
It would be valuable to find a practical and simple algorithm for star partitions with similar approximation guarantees.

Higher-dimensional versions of the minimum star partition problem are also of great interest and we are not aware of any work on such problems from a theoretical point of view.
The high-dimensional problems are similarly well-motivated from a practical angle, since in motion-planning the configuration space is in general high-dimensional and a star partition of the free space can then be used to find a path from one configuration to another, as described in \Cref{sec:practical} (in fact, all the cited papers related to motion planning~\cite{varadhan2005star,dobson2014sparse, SPBG:SPBG07:073-080, lien2009planning, zhang2006fast} also describe a high-dimensional setting).
We note that the three-dimensional version of the minimum convex partition problem already received some attention, e.g.~\cite{DBLP:journals/algorithmica/ChazelleP97,DBLP:journals/siamcomp/BajajD92,DBLP:journals/siamcomp/Chazelle84}.

Many interesting partitioning problems of simple polygons with Steiner points are still open.
Surprisingly, one problem that remains open is arguably the most basic of all problems of this type, namely, that of partitioning a simple polygon $P$ into a minimum number of \emph{triangles}.
If $P$ has $n$ corners, a maximal set of pairwise interior-disjoint diagonals always partitions $P$ into $n-2$ triangles and finding such a triangulation is a well-understood problem with a long history, culminating in Chazelle's famous linear-time algorithm~\cite{DBLP:journals/dcg/Chazelle91}.
In general, however, there exist partitions into fewer than $n-2$ triangles and it is an open problem whether an optimal partition can be found in polynomial time.
Asano, Asano, and Pinter~\cite{DBLP:journals/jal/AsanoAP86} showed that a minimum triangulation without Steiner points can be found in polynomial time.
When Steiner points are allowed, they gave examples of polygons in which points from the set $V_1$ are needed, and they conjecture that there are instances in which points from the set $V_i$ for arbitrarily large values of $i$ are needed (i.e.\ points which have arbitrarily large degrees).
Another classical open problem is to partition a simple polygon into a minimum number of spirals with Steiner points allowed.
A \emph{spiral} is a polygon where all concave corners appear in succession.
The problem of partitioning into spirals was originally motivated by feature generation for syntactic pattern recognition~\cite{DBLP:journals/tc/FengP75} and a polynomial-time algorithm finding the optimal solution to the problem without Steiner points is known~\cite{DBLP:journals/siamcomp/Keil85}. However, no algorithm is known for the unrestricted problem.

We hope that our techniques may be useful when designing algorithms to solve the above-mentioned problems. In particular, considering extreme partitions can lead to natural piece boundaries which in turn can be exploited using a dynamic programming approach. Computing such partitions in two phases, first computing potential locations of Steiner points that are subsequently used in guessing separators of pieces in an optimal solution, presents itself as a general paradigm to attack problems of this type.

\subsection{Organization}

The remainder of this work is organized as follows.
In~\Cref{sec:prelim}, we define various types of polygons, partitions, and other central concepts.
We also give lemmas ensuring the existence of partitions that are extreme in terms of the coordinates of the star centers or the areas of the pieces.
In \Cref{sec:structure}, we study coordinate maximum partitions and the structures arising from tripods. These structural properties help us find a set of polynomially many potential points to use as star centers.
In \Cref{sec:area-max}, we study area maximum partitions. The insights gained on their structure help us characterize all other Steiner points to use as corners in our partition, given a set of potential star centers (coming from the previous section).
Finally in \Cref{sec:algorithm}, we show how to use our structural results to design our two-phase dynamic programming algorithm.

\section{Preliminaries}\label{sec:prelim}

In this section we first cover some basic definitions to then turn towards partitions that are maximum with respect to either the area of the pieces or the coordinates of the star centers.

\subsection{Definitions}
We say that a pair of segments \emph{cross} if their interiors intersect.

\paragraph{Polygons.}
A \emph{simple polygon} is a compact region in the plane whose boundary is a simple, closed curve consisting of finitely many line segments.
For technical reasons, we allow the pieces of a partition to be weakly simple polygons. 
A \emph{weakly simple polygon} $Q$ is a simply-connected and compact region in the plane whose boundary is a union of finitely many line segments.
In particular, a simple polygon is also a weakly simple polygon, but the opposite is not true in general.
For instance, a weakly simple polygon $Q$ may have a disconnected or even empty interior.
However, just as for a simple polygon, a weakly simple polygon $Q$ can be defined by its edges in counterclockwise order around the boundary.
These edges form a closed boundary curve $\gamma$ of $Q$.
Since $Q$ is weakly simple, some corners may coincide, and edges may overlap.
A perturbation of $\gamma$ that is arbitrarily small with respect to the Fréchet distance can turn $Q$ into a simple polygon~\cite{DBLP:journals/dcg/AkitayaAET17,DBLP:conf/soda/ChangEX15}.
This perturbation may involve the introduction of more corners.
For instance, if $Q$ is just a line segment, then $Q$ has only two corners, and one more is needed to obtain a simple polygon.
We denote the boundary of a (weakly) simple polygon $Q$ as $\partial Q$.

We sometimes consider points that lie on so-called extensions.
Given a polygon $P$ and a segment $CD\subset P$, the \emph{extension} of $CD$ is the maximal segment $C'D'$ such that $CD\subset C'D'\subset P$.

\paragraph{Star-Shaped Polygons.}
A (weakly) simple polygon $Q$ is called \emph{star-shaped} if there is a point $A$ in $Q$ such that for all points $B$ in $Q$, the line segment $AB$ is contained in $Q$.
Such a point $A$ is called a \emph{star center} of $Q$.
We denote by $\ker(Q)$ the set of all star centers of $Q$, and it is well-known that $\ker(Q)$ is a convex polygonal region in $Q$.
Throughout the paper, we use the symbol ``$Q$'' to denote a star-shaped polygon and ``$A$'' to denote a fixed star center of such a polygon.
When proving our structural results, we repeatedly use the following lemma to trim some of the pieces of a star partition.

\begin{lemma}\label{lemma:cut}
Let $Q$ be a star-shaped polygon with star center $A$, and let $H$ be an open half-plane bounded by a line $h$.
Let $C$ be a connected component of the intersection $Q\cap H$ and suppose that $A\notin C$.
Then $A\notin H$ and $Q' \coloneqq Q\setminus C$ is also a star-shaped polygon with star center $A$.
\end{lemma}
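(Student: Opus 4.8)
The plan is to prove the two claims in order: first that $A \notin H$, and then that $Q' = Q \setminus C$ is star-shaped with star center $A$. For the first claim, I would argue by contradiction: suppose $A \in H$. Since $A \in Q$, the point $A$ lies in $Q \cap H$, hence in some connected component of $Q \cap H$. The hypothesis says $A \notin C$, so $A$ lies in a component $C' \neq C$. The idea is then to derive a contradiction from the star-shapedness of $Q$. Take any point $B \in C$ (nonempty since $C$ is a component). Because $A$ is a star center of $Q$, the segment $AB \subseteq Q$. Both endpoints $A$ and $B$ lie in the \emph{open} half-plane $H$ (here using that $C \subseteq H$ and $A \in H$ by assumption), and $H$ is convex, so the whole segment $AB$ lies in $H$; combined with $AB \subseteq Q$ we get $AB \subseteq Q \cap H$. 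But a segment is connected, so $AB$ lies in a single connected component of $Q \cap H$; since it contains $A \in C'$ and $B \in C$, we conclude $C = C'$, contradicting $A \notin C$. Hence $A \notin H$.

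For the second claim, first observe that $A \notin H$ means $A$ lies in the closed complementary half-plane $\overline{H^c}$ bounded by $h$. I would then verify that $Q'$ is a weakly simple polygon and that $A$ sees every point of $Q'$. For the visibility part: let $B \in Q' = Q \setminus C$. Since $A$ is a star center of the original polygon $Q$, the segment $AB \subseteq Q$. It suffices to show $AB \cap C = \emptyset$, for then $AB \subseteq Q \setminus C = Q'$. Suppose instead that $AB$ meets $C$ at some point $X \in C \subseteq H$. The segment $AB$ is connected and contains $A \in \overline{H^c}$ and $X \in H$; consider the sub-segment $AX$. Since $A \notin H$ and $X \in H$, the point $X$ can be joined to $A$ within $AB \subseteq Q$, and the portion of $AX$ lying in the open half-plane $H$ is a connected subset of $Q \cap H$ containing $X$, hence contained in the component $C$. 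Thus the entire intersection $AB \cap H$ that is "reachable from $X$ along $AB$ without leaving $H$" lies in $C$; in particular $B$, if it were in $H$, would force $B \in C$. The cleanest way to phrase this: the connected set $AB \cap H$ is contained in a single component of $Q \cap H$ (again by connectedness, since $AB \cap H$ is a — possibly degenerate — sub-segment of $AB$ contained in $Q \cap H$), and it contains $X \in C$, so $AB \cap H \subseteq C$. But $B \notin C$, so $B \notin H$, i.e. $B \in \overline{H^c}$. Now both $A, B \in \overline{H^c}$, and the sub-segment of $AB$ in $H$ is bounded by points of $h$ on the line $h$; yet $AB$ can only cross the line $h$ in a connected set (a point or a sub-segment), and with both endpoints on the same closed side it follows that $AB \cap H = \emptyset$, contradicting $X \in AB \cap H$. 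Hence $AB \cap C = \emptyset$ and $AB \subseteq Q'$.

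It remains to check that $Q'$ is a weakly simple polygon, i.e. simply connected and compact with boundary a finite union of segments. Compactness: $Q$ is compact and we are removing the relatively open set $C \cap H^{\circ}$... more carefully, $Q' = Q \setminus C$ where $C$ is a component of $Q \cap H$ with $H$ open, so $C$ is relatively open in $Q$ and $Q'$ is relatively closed in $Q$, hence compact. Its boundary consists of parts of $\partial Q$ together with part of the line $h$, so it is a finite union of line segments. Simple connectivity of $Q'$ follows because any point removed lies in $H$ on one "side", and the star-visibility from $A$ we just established actually shows $Q'$ is star-shaped from $A$, and any star-shaped region is simply connected. \textbf{The main obstacle} I anticipate is the topological bookkeeping with connected components of $Q \cap H$: one must be careful that "the part of a segment lying in an open half-plane is connected" and that such a connected piece, being contained in $Q \cap H$, lies in a single component — the degenerate cases (segment tangent to $h$, segment lying in $h$, $C$ touching $h$) need a clean uniform treatment rather than case analysis. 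Phrasing everything in terms of "a connected subset of $Q \cap H$ lies in one component of $Q \cap H$" handles these uniformly and is the key technical lever.
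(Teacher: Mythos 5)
Your proposal is correct and follows essentially the same approach as the paper: both parts hinge on the observation that a segment's intersection with the open half-plane $H$ is connected and therefore lies in a single component of $Q\cap H$. The only difference is organizational — you argue the visibility step by contradiction, while the paper does a direct case split on whether $B\in H$ and uses the crossing point $X$ of $AB$ with $h$ — and your extra verification that $Q'$ is a weakly simple polygon is fine, if not strictly required.
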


\begin{proof}
First, if $A \in H$, then $Q\cap H$ consists of a single connected component as $Q$ is star-shaped. However, this implies $A \in C$, which we assumed not to be the case. Thus, $A \notin H$.

Now consider a point $B\in Q'$.
If $B\notin H$, then $AB\cap H=\emptyset$. Hence we also have $AB\subset Q'$, since $AB\subset Q$.
Otherwise (if $B \in H$), then $B$ is in a connected component $D$ of $Q\cap H$ with $D$ different from $C$.
Let $X$ be the intersection point of $h$ and $AB$.
Since $AB\subset Q$, we must have $XB\subset D$.
As $D\subset Q'$, we then have $AB\subset Q'$.
We therefore conclude that $Q'$ is star-shaped.
\end{proof}

\paragraph{Partitions.}
We will eventually consider star partitions of a modification $\tilde P$ of the input polygon $P$ obtained by making incisions into the interior of $P$ from corners.
Thus, $\tilde P$ is a weakly simple polygon covering the same region as $P$, but $\tilde P$ has some extra edges on top of each other that stick into the interior of $P$.
To accommodate this, we define star partitions in a way that allows both the input polygon $P$ and the pieces to be weakly simple polygons.
We define a \emph{star partition} of a weakly simple polygon $P$ to be a set of weakly simple star-shaped polygons $Q_1,\ldots,Q_k$ such that after an arbitrarily small perturbation of $P$ and $Q_1, \dots, Q_k$, we obtain \emph{simple} polygons $P'$ and $Q'_1,\ldots,Q'_k$ with the following properties:

\begin{enumerate}
    \item The polygons $Q'_1,\ldots,Q'_k$ are pairwise interior-disjoint. \label{partprop:1}
    
    \item $\bigcup_{i=1}^k Q'_i =P'$. \label{partprop:2}
\end{enumerate}

Note that this implies that the weakly simple polygons $Q_1,\ldots,Q_k$ must also have properties \ref{partprop:1} and \ref{partprop:2} (with $P'$ replaced by $P$ and $Q'_i$ replaced by $Q_i$ for all $i\in\{1,\ldots,k\}$), since otherwise a large perturbation would be needed for them to be transformed into simple polygons with the required properties.
However, it would not be sufficient to define a partition as a set of weakly simple polygons with properties \ref{partprop:1} and \ref{partprop:2} alone.
This would, for instance, allow two pieces with empty interiors (such as two segments) to properly intersect each other, which is not intended.
Our algorithm may produce weakly simple pieces which are not simple, since the boundary can meet itself at the star center; see \Cref{fig:weaklysimple}.
As demonstrated in the figure, by applying \Cref{lemma:cut}, such a piece $Q$ can ``steal'' a bit from the neighbouring pieces, which turns $Q$ into a simple polygon $Q'$, resulting in a partition consisting of simple polygons.

\begin{figure}
\centering
\includegraphics[page=34]{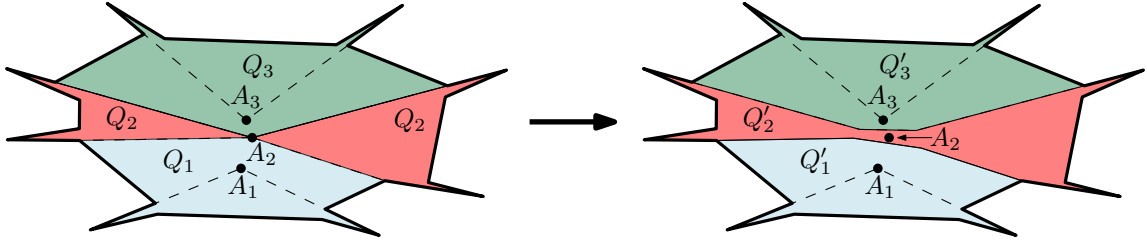}
\caption{\emph{Left:}
A partition that our algorithm may produce, involving the piece $Q_2$ which is only weakly simple.
\emph{Right:}
We can assign a bit of the neighbouring pieces to $Q_2$ and obtain a partition into simple polygons.
}
\label{fig:weaklysimple}
\end{figure}

\paragraph{(Important) Sight Lines.}
Given a star-shaped polygon $Q$ and a star center $A \in \ker(Q)$, each segment that connects a corner of $Q$ with the center $A$ is called a \emph{sight line} of $Q$. 
A sight line $\ell$ is called an \emph{important sight line} if it contains a corner $D$ of $P$ in its interior. We call $D$ the \emph{support} of $\ell$.
If there are multiple candidates, we define the corner farthest from the star center as the support.

\paragraph{Tripods.}
In a star partition, three pieces $Q_1,Q_2,Q_3$ with star centers $A_1,A_2,A_3$ form a \emph{tripod} with \emph{tripod point} $C$ if the following properties hold.
\begin{itemize}
\item
$A_iC$ is an important sight line of $Q_i$ with support $D_i$, for each $i \in \{1, 2, 3\}$. 
These concave corners $D_1, D_2, D_3$ of $P$ are called the \emph{supports} of the tripod.
\item
The union $Q_1\cup Q_2\cup Q_3$ contains a (sufficiently small) disk centered at $C$.
\item The three pieces $Q_1, Q_2, Q_3$ have \emph{strictly} convex corners at $C$.
\end{itemize}
Tripods can be necessary in optimal solutions, see Figure~\ref{fig:art} for such an example.

The three segments $D_1C, D_2C, D_3C$ are called the \emph{legs} of the tripod.
The polygon bounded by the three shortest paths in $P$ between pairs of the supports $D_1,D_2,D_3$ is called the \emph{pseudo-triangle} of the tripod, and these shortest paths are called \emph{pseudo-diagonals}. 

\begin{lemma}
    \label{lem:separate-tripods}
    Let $\mathcal{T}_1, \mathcal{T}_2$ be two distinct tripods in a star partition $\mathcal{Q}$. The interiors of the pseudo-triangles of $\mathcal{T}_1$ and $\mathcal{T}_2$ are disjoint. 
\end{lemma}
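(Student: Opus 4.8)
The plan is to argue by contradiction, supposing that the open pseudo-triangles $\Delta_1$ of $\mathcal T_1$ and $\Delta_2$ of $\mathcal T_2$ share an interior point. The key topological observation is that a pseudo-triangle $\Delta$ is bounded by three pseudo-diagonals, each of which is a shortest path in $P$ between two of its supports, and shortest paths in a simple polygon are \emph{geodesics}: each pseudo-diagonal is a polygonal chain that bends only at reflex (concave) corners of $P$, always turning toward the interior of $P$ in the sense that keeps the chain taut. In particular, along each pseudo-diagonal the polygon $P$ lies locally on the convex side of every bend. I would first record this structural fact and also recall that each leg $D_iC$ of a tripod lies inside the corresponding piece $Q_i$ and meets $\partial P$ only at its endpoint $D_i$, so the three legs of a single tripod are interior-disjoint apart from the common point $C$, and they divide $P$ into three regions, one of which contains each of the other pieces' interiors near the respective $A_i$.

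Next I would exploit the fact that the star centers of the pieces $Q_1,Q_2,Q_3$ of $\mathcal T_1$ can see the respective supports $D_1,D_2,D_3$, so each $A_i$ lies on the ``outer'' side of the pseudo-diagonals at $D_i$; more precisely, the pseudo-triangle $\Delta_1$ is disjoint from the interiors of the three pieces $Q_1,Q_2,Q_3$ of $\mathcal T_1$ near their star centers, since each such piece locally lies on the side of the leg $D_iC$ opposite to $\Delta_1$ (because all incident edges of $P$ at the supports $D_i$ lie consistently to one side of the legs, as noted in the tripod definition). Consequently $\Delta_1$ is covered entirely by pieces \emph{other} than $Q_1,Q_2,Q_3$, or contains no star centers at all in its interior relative to $\mathcal T_1$; the upshot I want is that the interior of a pseudo-triangle is ``blocked off'' from its own tripod's star centers. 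The same holds for $\Delta_2$ with $\mathcal T_2$.

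Now suppose $\Delta_1$ and $\Delta_2$ overlap. Since each $\Delta_j$ is a pseudo-triangle, its boundary $\partial\Delta_j$ consists of three pseudo-diagonals joining the three supports $D^{(j)}_1,D^{(j)}_2,D^{(j)}_3$, which are corners of $P$. The two tripods are distinct, so $\{D^{(1)}_i\}\ne\{D^{(2)}_i\}$; I would split into cases according to how many supports they share. In every case the overlap of the two regions forces a pseudo-diagonal of one pseudo-triangle, say a subarc of $\partial\Delta_1$, to enter the interior of $\Delta_2$ and leave it again (or to have an endpoint strictly inside $\Delta_2$). An endpoint strictly inside $\Delta_2$ is a corner $D^{(1)}_i$ of $P$ in the \emph{open} region $\Delta_2$; but $\Delta_2$ is bounded by shortest paths, so an interior point of $\Delta_2$ that is a concave corner of $P$ would let us shortcut one of the bounding pseudo-diagonals of $\Delta_2$ — contradicting that they are shortest paths — unless $D^{(1)}_i$ lies on $\partial\Delta_2$, i.e.\ is one of the shared supports. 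The remaining possibility is that a pseudo-diagonal of $\Delta_1$ crosses a pseudo-diagonal of $\Delta_2$ transversally at an interior point $X$; then at $X$ we have two shortest paths of $P$ crossing, and by the standard exchange argument for geodesics in a simple polygon this contradicts the uniqueness/taut-string property (two shortest paths between distinct pairs of points can touch but cannot cross transversally). Handling the shared-support cases requires an extra observation: if $\mathcal T_1$ and $\mathcal T_2$ share a support $D$, the legs $DC_1$ and $DC_2$ of the two tripods emanate from $D$ into $P$, and since the two tripod points $C_1,C_2$ are seen from $D$ along important sight lines of two \emph{different} pieces, the legs are distinct rays from $D$, which together with the pseudo-triangle structure keeps $\Delta_1,\Delta_2$ on opposite sides locally at $D$, preventing overlap near $D$; one then reduces to the no-shared-support configuration away from $D$.

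I expect the main obstacle to be making the ``shortest paths in a simple polygon don't cross transversally, and a pseudo-triangle contains no concave corner of $P$ in its interior'' arguments fully rigorous in the presence of degeneracies — collinear corners, pseudo-diagonals sharing subsegments, and the weakly-simple technicalities — and in correctly enumerating the cases by the number of shared supports (0, 1, 2, or 3 shared, the last being impossible since the tripods are distinct and a pseudo-triangle is determined by its three supports). A clean way to package this is to first prove the lemma ``the open pseudo-triangle of any tripod contains no corner of $P$'' and ``its boundary pseudo-diagonals can be perturbed to be pairwise non-crossing with those of any other tripod,'' and then deduce disjointness of interiors purely from planar Jordan-curve reasoning about two regions each bounded by three non-crossing geodesic arcs with endpoints among the corners of $P$.
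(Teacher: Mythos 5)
There is a genuine gap in the step that is supposed to rule out a transversal crossing of two pseudo-diagonals. The claim that ``two shortest paths between distinct pairs of points can touch but cannot cross transversally'' is false: in a convex quadrilateral the two diagonals are each shortest paths between their endpoints and they cross transversally at an interior point. The standard exchange argument only forbids two shortest paths from crossing \emph{twice}; a single transversal crossing is perfectly consistent with both being geodesics. Indeed, the lemma is not a purely geometric fact about $P$ and triples of concave corners --- one can arrange six concave corners so that the pseudo-triangles of two disjoint triples overlap in a Star-of-David pattern, with no corner of either pseudo-triangle inside the other and six transversal crossings of pseudo-diagonals. So your dichotomy (vertex strictly inside, or transversal crossing) is right, but the second branch cannot be closed with shortest-path properties alone, and that branch is exactly where the content of the lemma lives.

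The missing ingredient, which is the heart of the paper's proof, is that the \emph{legs} of both tripods are boundary segments of pieces of the common partition $\mathcal{Q}$ and therefore pairwise non-crossing. Hence all legs of $\mathcal{T}_2$ lie in a single region $R_0$ of the three into which the legs of $\mathcal{T}_1$ cut $P$, and since $R_0$ is bounded by two legs and a portion of $\partial P$, the whole pseudo-triangle $\Delta_2$ lies in $R_0$ as well. Within $R_0$ the boundary of $\Delta_1$ consists of the two legs (which no edge of $\partial\Delta_2$ can cross, being confined to $R_0$) and a single concave chain; the paper also separately excludes $\Delta_2\subseteq\Delta_1$, since the supports of $\mathcal{T}_2$ would then all lie on one concave chain. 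After that, any overlap forces an edge $e$ of $\partial\Delta_2$ to have an endpoint in the interior of $\Delta_1$, and since every endpoint of such an edge is a concave corner of $P$, this contradicts the fact (which you correctly identify and prove by a shortcut argument) that the interior of a pseudo-triangle contains no corner of $P$. Your proposal mentions that legs lie inside pieces but never uses the non-crossing of legs across the \emph{two different} tripods, so the confinement step --- and with it the elimination of the crossing case --- is not available in your argument as written.
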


\begin{proof}
The legs of $\mathcal{T}_1$ partition $P$ into three regions $R_0, R_1, R_2$.
    Since tripod legs are boundary segment of pieces, they cannot cross each other. 
    Hence, all the tripod legs of $\mathcal{T}_2$ must lie in one of $R_0$, $R_1$, $R_2$; without loss of generality, assume they lie in $R_0$. 
    Then the pseudo-triangle of $\mathcal{T}_2$ is a subpolygon of $R_0$. 
    Towards a contradiction, assume that the interiors of the two pseudo-triangles are not disjoint.
    It is impossible that $\mathcal{T}_2$ is contained in $\mathcal{T}_1$, since it would mean that the corners of $\mathcal{T}_2$ are a subset of the corners of one pseudo-diagonal of $\mathcal{T}_1$, and a pseudo-triangle cannot be made from corners on a concave chain.
    Hence, if $\mathcal{T}_1$ and $\mathcal{T}_2$ are not interior-disjoint, there is an edge $e$ of the pseudo-triangle of $\mathcal{T}_2$ that crosses the boundary of the pseudo-triangle of $\mathcal{T}_1$; see \Cref{fig:separate-tripods}.
    As the pseudo-triangle of $\mathcal{T}_1$ in $R_0$ is bounded by the two tripod legs (which $e$ does not cross) and a concave chain, the segment $e$ must have an endpoint inside the pseudo-triangle. As both endpoints of $e$ are supported by a concave corner of $P$, we obtain a contradiction with the fact that no vertex of $P$ is contained in the interior of the pseudo-triangle.
\end{proof}

\begin{figure}
\centering
\includegraphics{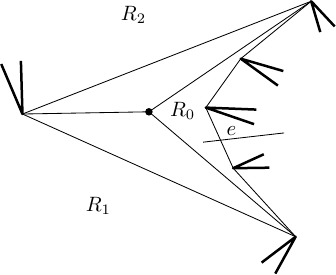}
\caption{If the interiors of the pseudo-triangles of two tripods intersect, then either a vertex of a pseudo-triangle is in the other pseudo-triangle, or a pseudo-triangle crosses a tripod leg of another tripod.}
\label{fig:separate-tripods}
\end{figure}

\subsection{Coordinate and Area Maximum Partitions}
\label{sec:existence}

\paragraph{Coordinate Maximum Partition.}
We define the lexicographic order $\preceq$ of vectors $v_1, v_2 \in \RR^d$ so that $v_1 \preceq v_2$ iff $v_1 = v_2$ or $i$ is the first dimension that $v_1$ and $v_2$ differ in and $v_1[i] < v_2[i]$.
Note that this definition carries over to star centers in a straightforward manner.
For a star-shaped polygon $Q$, the \emph{maximum} star center of $Q$ is the star center (i.e.\ point in $\ker(Q)$) with the lexicographically largest value.
For a polygon $P$, consider an optimal star partition $\mathcal Q$ with maximum star centers $A_1,\ldots,A_k$ sorted in lexicographic order, and define $c(\mathcal Q)=\langle A_1,\ldots,A_k\rangle$ to be the combined coordinate vector.
If $c(\mathcal Q)$ is maximum in lexicographic order among all optimal star partitions of $P$, we say that $\mathcal Q$ is a \emph{coordinate maximum} optimal partition.
In other cases, it is useful to consider a partition with given star centers where the vector of \emph{areas} of the pieces has been maximized.
In this section, we provide lemmas that ensure the existence of such partitions.
The proofs are deferred to~\Cref{apx:existence}.

\begin{lemma}\label{lem:coordmax}
For any simple polygon $P$, there exists a coordinate maximum optimal star partition.
\end{lemma}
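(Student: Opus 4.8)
The plan is to argue that the set of coordinate vectors $c(\mathcal Q)$, ranging over all optimal star partitions $\mathcal Q$ of $P$, has a maximum element in lexicographic order, and that this maximum is attained by an actual partition. The first step is to restrict attention to partitions of a fixed optimal size $k = k(P)$ (the minimum number of star-shaped pieces) and to think of the configuration space of such partitions. The key difficulty is that the family of optimal partitions is, a priori, a complicated (non-compact) set, so I cannot simply invoke compactness to get a maximizer. The standard remedy, which I would carry out here, is to choose one coordinate at a time: first maximize the (lexicographically smallest) star center $A_1$, then among partitions achieving that value of $A_1$ maximize $A_2$, and so on. For each of these $k$ sub-steps I must show that a supremum is attained.

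To make the one-coordinate-at-a-time argument go through, I would first establish a compactness-type fact: if $(\mathcal Q^{(m)})_m$ is a sequence of optimal star partitions, then after passing to a subsequence the pieces converge (say in the Hausdorff metric on compact subsets of $P$, applied to the closures of the pieces, together with convergence of a fixed star center of each piece) to a limiting collection of star-shaped weakly simple polygons that again forms an optimal star partition of $P$ with the same number of pieces. Here the fact that we allow \emph{weakly simple} pieces is exactly what saves the argument: a Hausdorff limit of star-shaped simple polygons (with star centers converging to some point $A^\infty$) is still star-shaped with center $A^\infty$ — because the condition ``$A B \subseteq Q$ for all $B \in Q$'' passes to limits — even though it may degenerate (collapse to lower-dimensional sets, develop pinch points, etc.), which is precisely the weakly-simple regime the preliminaries were set up to handle. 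One also has to check that the limiting pieces still cover $P$ and are interior-disjoint after perturbation; covering is closed under Hausdorff limits, and interior-disjointness is where I would lean on the perturbation definition of ``partition'' rather than on literal interior-disjointness.

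With that compactness in hand, the maximization is routine. Let $\alpha_1 = \sup \{ A_1(\mathcal Q) : \mathcal Q \text{ optimal} \}$ in lexicographic order (this supremum exists because all star centers lie in the bounded set $P$, so the set of candidate first coordinates is a bounded subset of $\RR^2$ and has a lexicographic supremum). Pick a sequence of optimal partitions whose first star centers converge to $\alpha_1$; by compactness a subsequence converges to an optimal partition $\mathcal Q_1$, and since the maximum star center of a piece depends upper-semicontinuously on the piece in the right sense — $\ker$ of a Hausdorff limit contains the limit of the $\ker$'s in a way that forces $A_1(\mathcal Q_1) \succeq \alpha_1$, hence $= \alpha_1$ — the partition $\mathcal Q_1$ achieves the maximum first coordinate. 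Now repeat inside the (still nonempty, still compact by the same argument restricted to this closed sub-family) family of optimal partitions with $A_1 = \alpha_1$, maximizing $A_2$, and so on through $A_k$. After $k$ steps we obtain an optimal partition $\mathcal Q^\star$ whose coordinate vector $c(\mathcal Q^\star) = \langle A_1, \dots, A_k \rangle$ is lexicographically maximal among all optimal partitions; this is the desired coordinate maximum optimal partition.

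The main obstacle I expect is the compactness/limit step, specifically verifying that the Hausdorff limit of optimal partitions is genuinely a partition in the paper's sense (interior-disjoint after perturbation and covering $P$) rather than something degenerate that fails to cover $P$ or overlaps. I would handle covering by noting $P = \bigcup_i Q_i^{(m)}$ for every $m$ and that a point $p \in P$ lies in $Q_{i(m)}^{(m)}$ for some index $i(m)$ taking infinitely often the same value $i$, whence $p$ lies in the Hausdorff limit of that subsequence of $Q_i^{(m)}$'s, i.e.\ in the limiting piece $Q_i$. For the non-overlap side I would argue at the level of the perturbed simple polygons guaranteed by the partition definition, or equivalently show that the limiting pieces have total area equal to the area of $P$ (areas pass to the limit under Hausdorff convergence here because no mass escapes — everything stays inside the fixed compact $P$), which forces the interiors to be essentially disjoint. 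One subtlety worth flagging in the write-up is that a limiting piece could a priori be empty or lower-dimensional; this is fine for ``weakly simple'' and does not affect the count $k$ as long as no piece is actually lost, which the covering argument prevents. Since the proof is deferred to \Cref{apx:existence}, a clean way to present it is to isolate the compactness statement as an auxiliary lemma and then give the short $k$-fold maximization as above.
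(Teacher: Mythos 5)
Your proposal is correct and follows essentially the same route as the paper: both rely on Hausdorff compactness of the pieces along a maximizing sequence of optimal partitions, with star-shapedness, coverage, and interior-disjointness all passing to the limit (the weakly simple setting absorbing any degeneration). The only difference is presentational --- you iterate the maximization coordinate by coordinate, whereas the paper takes the lexicographic supremum of the full vector $c(\mathcal Q)$ in one step and extracts a single convergent subsequence.
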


\paragraph{Restricted Coordinate Maximum Partitions.}
It will sometimes be necessary to change the direction in which we maximize a specific subset of star centers, while keeping the remaining ones fixed.
Furthermore, we often have to restrict the star centers that we are optimizing to a subpolygon $F \subseteq P$.
For this, we use the following generalization of \Cref{lem:coordmax}, the proof of which is analogous.
Given a vector $d \in \RR^2$, we define $d^\perp \in \RR^2$ to be the vector orthogonal to $d$ obtained by rotating $d$ counterclockwise by $\pi/2$.

\begin{lemma}[Restricted coordinate maximization]\label{lem:coordmaxsubset}
Consider a simple polygon $P$ and an optimal star partition with star centers $A_1,\ldots,A_k$.
Let $i\leq k$ and suppose that $A_i,\ldots,A_k\in F$ for a polygon $F\subseteq P$. 
Let $d\in\RR^2 \setminus \{(0,0)\}$ be a vector.
There exists a star partition of $P$ with star centers $A_1,A_2,\ldots,A_{i-1},A^*_i,A^*_{i+1},\ldots,A^*_k$ where $A^*_i,A^*_{i+1},\ldots,A^*_k\in F$ and $\langle A^*_i\cdot d, A^*_i\cdot d^\perp,A^*_{i+1}\cdot d, A^*_{i+1}\cdot d^\perp,\ldots,A^*_k\cdot d,A^*_k\cdot d^\perp\rangle$ is maximum in lexicographic order among all star partitions with fixed star centers $A_1,\ldots,A_{i-1}$ and for which the remaining star centers are restricted to~$F$. 
\end{lemma}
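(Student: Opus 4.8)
The plan is to mimic the proof of \Cref{lem:coordmax}, but replacing the lexicographic maximization over the full vector of star centers by the restricted objective described in the statement, and proving existence via a compactness argument. First I would set up the space over which we optimize: fix the star centers $A_1,\dots,A_{i-1}$, and let $\mathcal S$ be the set of tuples $(A_i^*,\dots,A_k^*)\in F^{k-i+1}$ for which there exists a star partition of $P$ into $k$ pieces $Q_1,\dots,Q_k$ with $A_j\in\ker(Q_j)$ for $j<i$ and $A_j^*\in\ker(Q_j)$ for $j\ge i$. The hypothesis guarantees $\mathcal S\neq\emptyset$ (the original partition with $A_i,\dots,A_k$ witnesses it). The objective function $\Phi(A_i^*,\dots,A_k^*)=\langle A_i^*\cdot d,\, A_i^*\cdot d^\perp,\, \dots,\, A_k^*\cdot d,\, A_k^*\cdot d^\perp\rangle\in\RR^{2(k-i+1)}$, ordered lexicographically, is a continuous (in fact linear-in-each-coordinate) function of the tuple, so it suffices to show that $\mathcal S$ is compact; then $\Phi$ attains its lexicographic maximum on $\mathcal S$, because one can extract the maximizer coordinate by coordinate: the first coordinate attains a max on the compact set $\mathcal S$, the subset achieving it is closed (hence compact), the second coordinate attains a max there, and so on through all $2(k-i+1)$ coordinates.

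Next I would argue compactness of $\mathcal S$. Boundedness is immediate since $\mathcal S\subseteq F^{k-i+1}$ and $F\subseteq P$ is bounded. For closedness, take a sequence of tuples in $\mathcal S$ converging to some $(A_i^*,\dots,A_k^*)\in F^{k-i+1}$, together with witnessing partitions $\mathcal Q^{(m)}=\{Q_1^{(m)},\dots,Q_k^{(m)}\}$. This is exactly the point where the argument for \Cref{lem:coordmax} does the work, and I would invoke it in the same way: each piece $Q_j^{(m)}$ is a weakly simple star-shaped polygon contained in $P$ with a bounded number of edges (one may assume each piece has at most, say, $O(n)$ edges, or more carefully bound the combinatorial complexity as in the proof of \Cref{lem:coordmax}), so after passing to a subsequence the pieces converge (e.g. in Hausdorff distance, or via convergence of their vertex sequences) to limit regions $Q_1,\dots,Q_k$. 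Properties \ref{partprop:1} and \ref{partprop:2} of being a star partition are closed under such limits — pairwise non-overlap and covering $P$ both pass to the limit — and star-shapedness with center $A_j^*$ passes to the limit because $A_j^{(m)}\to A_j^*$ and each segment $A_j^{(m)}B^{(m)}$ lies in $Q_j^{(m)}$. Hence the limit is a valid star partition witnessing $(A_i^*,\dots,A_k^*)\in\mathcal S$, so $\mathcal S$ is closed, hence compact.

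The main obstacle is the same one that the proof of \Cref{lem:coordmax} must already handle: controlling the combinatorial complexity of the pieces so that the limiting argument yields genuine (weakly simple) polygons rather than degenerate limits, and verifying that the limiting regions still form a star partition in the precise sense defined in \Cref{sec:prelim} (i.e. that an arbitrarily small perturbation makes them simple and interior-disjoint with union $P$). Since the excerpt states that \Cref{lem:coordmax} holds and that ``the proof is analogous,'' I would explicitly reduce to it: the only new ingredients here are (a) freezing $A_1,\dots,A_{i-1}$, which merely restricts $\mathcal S$ to a closed subset and causes no difficulty; (b) constraining $A_i^*,\dots,A_k^*$ to $F$, which is handled by intersecting with the closed set $F^{k-i+1}$ and noting $\mathcal S\cap F^{k-i+1}$ is still nonempty by hypothesis; and (c) using the direction $d$ and its rotation $d^\perp$ instead of the standard basis, which is an inconsequential change of coordinates since $\{d,d^\perp\}$ is an orthogonal basis of $\RR^2$ and lexicographic maximization of a continuous function over a compact set always attains its value. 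Assembling these observations gives the claimed partition, completing the proof.
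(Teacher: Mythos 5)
Your proposal is correct and follows essentially the same route as the paper: the paper's proof (in \Cref{apx:existence}) also reduces to the argument for \Cref{lem:coordmax} by taking $d=(1,0)$ without loss of generality, defining the supremum over partitions with the first $i-1$ centers fixed and the rest restricted to $F$, and extracting a convergent subsequence of partitions whose pieces converge in Hausdorff distance. Your reformulation via compactness of the feasible set of center tuples and coordinate-by-coordinate extraction of the lexicographic maximum is just a slightly more explicit packaging of the same compactness argument.
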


The partition described in \Cref{lem:coordmaxsubset} is called the \emph{restricted coordinate maximum} optimal star partition along $d$, within $F$ and with fixed star centers $A_1, \dots, A_{i-1}$.
Note that a coordinate maximum optimal partition is a restricted coordinate maximum one along $d=(1,0)$, within $P$ and with no fixed star center.

\paragraph{Area Maximum Partition.}
Consider a polygon $P$ and a star partition $\mathcal Q=\{Q_1,\ldots,Q_k\}$ of $P$ with corresponding star centers $\mathcal A=\{A_1,\ldots,A_k\}$.
We say that $\mathcal Q$ is \emph{area maximum} with respect to $\mathcal A$ if the vector of areas $a(\mathcal Q)=\langle a(Q_1),\ldots,a(Q_k)\rangle$ is maximum in lexicographic order among all partitions of $P$ with star centers $\mathcal A$.

\begin{lemma}\label{lem:areamax}
Let $P$ be a polygon and suppose that there exists a star partition of $P$ with star centers $\mathcal A=\{A_1,\ldots,A_k\}$.
Then there exists a partition which is area maximum with respect to $\mathcal A$.
\end{lemma}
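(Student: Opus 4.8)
The plan is to use a compactness argument. Fix the star centers $\mathcal A = \{A_1,\ldots,A_k\}$ and consider the set $\mathcal S$ of all star partitions of $P$ with these star centers; by hypothesis $\mathcal S \neq \emptyset$. I want to find an element of $\mathcal S$ maximizing the area vector $a(\mathcal Q) = \langle a(Q_1),\ldots,a(Q_k)\rangle$ in lexicographic order. The natural approach is to maximize coordinate by coordinate: first let $\alpha_1 = \sup\{a(Q_1) : \mathcal Q \in \mathcal S\}$, then among partitions achieving $a(Q_1) = \alpha_1$ let $\alpha_2$ be the supremum of $a(Q_2)$, and so on. The lemma follows if each supremum is attained and the successive restrictions remain nonempty, so the real content is a limiting argument showing that a sequence of partitions whose area vectors converge to the lexicographic supremum has a subsequential limit that is again a valid star partition with the same star centers and with the limiting area vector.

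First I would set up an appropriate parametrization/topology on $\mathcal S$. A clean way: for each $i$, the piece $Q_i$ is a (weakly simple) star-shaped polygon with star center $A_i$, so it is determined by its radial function $r_i : \mathbb S^1 \to [0,\infty)$ measuring, in each direction $\theta$ from $A_i$, how far $Q_i$ extends (with $r_i(\theta)$ bounded by the distance from $A_i$ to $\partial P$ in direction $\theta$, since $Q_i \subseteq P$). Star-shapedness with center $A_i$ is exactly the statement that $Q_i$ is the region $\{A_i + t\,u(\theta) : 0 \le t \le r_i(\theta)\}$ for such a function, and the constraint ``$Q_i \subseteq P$'' together with ``the $Q_i$ tile $P$'' becomes a pointwise constraint plus a measure/covering constraint on the tuple $(r_1,\ldots,r_k)$. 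Take a maximizing sequence $(\mathcal Q^{(m)})_m$ for the lexicographic area order; the radial functions $r_i^{(m)}$ are uniformly bounded, and by passing to a subsequence I can extract pointwise (a.e.) or $L^1$ limits $r_i^\infty$ using dominated convergence or a Helly-type selection (each $r_i^{(m)}$ has bounded variation once we observe the pieces have uniformly boundedly many edges — or, failing a BV bound, argue in $L^1$). The limit tuple defines candidate regions $Q_i^\infty$.

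The main obstacle — the step I expect to be delicate — is verifying that the limit $\{Q_i^\infty\}$ is genuinely a star partition in the sense defined in the paper, in particular that the pieces remain weakly simple (finitely many edges, no degeneration that violates the perturbation condition), that they still cover all of $P$ with pairwise disjoint interiors, and that no area is ``lost'' in the limit so that $a(Q_i^\infty) = \lim_m a(Q_i^{(m)})$ for each $i$. Coverage and disjointness should pass to the limit because the ``interior-disjoint'' and ``union equals $P$'' conditions are closed under the convergence used (areas add up: $\sum_i a(Q_i^{(m)}) = a(P)$ for all $m$, and lower semicontinuity of each $a(Q_i^\infty) \le \liminf a(Q_i^{(m)})$ combined with the fixed total forces equality). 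The finitely-many-edges issue is the subtle point: a priori the combinatorial complexity of the pieces along the sequence need not be bounded, so I would either (a) first argue that among area-maximizing candidates one may restrict to partitions whose piece boundaries lie in a fixed finite arrangement (the arrangement of lines through pairs of $\{A_i\}$ and edges of $P$, using \Cref{lemma:cut}-style trimming to push boundaries onto such lines), reducing $\mathcal S$ to a semialgebraic set of bounded description on which $a$ is continuous and the supremum is trivially attained; or (b) keep the sequence and show the limit is a weakly simple polygon directly, invoking that a Hausdorff/Fréchet limit of uniformly-complexity-bounded polygons is such — which again needs the complexity bound. I expect approach (a) to be the cleaner route: show a maximizer, if it exists at all, can be taken inside a fixed finite-dimensional family, then maximize a continuous function over a compact set. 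Since the proof is deferred to \Cref{apx:existence} and said to be analogous to that of \Cref{lem:coordmax}, I would mirror whatever reduction-to-a-finite-arrangement argument is used there, replacing ``lexicographically largest star center coordinates'' by ``lexicographically largest piece areas'' throughout.
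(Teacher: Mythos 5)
Your overall strategy---take a maximizing sequence for the lexicographic area order and extract a convergent subsequence whose limit is again a star partition---is exactly the strategy of the paper's proof in \Cref{apx:existence}, which explicitly mirrors the proof of \Cref{lem:coordmax}. The difference is in the compactness mechanism. The paper treats each piece $Q_{ji}$ as a compact subset of $P$ and uses sequential compactness of the hyperspace of non-empty compact sets under the Hausdorff metric (Blaschke selection); star-shapedness about the fixed center $A_j$, interior-disjointness, and coverage of $P$ all pass to Hausdorff limits by short contradiction arguments, and the limit areas are controlled because the total area is fixed at $a(P)$. You instead parametrize each piece by its radial function $r_i:\mathbb{S}^1\to[0,\infty)$ and try to extract pointwise-a.e.\ or $L^1$ limits.

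This is where your version has a genuine gap. A uniformly bounded sequence of measurable functions need not have any a.e.-convergent subsequence (consider $\sin(m\theta)$), and dominated convergence only applies after such a subsequence has been found; norm-compactness in $L^1$ likewise fails without an equicontinuity or bounded-variation hypothesis. So your extraction step really does require a uniform bound on the total variation of the $r_i^{(m)}$, i.e.\ on the combinatorial complexity of the pieces along the maximizing sequence, and no such bound is established (nor is one obviously available, since $\mathcal S$ contains partitions of unbounded complexity). Your fallback route (a)---restricting to partitions whose boundaries lie in a fixed finite arrangement determined by $\mathcal A$ and the edges of $P$---is essentially the content of \Cref{lem:newboundaryinsightlines} and \Cref{lem:sltypes}, but those lemmas are proved \emph{about} an area maximum partition and therefore presuppose the existence you are trying to establish; making that route non-circular would require an $\varepsilon$-approximate version of those structural arguments. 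The clean fix is simply to switch to the paper's topology: work with Hausdorff convergence of the pieces as compact sets, where subsequential compactness is free and no complexity bound is needed. (To the paper's credit and yours, the question of whether the Hausdorff limit still has a polygonal, weakly simple boundary is glossed over there as well; you are right to flag it, but it is not resolved by your radial-function setup either.)
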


\section{Structural Results on Tripods and Star Centers}
\label{sec:structure}

In this section, we will present a construction process which can construct all the star centers and tripods in some optimal solution within linearly many steps. 
To achieve this goal, we first need to pick an optimal solution with good properties. We do this by considering \emph{restricted coordinate maximum partitions} (see \cref{lem:coordmaxsubset}).

\begin{lemma}
    Consider a simple polygon $P$ and an optimal star partition $\mathcal{Q} = \{ Q_1, \dots, Q_k \}$ with corresponding star centers $A_1,\ldots,A_k$.
    There exists an optimal star partition consisting of simple polygons with the same star centers, such that no four pieces meet in the same point and no star center lies in the interior of a sight line.  
\end{lemma}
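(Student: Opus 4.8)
The plan is to perform two independent perturbations of the partition, each of which removes one kind of degeneracy while preserving the star centers, optimality, and the invariant already established. For the first issue (four pieces meeting at a point), I would argue that such coincidences can be broken by locally reassigning small neighborhoods. Suppose four or more pieces $Q_{i_1},\dots,Q_{i_m}$ ($m\ge 4$) share a point $X$. Since each $Q_{i_j}$ is star-shaped with center $A_{i_j}$, in a sufficiently small disk $B(X,\varepsilon)$ the piece $Q_{i_j}$ looks like a union of cones with apex $X$ (bounded by sight lines through $X$ or by edges of $P$ through $X$). I would apply \Cref{lemma:cut} with a line $h$ very close to $X$, splitting off from one piece a thin sliver near $X$ and handing it to an appropriate neighbor, thereby reducing the number of pieces incident to $X$; iterating, I bring this number down to at most three at $X$, and since there are only finitely many such bad points, finitely many applications suffice. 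One must check that the recipients remain star-shaped, which is exactly the content of \Cref{lemma:cut} (the cut line can be chosen so that the removed component does not contain the center, and the piece that gains the sliver still sees it because the sliver lies within $\varepsilon$ of a point already visible).

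For the second issue (a star center $A_i$ lying in the interior of a sight line $\ell$ of $Q_i$), note that a sight line is a segment $A_iD$ for a corner $D$ of $Q_i$; saying $A_i$ is in its interior is vacuous, so the intended reading is that $A_i$ lies in the interior of a sight line of a \emph{different} piece $Q_j$, i.e., $A_i$ is collinear with $A_j$ and some corner of $Q_j$ and lies strictly between them inside $Q_j$. This is again a non-generic coincidence: it forces $A_i$ onto one of finitely many lines determined by $A_j$ and the corners of $Q_j$. I would eliminate these by perturbing the offending center $A_i$ by an arbitrarily small amount; the subtlety is that $A_i$ must remain a valid star center of (a slightly adjusted) $Q_i$ and the partition must stay a valid partition of simple polygons. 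The cleanest way is to keep the combinatorial structure of $\mathcal Q$ fixed, move $A_i$ slightly within $\ker(Q_i)$ (which is a convex region of positive dimension unless $Q_i$ is degenerate, a case handled separately by the weakly-simple formalism), and re-cut the affected piece boundaries; because only finitely many forbidden lines pass through the current $A_i$, a generic tiny move avoids all of them while not re-introducing a four-fold meeting.

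The key technical device throughout is \Cref{lemma:cut}, which lets us trim a star-shaped piece by an open half-plane without losing star-shapedness, together with the already-proven fact (implicit in the running hypothesis) that we may assume the pieces are simple. The main obstacle I anticipate is bookkeeping: ensuring that the sequence of local modifications terminates and that no modification undoes an earlier one. I would address this by a careful potential argument — for instance, first fix a finite list of all currently-bad points and bad (center, sight-line) incidences, handle them one at a time, and verify that resolving one never creates a new bad point or incidence (a generic infinitesimal move cannot create a new collinearity or a new high-multiplicity meeting). Since the star centers $A_1,\dots,A_k$ are never moved to new locations by the first perturbation and are moved only infinitesimally-and-generically by the second, the statement "with the same star centers" should be read up to this arbitrarily small perturbation, consistent with how partitions of weakly simple polygons are defined in \Cref{sec:prelim}; alternatively, if the lemma truly demands the identical centers, the second perturbation is replaced by the observation that the bad incidences can instead be removed by \Cref{lemma:cut} applied to $Q_j$ (shrinking $Q_j$ away from $A_i$ so that $A_i$ is no longer interior to any of its sight lines), which keeps all centers exactly in place.
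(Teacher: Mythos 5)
Your overall strategy (local reassignment of small regions near each degenerate point) matches the paper's, but both halves of your argument have genuine gaps. For the four-pieces issue, \Cref{lemma:cut} only guarantees that the piece \emph{losing} a sliver stays star-shaped; the real difficulty is that the piece \emph{gaining} it must also stay star-shaped, and your justification (``the sliver lies within $\varepsilon$ of a point already visible'') is not valid --- proximity to a visible point does not imply visibility, e.g.\ when the receiving center lies on the extension of the shared boundary, in which case points just across that line are invisible. The paper's proof supplies exactly the missing ingredient: since $m\geq 4$ pieces meet at $C$ with angles summing to at most $2\pi$, two consecutive pieces $Q_1,Q_2$ have $\alpha_1+\alpha_2\leq\pi$, which is what makes it possible for one of them to absorb a small triangle at $C$ from the other; and it still needs a separate case analysis for when both centers lie on the extension of their shared boundary (there one instead transfers a thin triangle around the segment $A_1C$). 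Without the angle bound and that case split, your iteration is not guaranteed to make progress while preserving star-shapedness.

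For the second property, your primary plan perturbs the offending star center $A_i$, which contradicts the lemma's requirement that the star centers be \emph{identical} (and this is not a matter of the weakly-simple perturbation convention: later arguments, e.g.\ \Cref{lem:property-of-coordmax}, genuinely need the centers unchanged). Your fallback --- ``shrink $Q_j$ away from $A_i$'' via \Cref{lemma:cut} --- is too vague to verify: $A_i$ lies on $\partial Q_i\cap\partial Q_j$, a single half-plane cut of $Q_j$ must hand the removed region to other pieces whose star-shapedness again needs justification, and a sight line of $Q_j$ may contain \emph{several} star centers $A_1,\dots,A_m$ in its interior. The paper instead gives an explicit redistribution: it picks points $C_1,\dots,C_m$ on the next edge of $Q_j$ and assigns the quadrilateral $A_iA_{i+1}C_{i+1}C_i$ to each $Q_i$ (and a final triangle to $Q_m$), which removes all centers from the sight line at once, keeps every center fixed, and demonstrably creates no new bad incidences. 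You correctly anticipate the termination/bookkeeping issue, but the constructions you propose do not yet support the potential argument you invoke.
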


\begin{proof}
    We first turn the weakly simple star partition into a star partition with simple polygons; see Section~\ref{sec:prelim}.
    We then modify the partition, not moving the star centers, so that no four pieces contain the same point.
    Assume that there exists a point $C$ such that $C \in Q_1 \cap \dots \cap Q_m$ for some $m \geq 4$. 
    Without loss of generality, assume $Q_1, \dots, Q_m$ appear in clockwise order around $C$. 
    Let $\alpha_i$ be the angle of $Q_i$ at $C$. 
    Since $\sum_{i = 1}^m \alpha_i \leq 2\pi$, we have either $\alpha_1 + \alpha_2 \leq \pi$ or $\alpha_3 + \alpha_4 \leq \pi$. 
    Without loss of generality, assume $\alpha_1 + \alpha_2 \leq \pi$. 
    We now decrease the number of pieces containing $C$, while not creating an intersection point of four or more pieces. 
    Recall that $A_i$ is the star center of $Q_i$.
    We consider two cases; see \cref{fig:reduce-pieces}:

    \begin{itemize}
        \item \textbf{\boldmath $A_1$ is not on an extension of the shared boundary with $A_2$ or vice versa.} 
        If $A_1$ is not on an extension of the shared boundary with $A_2$, then $Q_1$ can take a small enough triangle around $C$ from $Q_2$, while the two new Steiner points in the partition are contained in at most three pieces, namely $Q_1, Q_2, Q_3$. A similar modification is possible if $A_2$ is not on an extension of the shared boundary with $A_1$. 
        \item \textbf{\boldmath Both $A_1$ and $A_2$ are on an extension of their shared boundary.} Without loss of generality, assume $A_1$ is closer to $C$ than $A_2$. Then $Q_1$ can take a sufficiently small triangle around the segment $A_1C$, while not creating any new point where four pieces meet.
    \end{itemize}
    Hence, eventually we obtain a star partition that has no four pieces meeting in the same point. 
    
    \begin{figure}
        \centering
        \includegraphics{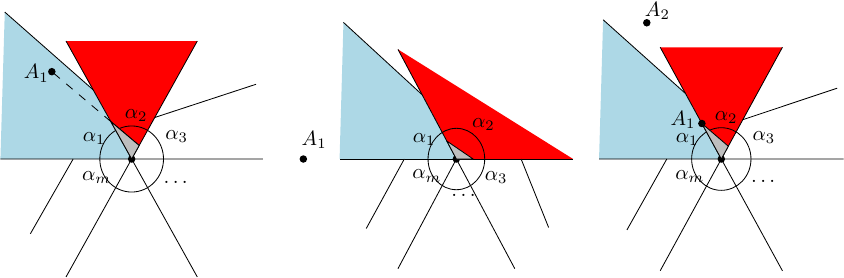}
        \caption{Reduce the number of pieces containing the same point. The two figures on the left show the modification we perform if none of the star centers are on an extension of the shared boundary with the other star center. The right figure shows the case when both star centers lie on an extension of the shared segment. Blue region marks the piece $Q_1$, red regions is the piece $Q_2$, and the gray region is going to be transferred from $Q_2$ to $Q_1$. }
        \label{fig:reduce-pieces}
    \end{figure}

    We now modify the partition to remove all sight lines that contain some star center in their interior. 
    Let $\ell = A_0C$ be a sight line that contains $A_1, A_2, \dots, A_m$ in its interior. 
    We first choose a sequence of points $C_1, C_2, \dots, C_m$ along the next edge of $Q_0$; see \cref{fig:center-in-sight-line}. We then give the quadrilateral $A_iA_{i+1}C_{i+1}C_i$ to piece $Q_i$ for all $i \in \{1, 2, \dots, m-1\}$. Finally we give $A_mCC_m$ to $Q_m$. 
    This modification removes all star centers from the interior of one sight line while no newly created sight line contains a star center in its interior.
    It is easy to check that this modification of the partition does not make four pieces meet. 
    \begin{cfigure}
        \centering
        \includegraphics{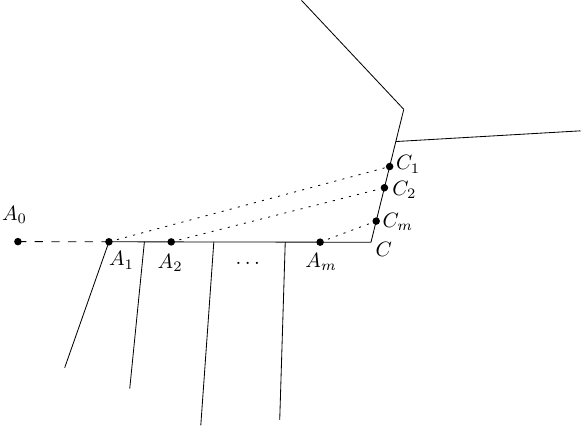}
        \caption{Redistributing pieces to remove a sight line that contains star centers in its interior.}
        \label{fig:center-in-sight-line}
    \end{cfigure}
\end{proof}

The main tool in this section is \emph{restricted coordinate maximum partitions} defined in \cref{sec:prelim}. 
The following lemma captures one of our key combinatorial results on a star center in a restricted coordinate maximum partition from \cref{lem:coordmaxsubset}. 
Intuitively, if one moves a star center $A_k$ of an optimal star partition in the direction $d$ as far as possible without moving other star centers (but possibly changing what region of $P$ each piece contains), then there are only a few reasons to get stuck.

\begin{lemma}
\label{lem:property-of-coordmax}
Consider the restricted coordinate maximum optimal star partition consisting of simple polygons $\mathcal Q=\{Q_1,\ldots,Q_k\}$ along $d$ and with fixed star centers $A_1,\ldots,A_{k-1}$ (so that only the coordinates of the last center $A_k$ have been maximized). 
Assume $A_k$ is restricted within a polygon $F \subseteq P$. 
Suppose that no four pieces meet in the same point and no star center is in the interior of any sight line.
Then $A_k$ lies on the intersection of two non-parallel segments of the following types:
\begin{itemize}
    \item an edge of $F$,
    \item an important sight line of $Q_k$, not containing any other star center, which is 
    \begin{itemize}
        \item an extension of an edge of $P$, or
        \item on the extension of a diagonal of $P$ that connects two concave corners, or
        \item an extension of a tripod leg (see \cref{sec:prelim}), and no star center is in the interior of the pseudo-triangle of this tripod.
    \end{itemize}
\end{itemize}
\end{lemma}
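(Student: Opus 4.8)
The plan is to view $A_k$ as the lexicographically maximal point, along $d$ and then $d^{\perp}$, of the \emph{feasibility region} $\mathcal R$ consisting of all points $A\in F$ such that $P$ still admits an optimal star partition into simple polygons with star centers $A_1,\dots,A_{k-1},A$. Since $\mathcal Q$ is the restricted coordinate maximum partition with fixed centers $A_1,\dots,A_{k-1}$, the point $A_k$ is this lex-maximal point, and it lies in the bounded set $P$. I would show that the boundary of $\mathcal R$ near $A_k$ is a union of line segments, each of one of the types listed in the statement; then $A_k$, being an extreme point, must lie on two \emph{non-parallel} such segments — otherwise one could either slide $A_k$ along a single tight segment in the lexicographically increasing direction, or move freely into the interior of $\mathcal R$, contradicting maximality. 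So the work reduces to determining which linear constraints can be tight at $A_k$ and checking that they have the claimed forms.

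I would split the constraints into two sources. The first is the requirement $A\in F$, which contributes precisely the edges of $F$. The second is the requirement that, after placing the $k$-th center at $A$, the whole polygon $P$ still decomposes into $k$ star-shaped pieces with the prescribed centers. Keeping the combinatorial structure of $\mathcal Q$ and only reassigning thin slivers between $Q_k$ and its neighbours, this is equivalent to $A\in\ker(Q_k')$ for the re-formed piece $Q_k'$: by \Cref{lemma:cut} a neighbour stays star-shaped when it cedes to $Q_k$ a connected component of its intersection with a half-plane, and a symmetric ``growing'' operation — verified case by case — lets a neighbour absorb a sliver adjacent to one of its own sight lines. The boundary of $Q_k'$ alternates between arcs of $\partial P$ and chords shared with neighbours. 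An arc of $\partial P$ forces $A$ to lie on the inner side of the supporting line of the incident edge of $P$; if this is tight at $A_k$, then $A_k$ lies on the extension of an edge of $P$, i.e.\ on an important sight line of $Q_k$ of the first sight-line type. A chord with a free (Steiner) endpoint can always be nudged as $A$ moves and so is never the binding constraint; a chord whose supporting line passes through two concave corners of $P$ is a diagonal between two concave corners and cannot be moved outward near either end, giving the second sight-line type.

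The remaining and hardest case is a chord $DC$ anchored at exactly one concave corner $D$ of $P$, where the binding constraint is that the chord pivot about $D$. I would run a case analysis showing that the only way this pivot is blocked is that the far endpoint $C$ is a point where $Q_k$ meets two further pieces, each of which shares with $Q_k$ a leg through another concave corner of $P$; since no four pieces meet at a point and no star center lies in the interior of a sight line (our hypotheses), the three pieces around $C$ have strictly convex corners there and their legs through concave corners $D=D_1,D_2,D_3$ form exactly a tripod with tripod point $C$ in the sense of \Cref{sec:prelim}, so $A_k$ lies on the extension of a tripod leg. Making this reduction rigorous — ruling out more exotic obstructions to moving the chord (e.g.\ the far endpoint sliding along $\partial P$ or being blocked by another piece's boundary) and verifying that what blocks the pivot is genuinely a tripod — is where I expect the bulk of the effort to go.

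Finally I would establish the two side conditions. That a constructed important sight line contains no other star center follows from the hypothesis that no star center lies in the interior of a sight line, together with the general position of its far endpoint (a tripod point or a point of $\partial P$). For the tripod case one additionally needs that no star center lies in the interior of the pseudo-triangle; I would prove this by a sub-argument showing that otherwise the obstructing leg could be re-routed — using that the pseudo-triangle is shielded from the rest of $P$ by its three pseudo-diagonals, cf.\ \Cref{lem:coordmaxsubset} — so that $A_k$ could be moved in the lexicographically increasing direction, contradicting maximality. Combining everything: if strictly fewer than two non-parallel tight constraints of the allowed types were present at $A_k$, the set of feasible perturbation directions would contain a lexicographically increasing one, contradicting that $\mathcal Q$ is the restricted coordinate maximum partition; hence $A_k$ lies on the intersection of two non-parallel segments of the stated types.
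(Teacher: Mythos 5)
Your skeleton matches the paper's: both arguments observe that coordinate maximality forces $A_k$ to be a corner of $\ker(Q_k)\cap F$, so it lies on two non-parallel tight constraints coming from edges of $F$ and from extensions of edges of $Q_k$, and both then try to show that every tight constraint of the second kind can be brought into one of the three listed forms by local repartitioning. But the proposal defers exactly the part that constitutes the proof. The paper's argument spends essentially all of its length on a delicate case analysis of the far endpoint $C_3$ of a tight sight line ($C_3$ a star center, $C_3$ on $\partial P$, $C_3$ a concave corner of a neighbouring piece, $C_3$ a meeting point of three pieces that is not a tripod point, $C_3$ a tripod point with a star center inside its pseudo-triangle), each case requiring a bespoke sliver-transfer that keeps every affected piece star-shaped. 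Your plan compresses this into ``a chord with a free (Steiner) endpoint can always be nudged and so is never the binding constraint'' plus an acknowledged to-do for the tripod case. That first claim is not true as stated and is where the content lives: a chord of $Q_k$ ending at an interior Steiner point \emph{can} be binding (e.g.\ when the endpoint is shared with two other pieces whose sight lines pin it down), and showing that the only unremovable such configuration is a tripod is the theorem.

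Two further concrete issues. First, you claim that ``no other star center on the sight line follows from the hypothesis that no star center lies in the interior of a sight line together with the general position of its far endpoint''---but the hypothesis only excludes star centers from the \emph{interior}; the far endpoint $C_3$ may itself be a star center $A_i$, and the paper must explicitly dispose of this (its Case 1) by handing a triangle $C_0C_3C_4$ to $Q_i$. Second, since your modifications change the partition, you must ensure (as the paper does, by never creating new important sight lines of $Q_k$) that the two good segments found at the end are sight lines of the \emph{original} partition $\mathcal Q$, which is what the lemma asserts; your feasibility-region framing, where the boundary of $\mathcal R$ is partition-independent, does not by itself deliver a conclusion phrased in terms of the given $Q_k$. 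Neither issue is fatal to the strategy, but both need to be addressed, and the main case analysis needs to actually be carried out.
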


\begin{proof}
    We can choose $A_k$ freely inside $\ker(Q_k) \cap F$ while all the pieces remain the same. 
    By coordinate maximization, $A_k$ must be a corner of $\ker(Q_k) \cap F$.
    Let $\mathcal{S}$ denote the set of edges $e$ of $Q_k$ such that $A_k$ is on the extension of $e$. 
    Let $\mathcal{S'}$ denote the set of edges of $F$ that $A_k$ lies on. 
    Since all edges of $\ker(Q_k)$ come from extensions of edges of $Q_k$, there must be two non-parallel segments in $\mathcal{S} \cup \mathcal{S'}$.

    We call a segment in $\mathcal{S}$ \emph{good} if it is collinear to a segment that is of the described types in the lemma statement; otherwise, we call it \emph{bad}.
    In the remainder of the proof, we modify the partition $\mathcal{Q}$ while \textit{not moving any star centers} and \textit{never creating any new important sight lines of $Q_k$}, which means that the two good segments we find at last are also good segment of the initial star partition $\mathcal Q$. 
    At the same time, we decrease the number of bad segments in $\mathcal{S}$ until all segments in $\mathcal{S}$ are good.
    In the end, either $A_k$ satisfies the lemma, or else we cannot find two non-parallel segments in $\mathcal{S} \cup \mathcal{S}'$, which would mean that $A_k$ is not at a corner of $\ker(Q_k) \cap F$ therefore contradicting that $A_k$ is optimal with respect to coordinate maximization.

    In the remainder of the proof, all star centers and corners of $P$ on $\partial Q_k$ are considered as corners of $Q_k$, so there can be several collinear consecutive segments of $Q_k$. 
    First, we consider the case that $A_k$ is the \emph{endpoint} of a bad segment in $\mathcal{S}$. 
    \begin{enumerate}
        \item \textbf{\boldmath $A_k$ is a corner of $P$.} Then $A_k$ must be a corner of $F$ as $F \subset P$.
        Hence $A_k$ is at two edges of $P$ and the lemma holds. 
        \item \textbf{\boldmath $A_k$ is in the interior of an edge of $P$.}
        If $A_k$ is a corner of $F$, then the lemma holds. Otherwise, $A_k$ is in the interior of an edge of $F$ that is collinear to the boundary of $P$. 
        We can assign an arbitrarily small region around $A_k$ to $Q_k$ such that $A_k$ is not an endpoint of a bad segment anymore; see \cref{fig:center-on-boundary-P}. 
        Since we do not create any new important sight line in $Q_k$, we do not introduce any new good segments in $\mathcal{S}$. 
        The only new bad segment we introduce is parallel to the boundary edge of $P$ that $A_k$ lies on, which can also be removed from $S$ without breaking the assumption that $S \cup S'$ contains two non-parallel segments--- since there must be a parallel edge of $F$. 
        \begin{figure}
            \centering
            \includegraphics{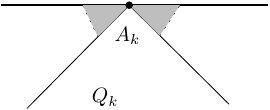}
            \caption{Dealing with the case that $A_k \in \partial P$. The gray region marks the region that we give to $Q_k$. }
            \label{fig:center-on-boundary-P}
        \end{figure}
        \item \textbf{\boldmath $A_k$ is in the interior of $P$.} This can happen when $A_k$ is either a convex corner of all pieces touching it, a concave corner of the piece $Q_k$, or a concave corner of some other piece. In all cases, we transfer a sufficiently small area around $A_k$ to $Q_k$ and thereby make $A_k$ not be an endpoint of a bad segment; see \cref{fig:center-on-boundary}. 
        \begin{cfigure}
            \centering
            \includegraphics{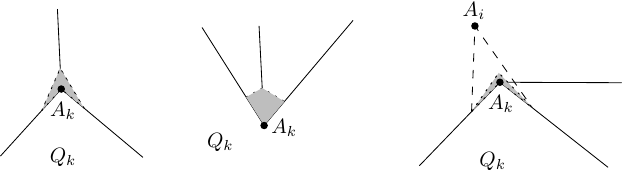}
            \caption{Dealing with the case that $A_k \in \partial Q_k$. The gray region marks the region that we give to $Q_k$. The first figure shows the case that $A_k$ is a convex corner for all pieces; the second shows the case that $A_k$ is a concave corner of $Q_k$; the third shows the case that $A_k$ is a concave corner of other pieces. }
            \label{fig:center-on-boundary}
        \end{cfigure}
    \end{enumerate}

    In the following $A_k$ is not an \emph{endpoint} of a bad segment in $\mathcal{S}$.
    Let $e = C_1C_2$ be a segment in $\mathcal{S}$ and let $C_2$ be the farther end of $e$ from $A_k$.
    Let $C_0$ be the nearest vertex along $A_kC_1$ to $A_k$. 
    Note that $C_0C_1 \subset \partial Q_k$. 
    Note that this implies $C_0 \neq A_k$, as $A_k$ is not the endpoint of a bad segment in $\mathcal{S}$.
    Furthermore, let $C_3$ be the farthest point from $A_k$ in the direction of $C_2$ on the extension of $A_k C_2$ such that $C_2C_3 \subset \partial Q_k$ (it might be the case that $C_3 = C_2$). 
    Let $C_4$ be the next corner of $C_3$ on $\partial Q_k$, and let $C_{-1}$ be the previous corner of $C_0$ on $\partial Q_k$. We again consider multiple cases:
    \begin{enumerate}
        \item \textbf{Another star center $A_i$ is on $A_kC_3$.} According to our assumptions in the lemma statement, no star center is in the interior of a sight line, thus we have $A_i = C_3$. 
        We can then transfer the triangle $C_0C_3C_4$ from $Q_k$ to $Q_i$ and the number of bad segments in $\mathcal{S}$ is thereby reduced; see \cref{fig:end-at-center}.
        \begin{figure}
            \centering
            \includegraphics{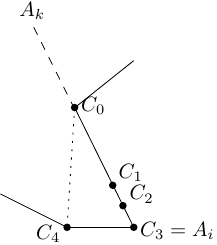}
            \caption{If the sight line ends at another star center $A_i$, we can give the triangle $C_0C_3C_4$ to $Q_i$ and reduce $|\mathcal{S}|$.}
            \label{fig:end-at-center}
        \end{figure}
        \item \textbf{\boldmath No corner of $P$ is in the interior of the sight line $A_kC_2$.} Or equivalently, $A_kC_2$ is not an important sight line. In this case we give a sufficiently small triangle $C'C_0C_2$ to $Q_k$, where $C'$ is sufficiently close to $C_0$ on the segment $C_{-1}C_0$; see \cref{fig:no-corner}. 
        According to \cref{lemma:cut}, all pieces that are cut by the segment $C'C_2$ are still star shaped. 
        This way we reduce the size of $\mathcal{S}$ by removing $C_1C_2$.
        \begin{cfigure}
            \centering
            \includegraphics{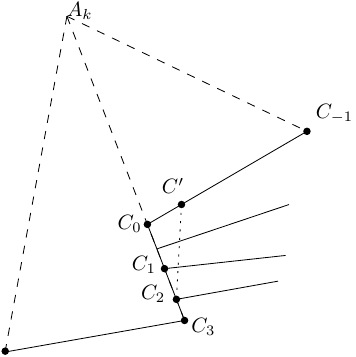}
            \caption{No corners of $P$ on $A_kC_2$.}
            \label{fig:no-corner}
        \end{cfigure}
        
        In the remainder we can assume that the sight line $A_kC_2$ is supported by a corner of $P$, i.e., it is an important sight line. 
        Since $A_kC_2$ is covered by $A_kC_3$, $A_kC_3$ is also an important sight line. Let $D$ be the support (\cref{sec:prelim}) of $A_kC_3$.
        In the remainder we try to remove $DC_3$ from~$\mathcal{S}$. 
        
        \item  \label{case:ak_in_interior_of_p} \textbf{\boldmath $C_3$ is a convex corner of $P$ and not adjacent to $D$.} Note that if $C_3$ was a convex corner of $P$ adjacent to $D$, then $A_kC_3$ would be an extension of an edge $DC_3$ of $P$, which is a good segment in $\mathcal{S}$.
        We can remove a sufficiently small triangle $DC_3C'$ from $Q_k$ for $C'$ close enough to $C_3$ on segment $C_3C_4$, and distribute the triangle to the neighboring pieces by extending the edges that end at $DC_3$; see \cref{fig:end-at-convex-corner}.
        Since we do not create concave corners in any pieces they remain star-shaped.
        The same argument is also applicable if $DC_3$ ends in the interior of an edge of $P$ as we can consider the intersection point as a degenerate convex corner of $P$. 
        \begin{figure}
            \centering
            \includegraphics{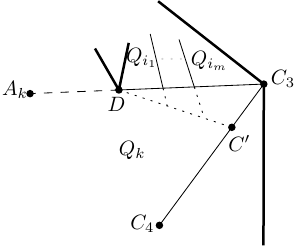}
            \caption{The case when the boundary of $Q_k$ ends at a convex corner of $P$. }
            \label{fig:end-at-convex-corner}
        \end{figure}
        
        In the remainder $C_3$ is in the interior of $P$.
        
        \item \textbf{\boldmath $C_3$ is a concave corner of some piece $Q_i$.} Then $C_3$ is a convex corner of $P \setminus Q_i$ and we can use a similar modification to remove $DC_3$ from $\mathcal{S}$ as in the previous case; see \cref{fig:tripod-2}. 
        \begin{cfigure}
            \centering
            \includegraphics[width=\linewidth]{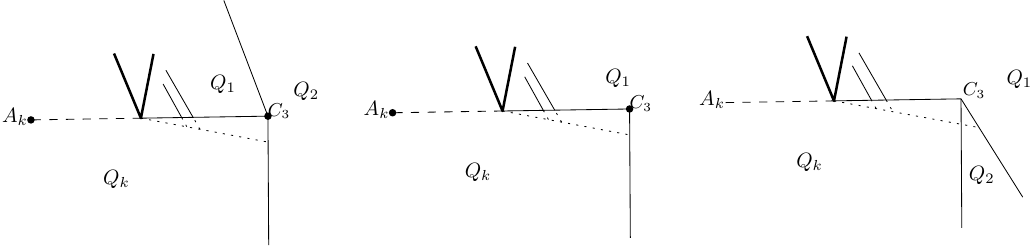}
            \caption{$C_3$ is a concave corner of some piece.}
            \label{fig:tripod-2}
        \end{cfigure}

        In the remainder, $C_3$ a convex corner of all its adjacent pieces.
        According to the assumption of the lemma that no four pieces meet at the same point, $C_3$ is contained in at most three pieces. 
        Since the angle of $Q_k$ at $C_3$ is strictly less than $\pi$, there actually must be exactly three pieces containing $C_3$. 
        With slight abuse of notation, let $Q_0 = Q_k, Q_1, Q_2$ be these three pieces in clockwise order; let $\alpha_i$ be the angle of $Q_i$ at $C_3$; and let $A_i$ be the star center of $Q_i$. 
        
        \item \textbf{$C_3$ is not a tripod point.}
        If an edge at $C_3$ is not covered by an important sight line, we can modify the partition and remove $DC_3$ from $\mathcal{S}$; see \cref{fig:tripod-3}. 
        Otherwise, all edges at $C_3$ are covered by important sight lines. As $C_3$ is not a tripod point, we have that $\alpha_1 = \pi$ or $\alpha_2 = \pi$.
        If $\alpha_1 = \pi$, let $D'$ be the support of $A_1C_3$. Now $A_kC_3$ is on the extension of the diagonal $DD'$ that connects two concave corners of $P$. If $\alpha_2 = \pi$, then $C_3$ is a convex corner of $P \setminus Q_2$ and we can modify the partition similar to case~\ref{case:ak_in_interior_of_p}.
        \begin{figure}
            \centering
            \includegraphics{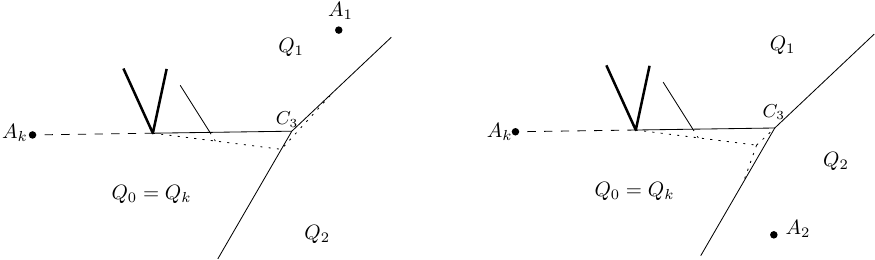}
            \caption{The case when an edge at $C_3$ is not covered by important sight line.}
            \label{fig:tripod-3}
        \end{figure}
    \end{enumerate}

    In the remainder, $C_3$ is a tripod point. Note that if the tripod associated with $C_3$ contains no star center in its pseudo-triangle, then the edge $C_1 C_2$ is a good edge. Hence, the only remaining case is the following.

    \begin{enumerate}
        \item[6.] \textbf{$C_3$ is a tripod point and a star center is in the interior of the pseudo-triangle of this tripod.}
        Let $A_3$ be any star center inside the pseudo-triangle. The tripod partitions $P$ into three regions $R_0,R_1,R_2$, where $R_i$ is the region containing $A_i$. 
        Let $R'_i$ be the intersection of $R_i$ and the pseudo-triangle. 
        First we prove that there exists a segment $XY$ from one leg of the tripod to another leg that contains a star center $A'_3$ and no star center is in the interior of triangle $C_3XY$. 
        Let $R'_j$ be the region that contains $A_3$. 
        Consider the convex hull $\mathcal{C}$ of all corners of this pseudo-triangle and all star centers in $R'_j$. 
        There exists a corner of $\mathcal{C}$ that is a star center $A'_3$, as $A_3$ lies in the interior of $R'_j$.
        Let $\ell$ be an arbitrary tangent of $\mathcal{C}$ at $A'_3$, and let $XY$ be the subsegment of $\ell$ that is contained in the pseudo-triangle. 
        Whichever region $A_3$ lies in, we can modify the partition so that the tripod is broken and $|\mathcal{S}|$ is not increased; see \Cref{fig:nostarcenters}.
        Since the number of possible tripods is finite, we can apply the argument a finite number of times and then either the size of $\mathcal{S}$ is decreased, or $A_k C_3$ becomes a good segment. \qedhere{}
        \begin{cfigure}
            \centering
            \includegraphics[page=20, width=\textwidth]{figs.pdf}
            \caption{Three cases of where the star center $A'_3$ is and how we define the gray triangle to give to $A'_3$.}
            \label{fig:nostarcenters}
        \end{cfigure}
    \end{enumerate}
\end{proof}

\subsection{Tripod Trees} \label{sec:tripod-trees}
We now define what we call the \emph{tripod tree}---a description of the structure of tripods in an optimal solution (see also \cref{fig:tripod-tree}).
Given a star partition, consider the partition that is induced by the tripod legs.
Note that this partition is simply the star partition but with some pieces having been merged.
We construct a bipartite graph $G = (X \cup Y, E)$ as follows:
\begin{itemize}
\item We add a vertex to $X$ for each face of the partition induced by the legs.
\item We add a vertex to $Y$ for each tripod.
\item We add an edge $\{x,y\}$ to $E$ if and only if a tripod leg of $y$ forms part of the boundary of $x$.
\end{itemize}

\begin{observation}
\label{lem:tripod-tree}
Given a star partition, the tripod tree graph $G$ is indeed a tree.
\end{observation}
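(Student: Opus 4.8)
The plan is to show that the bipartite graph $G = (X \cup Y, E)$ is connected and acyclic. For connectivity, note that the union of all tripod legs is a set of segments in the simple polygon $P$, each leg being a chord (since each leg $D_iC$ is a boundary segment of a piece, and the supports $D_i$ lie on $\partial P$ while the tripod point $C$ lies in the interior or on $\partial P$). These legs subdivide $P$ into faces; since $P$ itself is connected and the legs form a planar subdivision, the adjacency graph of faces across legs is connected, and every leg is incident to exactly two faces (its two sides). Hence, contracting each tripod vertex $y \in Y$ (which is adjacent to exactly three faces, the three faces bordering its three legs) still leaves a connected graph, so $G$ is connected.

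For acyclicity, I would argue by an edge-counting / Euler-type argument combined with the fact that distinct tripods have interior-disjoint pseudo-triangles (\Cref{lem:separate-tripods}). Let $t = |Y|$ be the number of tripods and $f = |X|$ the number of faces. Each tripod contributes exactly $3$ legs, and each leg separates two faces, so $|E| = 3t$ (each edge $\{x,y\}$ corresponds to one side of one leg being on the boundary of face $x$; each leg has two sides, hence contributes two edges, but wait—here I must be careful: an edge of $G$ records that a leg of $y$ is part of the boundary of $x$, and a single leg borders two faces, so each leg yields two edges, giving $|E| = 6t$; alternatively, if a leg could border the same face on both sides this would differ, but since legs are genuine chords separating $P$ this does not happen). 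Meanwhile, the $3t$ legs subdivide $P$ into faces; by a planar subdivision count (each new chord that connects two existing boundary points or previously drawn chords increases the face count by exactly one, using that the legs of a single tripod meet only at the tripod point and legs of different tripods are interior-disjoint by \Cref{lem:separate-tripods} hence cross in controlled ways), one obtains a relation of the form $f = 2t + 1$ when all tripods are "independent" in the subdivision. Combining $|X| + |Y| = f + t = 3t + 1$ with $|E| = 6t$ does not immediately give a tree, so the cleaner route is the following.

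Instead, I would prove acyclicity directly via the pseudo-triangles. Suppose for contradiction that $G$ contains a cycle. Since $G$ is bipartite, the cycle alternates between faces and tripods: $x_0, y_0, x_1, y_1, \dots, x_{m-1}, y_{m-1}, x_0$. Each $y_j$ has a pseudo-triangle $\Delta_j$, and by \Cref{lem:separate-tripods} these pseudo-triangles have pairwise disjoint interiors. The three legs of $y_j$ together with the pseudo-diagonals bound $\Delta_j$; crucially, the three faces incident to $y_j$ are the three "outside" regions adjacent to the three legs of $\Delta_j$. One then shows that walking around the cycle traces a closed curve in $P$ that must enclose at least one pseudo-triangle entirely—but a pseudo-triangle's third support corner would then be "trapped", contradicting that it is a concave corner of $P$ on $\partial P$, or contradicting simple-connectivity of $P$. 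More concretely: a cycle in $G$ corresponds to a cyclic sequence of legs $e_0, e_1, \dots$ where consecutive legs $e_j, e_{j+1}$ share a face and the legs $e_{2j}, e_{2j+1}$ belong to a common tripod (meeting at its tripod point). Following the shared faces, this produces a closed Jordan curve $\gamma$ inside $P$; since legs emanate from concave corners of $\partial P$, the curve $\gamma$ must separate $\partial P$ from itself, which is impossible for a simple polygon.

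\textbf{Main obstacle.} The delicate point will be making the "closed curve traced by a cycle must separate $\partial P$" argument fully rigorous: one needs to track, for each tripod vertex on the cycle, which two of its three legs are used, verify that the two used legs plus a path inside the two incident faces concatenate into an arc, and check that the resulting closed curve is actually a Jordan curve (non-self-intersecting) that contains points of $\partial P$ strictly on both sides. This requires carefully using that tripod legs do not cross each other (they are piece boundaries), that the faces are simply connected, and that each support $D_i$ is a corner of $\partial P$—so that the third leg of any tripod on the cycle reaches out to the boundary, forcing boundary points into the interior of $\gamma$. I expect this topological bookkeeping, rather than any hard geometry, to be the crux; the author's proof is likely to package it more slickly using the planarity of $G$ and a direct Euler-characteristic computation $|X| - |E| + |Y| = $ (something), so I would also keep that computation in reserve as the cleaner finish once connectivity is established.
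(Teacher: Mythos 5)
Your plan is sound and rests on the same two pillars as the paper's proof (connectivity of the face-adjacency structure, plus the fact that a tripod's three legs disconnect $P$), but the paper packages the acyclicity half much more compactly than your Jordan-curve bookkeeping. The paper simply observes that $G$ is the dual graph of the partition induced by the legs (with each tripod treated as a degenerate piece), hence connected, and that since each tripod's legs cut $P$ into three components---each containing exactly one of the tripod's three neighbouring faces, and each entirely containing every other face and tripod---every vertex of $Y$ is a cut vertex separating its neighbours; as $G$ is bipartite, every cycle would have to pass through some $y\in Y$ and thus connect two neighbours of $y$ while avoiding $y$, which is impossible. This is exactly the "crossing-parity" contradiction you are reaching for, stated combinatorially, and it makes the topological bookkeeping you identify as the main obstacle evaporate: you never need to verify that the traced curve is a Jordan curve or that a support corner gets "trapped". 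Two smaller remarks. First, your abandoned Euler count actually closes the argument if done correctly: each tripod vertex has degree exactly $3$ (its three legs meet at the tripod point and bound three faces; each face is adjacent to two legs but contributes only one edge $\{x,y\}$), so $|E|=3t$, not $6t$; and since each tripod's legs split one existing face into three, $|X|=2t+1$, whence $|X|+|Y|=3t+1=|E|+1$, which together with connectivity already proves $G$ is a tree. Second, \Cref{lem:separate-tripods} is not needed here; what matters is only that legs of distinct tripods do not cross (they are piece boundaries), which is what guarantees the legs of one tripod lie entirely inside a single component cut off by another.
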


\begin{proof}
    If the tripods are considered degenerate pieces of the partition, then $G$ corresponds to the dual graph of the partition induced by the legs. Thus, $G$ is connected.
    Furthermore, note that every tripod cuts the polygon $P$ into three disconnected pieces, so the corresponding vertex in $Y$ is a cut vertex, which implies that $G$ is a tree. 
\end{proof}

We choose the root of the tripod tree to be the face that contains the first edge of $P$, merely for consistency. For every tripod $\mathcal{T}$ formed by pieces $Q_i, Q_j, Q_k$ where $Q_i$ is contained in the parent face of $\mathcal{T}$, we call the star center $A_i$ of $Q_i$ the \emph{parent star center} of $\mathcal{T}$ and the star centers $A_j, A_k$ of $Q_j, Q_k$ are both called \emph{child star centers} of $\mathcal{T}$. Note that we can directly identify the parent star center of a tripod without the full tripod tree. 

\begin{figure}
    \centering
        \includegraphics[width=0.9\linewidth, page=1]{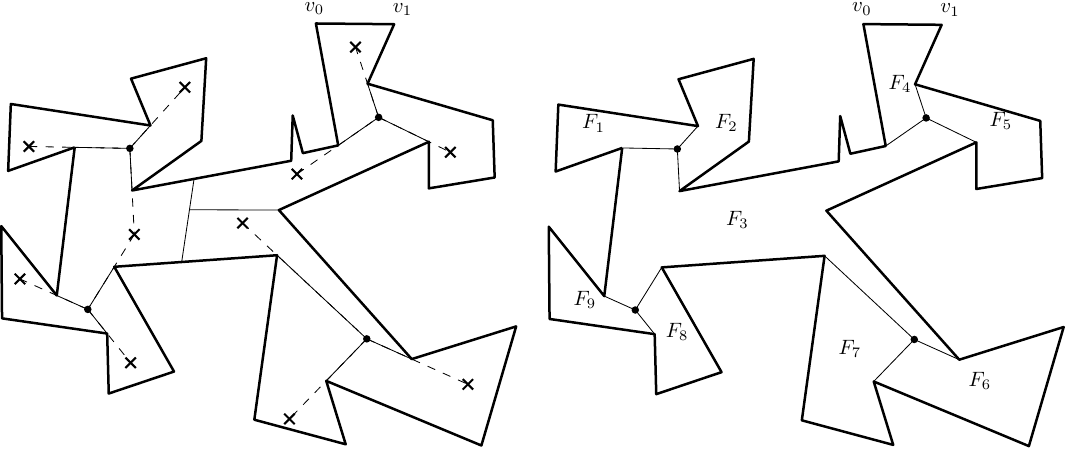}
        \caption{Illustration for a tripod tree. A cross represents a star center, a solid circle represents a tripod point. The left figure shows  star partition and the right figure shows the faces splits by all the tripod legs. }
      \end{figure}
      
       \begin{figure}
        \includegraphics[width=0.9\linewidth, page=6]{tripod-tree.pdf}
        \caption{The combinatorial structure of tripod tree. We make it rooted by selecting a face to be the root, for example the one ($F_4$) containing edge $v_0v_1$. Note that we can easily distinguish parent and children without the full partition. }
    \label{fig:tripod-tree}
\end{figure}

\paragraph{Fake tripod.} In a star partition $\mathcal{Q}$, a \emph{fake tripod} $\mathcal{T}'$ with tripod point $C$ is defined by \emph{two} star centers $A_1, A_2$ of pieces $Q_1, Q_2$ and \emph{three} concave corners $D_1, D_2, D_3$ of $P$ if the following properties hold. 
\begin{itemize}
    \item $A_iC$ is an important sight line of $Q_i$ with support $D_i$, for each $i \in \{1, 2\}$. 
    \item $A_iC \subset \bigcup_{Q \in \mathcal{Q}}\partial Q$ for each $i \in \{1, 2\}$. 
    \item The three angles $D_1CD_2, D_2CD_3, D_3CD_1$ are strictly convex. 
    \item Let $F_1, F_2, F_3$ be the three connected components of $P$ cut by $D_1C \cup D_2C \cup D_3C$, where $A_1 \in F_1, A_2 \in F_2$, $F_3$ contains the first edge of $P$. 
    $D_3$ is a concave corner in $F_3$. 
    The union $Q_1 \cup Q_2 \cup F_3$ contains a (sufficiently small) disk at $C$. 
\end{itemize}
Similarly as for tripods, the three segments $D_1C, D_2C, D_3C$ are called the \emph{legs} of $\mathcal{T}'$, $A_1, A_2$ are called \emph{child star centers} of $\mathcal{T}'$, and the polygon bounded by the shortest paths between pairs of supports $D_1, D_2, D_3$ is called the \emph{pseudo-triangle} of $\mathcal{T}'$. 

Note that for every tripod $\mathcal{T}$, there is exactly one fake tripod $\mathcal{T}'$ with the same tripod point and legs, which is defined by its supports and the two child star centers of $\mathcal{T}$.
We say $\mathcal{T}'$ is the \emph{associated fake tripod} of $\mathcal{T}$. 
\begin{cfigure}
    \centering
    \includegraphics[page=35]{figs.pdf}
    \caption{\emph{Left:} a fake tripod defined by star-centers $A_1, A_2$ and supports $D_1,D_2,D_3$. Note that $A_3'$ is not on the extension of $CD_3$. \emph{Right:} a tripod and its associated fake tripod, since there exists the third star-center $A_3$ on the extension of $CD_3$. The fake tripod is the same as the one in the left figure. } \label{fig:fake-tripod}
\end{cfigure}

\begin{remark} \label{rem:fake-tripod}
    We introduce the concept of \emph{fake} tripods (see also \cref{fig:fake-tripod}) to facilitate our algorithm. The final algorithm will simulate the construction process (described in the following \cref{sec:constproc}) and get a full partition in the end. We can not easily decide whether a (real) tripod exists unless we find both its two child star centers \emph{and} its parent star center. Algorithmically, it is much easier to construct the \emph{fake} tripods in a bottom-up fashion just using the two child star centers, without knowledge of where (or if) a potential third parent star-center might exist. 
\end{remark}

\subsection{Construction Process}\label{sec:constproc}
We now describe an iterative construction process of star centers and fake tripods. 
We will show that there exists an optimal star partition for which all star centers can be constructed using this process in linear steps. 
The construction process is with respect to a star partition $\mathcal{Q}$ and is a process to ``mark'' star centers and fake tripods of $\mathcal{Q}$ as ``constructable''. 
Formally, we call a star center or a fake tripod \emph{constructable} (with respect to $\mathcal{Q}$) if it can be \emph{marked} by the following process. 
At each step in the process, we can do one of the following operations:
\begin{itemize}
    \item Mark a star center $A_k$ at the intersection of two non-parallel segments of the following types:
    \begin{enumerate}
        \item An edge of $P$; 
        \item An edge of the pseudo-triangle of a marked fake tripod; 
        \item An important sight line of a piece $Q_k$, which is on the extension of 
        \begin{itemize}
            \item an edge of $P$; 
            \item a diagonal of $P$ that connects two concave corners of $P$; 
            \item a tripod leg of a tripod $\mathcal{T}$ whose extension contains the parent star center of $\mathcal{T}$, while the corresponding fake tripod $\mathcal{T}'$ of $\mathcal{T}$ is marked. 
        \end{itemize}
    \end{enumerate}
    \item Mark a fake tripod $\mathcal{T}'$ defined by two marked star centers $A_i, A_j$ and three concave corners $D_i, D_j, D_k$ of $P$. 
    Additionally, there must be no star center (marked or unmarked) in the interior of the pseudo-triangle of $\mathcal{T}'$. 
\end{itemize}
An optimal star partition $\mathcal{Q}$ is called \emph{constructable} if all the star centers in $\mathcal{Q}$ is constructable with respect to $\mathcal{Q}$. 

Now comes the major structural result in this section, which gives us a combinatorial way to describe some optimal star partition. 

\begin{theorem}[Construction of optimal star partition]
    \label{thm:constructability}
    There exists a constructable optimal star partition $\mathcal{Q}$. 
\end{theorem}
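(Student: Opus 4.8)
The plan is to start from the restricted coordinate maximum optimal partition guaranteed by \Cref{lem:coordmaxsubset} and \Cref{lem:property-of-coordmax}, and then argue that, after suitable further modifications, the star centers of this partition can be marked by the construction process in the order given by (a refinement of) the tripod tree. The key observation is that the construction process mirrors \Cref{lem:property-of-coordmax} almost verbatim: that lemma says each maximized star center lands on the intersection of two non-parallel segments that are edges of $F$, extensions of edges or concave-concave diagonals of $P$, or extensions of tripod legs whose pseudo-triangle is free of star centers---and these are exactly the segment types allowed when marking a star center, once we know that the relevant fake tripods have already been marked.

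First I would fix a total order in which to process the star centers. Using the tripod tree $G$ of \Cref{lem:tripod-tree} (rooted at the face containing the first edge of $P$), I would process the faces in reverse BFS order (leaves first, root last), and within each face process its star centers one at a time. The point of this order is causality: whenever a star center $A$ is about to be constructed via a tripod leg, that tripod's two \emph{child} star centers lie in descendant faces and have therefore already been marked, so the associated fake tripod can be marked first (it only needs the two child centers and the three supports, by the definition of fake tripod and \Cref{rem:fake-tripod}); and \Cref{lem:separate-tripods} together with the pseudo-triangle-free-of-centers property ensures the fake tripod is legal to mark. The acyclicity needed here is exactly \Cref{lem:tripod-tree}: a tripod is a cut vertex, so dependencies cannot cycle.

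The substantive work is to produce, by iterated modification, an optimal partition in which \emph{simultaneously} (i) no four pieces meet at a point and no star center lies on a sight line (the first unnamed lemma), (ii) every star center satisfies the conclusion of \Cref{lem:property-of-coordmax} with respect to the appropriate restriction polygon $F$, and (iii) no star center lies in the interior of any tripod's pseudo-triangle. I would build this iteratively: process the star centers in the chosen order, and for the current center $A_k$ apply \Cref{lem:coordmaxsubset} to maximize $A_k$ (with all earlier centers fixed) along a direction chosen so that the maximization does not destroy the good configuration already established for the earlier centers---for instance maximizing within the subpolygon $F$ cut off by the already-constructed tripod legs incident to $A_k$. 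Then \Cref{lem:property-of-coordmax} pins $A_k$ to an intersection of two admissible segments. One has to check that the modifications used inside the proof of \Cref{lem:property-of-coordmax} (all of which keep star centers fixed and create no new important sight lines) do not disturb the already-processed centers or the tripods among them; this is where most of the care goes.

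I expect the main obstacle to be handling tripods whose orientation is ``wrong'' relative to the tripod tree, i.e.\ a tripod whose parent star center (in the tree-rooting sense) is not the one that ends up being defined last---because then the tripod-leg rule in the construction process, which requires the \emph{extension} of the leg to contain the parent star center, may not apply. The technical overview flags exactly this: one must, whenever a tripod violates the desired orientation, select a subset of star centers and push them in a chosen direction (again via \Cref{lem:coordmaxsubset}) until the offending tripod disappears, and argue this reorientation process terminates (e.g.\ by a monotone potential such as a lexicographic area or coordinate measure that strictly increases at each reorientation step). Establishing termination, and checking that reorientation does not reintroduce previously removed violations, is the delicate part; once it is done, the remaining tripods have consistent orientation, the construction order above is valid, and every star center is marked by a rule of the construction process, so the resulting partition is constructable.
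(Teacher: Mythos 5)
Your overall strategy matches the paper's: iterate restricted coordinate maximization (\Cref{lem:coordmaxsubset}), use \Cref{lem:property-of-coordmax} to pin each maximized center to admissible segments, and repair tripods whose orientation obstructs the marking rule. However, there is a genuine gap exactly where you flag "the delicate part": you do not actually supply the termination argument for the reorientation process, and the potential you suggest (a lexicographic area or coordinate measure that "strictly increases at each reorientation step") is not obviously monotone, since each repair step re-maximizes along a \emph{different} direction $d$, so no single lexicographic coordinate functional increases across steps. A second, related problem is that you fix the processing order in advance as reverse BFS on the tripod tree of the initial partition; but the repair modifications destroy and create tripods (the offending tripod ceases to be a tripod and a new one may appear elsewhere), so the tree you ordered by is not stable and the precomputed order loses its meaning after the first repair.

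The paper resolves both issues by restructuring the induction. It proves an extension lemma (\Cref{lem:construct-from-middle}): if some star centers and fake tripods are already constructable and a non-constructable center remains, then there is an optimal partition preserving all of them in which \emph{one more} center is constructable. The outer induction is then simply on the number of constructable centers, which is bounded, so no global potential over reorientation steps is needed. Inside the extension lemma, the adaptive choice is an illegal tripod $\mathcal{T}$ all of whose descendant tripods are legal; its missing child center $A$ is re-maximized along the direction perpendicular to the offending sight line, pointing to the unsupported side. This specific choice of $d$ guarantees the new center cannot be blocked by the same leg of $\mathcal{T}$ again, so the only possible new obstruction is an illegal tripod inside a subpolygon with \emph{strictly fewer corners}, and the inner recursion terminates by induction on the corner count. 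Without this (or an equivalent) mechanism, your argument does not close: one cannot rule out that reorientation cycles forever among configurations of misoriented tripods.
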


\begin{remark}
    We can also define a similar construction process if we can only mark tripod. 
    In fact, the two definitions agree on whether a partition is constructable or not. 
    When a fake tripod is used to mark a star center, it must be a tripod; otherwise, there is no need to mark that fake tripod. 
\end{remark}

This theorem also implies that the bit complexity of each star center is $O(n)$. 
\begin{corollary}
    \label{cor:bit-complexity}
    Each star center in a constructable optimal star partition can be encoded by a sequence of $O(n)$ corners of $P$, which specifies the process to mark it. 
\end{corollary}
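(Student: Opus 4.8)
The plan is to bound the number of construction steps by $O(n)$ and then observe that each step adds at most a constant number of corners of $P$ to the encoding of a star center. By \Cref{thm:constructability}, there is a constructable optimal partition $\mathcal Q$, so fix it together with a valid marking sequence witnessing constructability. The key quantity to control is the total number of marking operations performed: each operation either marks a star center or marks a fake tripod, and each such operation can be described by referencing a constant number of previously constructed objects plus a constant number of corners of $P$ (the two concave-corner endpoints of a diagonal, the support and tripod point determining a leg extension, the supports $D_i,D_j,D_k$ of a fake tripod, an edge of $P$, etc.). So if the sequence has length $L$, we can encode any star center by recursively unfolding its definition, and the total data is $O(L)$ corners of $P$; the goal is therefore $L = O(n)$.

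First I would bound the number of fake tripods that ever get marked. A useful marked fake tripod is the associated fake tripod of a genuine tripod of $\mathcal Q$ (by the remark, marking a fake tripod that is not a real tripod is never needed, and we may assume the witnessing sequence only marks such necessary ones). By \Cref{lem:separate-tripods}, distinct tripods of $\mathcal Q$ have interior-disjoint pseudo-triangles, each of which is a subpolygon of $P$ whose boundary uses at least one genuine diagonal of $P$; a standard planarity/Euler argument on the tripod tree $G$ (\Cref{lem:tripod-tree}) — whose $X$-vertices are faces of the leg-partition and whose $Y$-vertices are tripods — shows that the number of tripods is $O(n)$, since the leg-partition refines a planar subdivision of a polygon with $n$ vertices and each tripod contributes three legs sharing the tripod point. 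Concretely, the three supports $D_1,D_2,D_3$ of a tripod are concave corners of $P$, and each concave corner can serve as a support along a bounded number of leg directions (the leg lies on a specific line through that corner), so the tripod count is $O(n)$; hence $O(n)$ fake tripods are marked.

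Next I would bound the number of marked star centers. Every piece of $\mathcal Q$ has a unique star center, and an optimal partition has at most $O(n)$ pieces (a minimum star partition of a simple polygon with $n$ corners uses $O(n)$ pieces, e.g. via the Art Gallery bound $\lfloor n/3\rfloor$ mentioned in the introduction, or directly from the fact that the leg-partition has $O(n)$ faces each containing boundedly many pieces in the relevant reductions). Therefore at most $O(n)$ star-center marking operations occur. Combining, the witnessing marking sequence has length $L = O(n)$. Finally, unfolding: represent each star center by the pair of segments defining it; each segment is of one of the listed types and is pinned down by at most two corners of $P$ together with references to at most one previously marked fake tripod (which in turn references two marked star centers and three corners of $P$). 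Expanding all references along the sequence, the full certificate for a single star center is a subsequence of the length-$O(n)$ marking sequence, so it is described by $O(n)$ corners of $P$, which is exactly the claim.

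The main obstacle I expect is making the $O(n)$ bound on the number of marked fake tripods (equivalently, on the number of tripods) fully rigorous, since a priori one concave corner of $P$ might support tripod legs along several lines and \Cref{lem:separate-tripods} only gives disjointness of pseudo-triangles, not an immediate counting bound. The clean way around this is to run the argument through the tripod tree $G$: $G$ is a tree (\Cref{lem:tripod-tree}), its leaves and internal structure are controlled because each $Y$-vertex (tripod) has degree exactly $3$ and the $X$-vertices correspond to faces of a planar polygonal subdivision of $P$ obtained by inserting $O(\#\text{tripods})$ chords, and a simple Euler-characteristic count forces $\#\text{tripods} = O(n)$; alternatively one argues that each tripod "consumes" a distinct diagonal of $P$ between two concave corners and there are only $O(n^2)$ such diagonals but the disjoint-pseudo-triangle property cuts this to $O(n)$. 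Once that bound is in hand, everything else is bookkeeping.
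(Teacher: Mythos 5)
Your proposal is correct and follows essentially the same route as the paper: the paper also bounds the construction by the size of the (fake) tripod tree, using that each tripod node has exactly two children so the number of tripods is at most the number of leaf faces, that each leaf face contains a distinct star center, and that optimality plus the Art Gallery bound gives at most $\lfloor n/3\rfloor$ star centers; it then unfolds the encoding by induction on the tree with a budget of $4s(\cdot)$ corners per node, which is the per-center version of your ``subsequence of a length-$O(n)$ marking sequence'' argument. The only caveat is that your alternative justification that ``each concave corner can serve as a support along a bounded number of leg directions'' is unsubstantiated and should be dropped in favor of the tree-plus-art-gallery count you also sketch, which is exactly what the paper uses.
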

\begin{remark}
    Using the same proof strategy, we can also prove $O(K)$ bits are enough to encode each star center in a constructable optimal star partition, where $K$ is the total number of bits to encode the input polygon $P$. 
\end{remark}
\begin{proof}
    We will prove that we can encode each star center $A_i$ by $4s(A_i)$ corners, where $s(A_i)$ is the size of the subtree in the fake tripod tree rooted at the face containing $A_i$, and encode each fake tripod point $C$ by $4s(C)$ corners, where $s(C)$ is the size of the subtree in the fake tripod tree rooted at $C$. 

    The proof is by induction on the fake tripod tree from leaf node to root node. 
    For each leaf node of the fake tripod tree, every star center in the corresponding face can only be marked by two lines that each of them are defined by two corners of $P$, since the face does not have any fake tripod point as its child. 
    Now consider the internal nodes of the fake tripod tree. 
    If it corresponds to a fake tripod $\mathcal{T}$, the tripod point $C$ can be encoded by its two child star centers $A_i, A_j$ together with two concave corners of $P$, so we need $s(A_i) + s(A_j) + 2 = 4s(C) - 2 \leq 4s(C)$ corners to encode $C$. 
    If it corresponds to a face $F$, then for any star center $A_k$ in $F$, it can be encoded by two lines, each of them is either defined by two corners of $P$ or defined by a child fake tripod $\mathcal{T}$ of $F$. 
    In all cases, the star center can be encoded by $4s(A_k)$ corners of $P$. 

    It remains to bound the size of the fake tripod tree. 
    According to Chvátal's art gallery theorem, we can partition any polygon into at most $\lfloor n/3\rfloor$ star-shape pieces. 
    Any leaf face in the fake tripod tree contains at least one star center, so the fake tripod tree contains at most $\lfloor n/3\rfloor$ leaves, therefore it has at most $2n/3$ nodes. 
\end{proof}

Instead of proving \cref{thm:constructability} directly, we prove the following stronger lemma, which allows us to extend a ``partially constructable'' optimal solution into a constructable one. This lemma also helps us to prove the correctness of the greedy choice (see \Cref{sec:greedy-choice}) used when choosing tripods, which is the main technique to improve the running time of our dynamic programming algorithm into polynomial time in \cref{sec:algorithm}. 

\begin{lemma}
    \label{lem:construct-from-middle}
    Let $\mathcal{Q}$ be an optimal star partition of $P$ such that some star centers $A_1, A_2, \dots, A_k$ and some fake tripods $\mathcal{T}'_1, \mathcal{T}'_2, \dots, \mathcal{T}'_l$ are constructable with respect to $\mathcal{Q}$. 
    Suppose there exists a star center in $\mathcal{Q}$ which is not constructable with respect to $\mathcal{Q}$, then there exists an optimal solution $\mathcal{Q}'$ containing $A_1, A_2, \dots, A_k, A_{k+1}$ as star centers and $\mathcal{T}'_1, \mathcal{T}'_2, \dots, \mathcal{T}'_l$ as fake tripods, such that $A_1, A_2, \dots, A_k, A_{k+1}$ and $\mathcal{T}'_1, \mathcal{T}'_2, \dots, \mathcal{T}'_l$ are constructable. 
\end{lemma}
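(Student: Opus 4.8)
The plan is to make one more star center of $\mathcal Q$ constructable by pushing it, via restricted coordinate maximization (\Cref{lem:coordmaxsubset}), onto exactly the kind of locus described by \Cref{lem:property-of-coordmax}, while never disturbing the objects that are already marked. We may assume that $A_1,\dots,A_k$ are \emph{all} the constructable star centers of $\mathcal Q$: otherwise $\mathcal Q'=\mathcal Q$ already works, with $A_{k+1}$ taken to be any constructable star center outside the list.

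\emph{Choosing the target.} Consider the tripod tree of $\mathcal Q$, rooted at the face containing the first edge of $P$ (recall that the construction process handles leaf faces first). Pick a face $u$ of maximum depth that still contains a non-constructable star center $A_{k+1}$. By this depth choice, every face strictly below $u$ contains only constructable star centers, so for each child tripod $\mathcal T$ of $u$ both child star centers of $\mathcal T$ lie in $\{A_1,\dots,A_k\}$. To be allowed to \emph{mark} such a $\mathcal T$, its pseudo-triangle must contain no star center; whenever this fails I would apply the local modification from Case~6 of the proof of \Cref{lem:property-of-coordmax}, which breaks the offending tripod by transferring arbitrarily small triangles to the star centers in its pseudo-triangle without moving any star center, and by the same finiteness argument only finitely many such steps are needed (merging faces as tripods disappear, keeping the depth property). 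Afterwards every child tripod of the current face $u$ has an empty pseudo-triangle, and I mark all of them; a tripod to which a breaking step was applied was not among $\mathcal T'_1,\dots,\mathcal T'_l$, so no previously marked object was touched.

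\emph{The maximization.} Let $F$ be the connected component containing $A_{k+1}$ of $\overline u$ with the open pseudo-triangles of $\mathcal T'_1,\dots,\mathcal T'_l$ and of the marked child tripods of $u$ removed; fix every star center except $A_{k+1}$, and choose the direction $d$ so that $A_{k+1}$ cannot get stuck on a leg of the parent tripod of $u$ (e.g., if $A_{k+1}$ is the child star center of that tripod, let $d$ point from the leg into $Q_{k+1}$; arbitrary if $u$ is the root). After normalizing the partition so that no four pieces meet and no star center lies in a sight line (as in the lemma preceding \Cref{lem:property-of-coordmax}, carried out locally so the marked structure survives), \Cref{lem:coordmaxsubset} yields a restricted coordinate maximum partition $\mathcal Q'$ along $d$ within $F$, and \Cref{lem:property-of-coordmax} applies directly, placing $A_{k+1}$ at the intersection of two non-parallel segments, each an edge of $F$ or an important sight line of $Q_{k+1}$ extending an edge of $P$, a diagonal joining two concave corners, or a tripod leg of a tripod with empty pseudo-triangle. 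The choices of $F$ and $d$ are made precisely so that each possibility is a segment type a single marking step may use given $A_1,\dots,A_k$, $\mathcal T'_1,\dots,\mathcal T'_l$ and the marked child tripods of $u$: an edge of $F$ is an edge of $P$, an edge of a pseudo-triangle of a marked fake tripod, or the third leg of a marked child tripod of $u$ (whose extension contains its parent star center, a center in $u$); and although \Cref{lem:property-of-coordmax} also permits an extension of a \emph{non-third} leg, this would force $A_{k+1}$ to be the child star center of the parent tripod of $u$ lying on that leg, which the choice of $d$ prevents. Hence both segments are construction-allowed, so $A_{k+1}$ is constructable with respect to $\mathcal Q'$.

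\emph{Preservation, and the main difficulty.} That $\mathcal Q'$ is still optimal is immediate, as the number of pieces never changed. Since $A_1,\dots,A_k$ did not move, it remains to check that $\mathcal T'_1,\dots,\mathcal T'_l$ are still valid fake tripods that are markable with respect to $\mathcal Q'$ --- which in turn makes the marking sequences witnessing the constructability of $A_1,\dots,A_k$ valid with respect to $\mathcal Q'$. Their pseudo-triangles depend only on corners of $P$, hence are unchanged and still free of star centers (every center either did not move or was restricted to $F$, which avoids them); their legs are determined by the unchanged child centers and supports; and the disk and boundary conditions survive because each modification --- the tripod-breaking steps and the maximization --- can be realized by touching only small regions disjoint from those legs. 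I expect the crux of the write-up to be exactly this bookkeeping together with pinning down $F$ and $d$: one must verify that the stuck locus of $A_{k+1}$ is always a construction-allowed segment --- in particular never a leg or pseudo-triangle of the parent tripod of $u$ --- and that none of the finitely many local modifications destroys a leg of a marked fake tripod or the property that a marked third leg still belongs to a real tripod. Repeatedly applying this lemma, starting from a coordinate maximum optimal partition, then yields \Cref{thm:constructability}.
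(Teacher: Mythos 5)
Your overall strategy matches the paper's: perform a restricted coordinate maximization (\Cref{lem:coordmaxsubset}) on a missing star center inside a suitable feasible region, invoke \Cref{lem:property-of-coordmax}, and check that every possible ``stuck'' locus is a segment the construction process is allowed to use. However, there are two genuine gaps.

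First, and most seriously, you assume that a single maximization suffices. \Cref{lem:property-of-coordmax} only tells you that $A_{k+1}$ ends up on an extension of a tripod leg of a tripod \emph{of the new, maximized partition} $\mathcal Q'$. The maximization re-assigns the pieces (even though the other centers are fixed), so brand-new tripods can appear inside the face $u$, and $A_{k+1}$ can get stuck on a leg of such a tripod whose two other star centers are themselves non-constructable (other missing centers in $u$), or relative to which $A_{k+1}$ sits in a child face. Neither situation is a construction-allowed segment, and neither is excluded by your choice of $d$, which only rules out the parent tripod of the \emph{original} face $u$. This is exactly why the paper's proof does not stop after one maximization: it identifies a deepest ``illegal'' tripod, re-maximizes the stuck child star center in the subpolygon cut off by that tripod with $d$ perpendicular to the offending leg, observes that the new obstruction (if any) must be a new illegal tripod confined to a strictly smaller subpolygon, and recurses until termination. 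Your argument has no analogue of this descent.

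Second, the preservation of the already-marked structure under the maximization is asserted rather than ensured. The restricted coordinate maximization is a global re-partitioning obtained by a compactness argument, not a local modification, so there is no a priori reason the legs of $\mathcal T'_1,\dots,\mathcal T'_l$ or the construction sight lines of $A_1,\dots,A_k$ survive as piece boundaries in $\mathcal Q'$. The paper forces this by cutting the legs $A_iD_i$ and the construction sight lines as \emph{incisions} into the (now weakly simple) polygon and into the feasible region before maximizing, so that every admissible partition must respect them; your claim that the maximization ``can be realized by touching only small regions disjoint from those legs'' is unsupported and is precisely what the incision device is needed for. The remaining ingredients of your write-up (choosing a deepest face with a missing center, restricting to $F$ with pseudo-triangles removed, using \Cref{lem:property-of-coordmax}) are consistent with the paper's argument.
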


\begin{proof}
A star center is called missing if it's not constructable with respect to $\mathcal{Q}$. 
Consider the construction process of all the constructable star centers and fake tripods. 
Throughout the construction process we maintain what we call the \emph{feasible region}, which initially consists of the whole polygon $P$.
When a fake tripod with center $C$ is marked by two marked star centers $A_1, A_2$ and three concave corners $D_1, D_2, D_3$, we partition the polygon into three parts (according to the legs of the fake tripod) and remove the pseudo-triangle of this fake tripod from the feasible region. 
Moreover, we add two segments $A_1D_1, A_2D_2$ as an incision into the boundary of both the polygon\footnote{Now the polygon becomes only weakly simple, even if it was a initially a simple polygon (see \cref{sec:prelim}).} and the feasible region. 
See \cref{fig:feasible-region-and-splitted-polygon} for illustration. 
This way, we fix the two important sight lines $A_1C$ and $A_2C$, and make sure that no star centers are in the interior of the pseudo-triangle of this fake tripod in the following steps. 
When a star center $A_i$ is marked by an important sight line $\ell$ of $Q_i$, we also add $\ell$ as an incision into the boundary of both the polygon and the feasible region. 

\begin{figure}
    \centering
    \includegraphics[page=36]{figs.pdf}
    \caption{The figure shows the feasible region and the partition of the polygon where some star centers and fake tripods are marked. 
    Thick black lines represent how we partition the polygon into three parts. 
    Red segments mark the boundary of feasible region, and some of them are merely incisions. 
    Blue region is a weakly connected components of the feasible region in the left part, with some red segments as incisions. 
    The gray region marks the pseudo-triangle of the marked fake tripod, which is not a part of the feasible region, but the entire feasible region covers everything else.
    Each black circle represents a marked star center or a marked fake tripod, and each cross represents an unmarked star center. }
    \label{fig:feasible-region-and-splitted-polygon}
\end{figure}

Since the constructable fake tripods partition the polygon into disconnected pieces, we can consider the construction process within each connected part independently. Let $P'$ be such a connected part for which at least one star center is not constructable. If there are multiple choices, we will choose one later. Let $F$ be the feasible region inside $P'$. We now restrict our polygon to be~$P'$. 

We perform a restricted coordinate maximization along an arbitrary direction $d$ as described in \cref{lem:coordmaxsubset} within $F$ and with all the constructable star centers fixed. 
By optimality, for each missing star center, the partition is also restricted coordinate-maximal along $d$, within $F$ and with all the other star centers fixed.
Applying \cref{lem:property-of-coordmax}, we know that all the missing star centers lie on the intersection of two non-parallel segments of certain types. 
We call these segments the \emph{crucial segments}.
The goal of the following discussion is to mark
a new star center using crucial segments. 
Therefore, we enumerate the types of crucial segments and check whether they are allowed in the construction process.

\textbf{\boldmath Case 1: A crucial segment is an edge of $F$}.
Every edge of $F$ is an edge of $P$, or a segment on the boundary of a pseudo-triangle, or an important sight line that was used to construct a constructable star center.
Note that all segments on the boundary of a pseudo-triangle must be diagonals that connect two concave corners of $P$. Hence, all the edges of $F$ can be used to mark new star centers. 

\textbf{\boldmath Case 2: A crucial segment is an important sight line in $P'$}.
Note that all corners of $P'$ are either corners of $P$, or constructable star centers, or tripod points of a constructable fake tripod. 
Each tripod point is a convex corner in any connected part separated by the tripod legs, hence they can only induce convex corners of $P'$.
Consequently, a concave corner of $P'$ is either a concave corner of $P$ or a constructable star center.
According to \cref{lem:property-of-coordmax}, no other star center lies on a crucial segment, so the crucial segment that the star center lies on must be supported by a concave corner of $P$, which implies that the crucial segment is also an important sight line when we consider the full polygon $P$. 
If the crucial segment ends at a concave corner of $P'$, it must end at a concave corner of $P$, as crucial segments are not allowed to contain another star center. Hence, it is contained in an extension of a diagonal of $P$ that connects two concave corners of $P$. 

Thus, a crucial segment cannot be used to mark star centers only if it is an extension of a tripod leg, and either the corresponding fake tripod is not constructable, or the star center is not in the parent face of this tripod. 

We now consider the tripod tree of $\mathcal{Q}$. 
We call a tripod \emph{illegal} if there exists a missing star center in its child face. 
Otherwise, we call a tripod \emph{legal}. 
Then, a star center is missing only if one of its crucial segments end at the tripod point of an illegal tripod. 
As there exists a missing star center, there also exists an illegal tripod. 

Let $\mathcal{T}$ be an illegal tripod such that all tripods contained in the subtree rooted at $\mathcal{T}$ are legal. 
Note that there exists a missing star center in at least one of its child faces. 
Hence, all star centers in the two child faces of $\mathcal{T}$, except for the child star centers of $\mathcal{T}$, are constructable. 
Let $A$ be a missing child star center of $\mathcal{T}$. 
We choose $P'$ to be the connected component containing $A$. 

We perform a restricted coordinate maximization on $A$, where the polygon we are going to partition is the part of $P'$ cut by $\mathcal{T}$ that $A$ lies in, and the feasible region is $F$ excluding the pseudo-triangle of $\mathcal{T}$ restricted to the current polygon. 
We choose $d$ to be the direction perpendicular to the important sight line that $A$ ends at the tripod point of $\mathcal{T}$ and points to the unsupported side. 
The choice of direction $d$ makes it impossible that the new star center $A'$ lies on the same important sight lines from $\mathcal{T}$, unless it is also on other two non-parallel crucial segments.

By a similar analysis of the types of crucial segments that the new star center $A'$ lies on, $A'$ cannot be marked in the next step only if it is a child star center of an illegal tripod contained in $P'$, and we resolve this case recursively; see \cref{fig:construction}. Since the subpolygon cut by the new illegal tripod has strictly fewer corners, this recursion will finish in finite steps. Eventually, we can find an optimal partition in which one more star center $A'$ is constructable by the intersection of two non-parallel crucial segments. 
\end{proof}

\begin{figure}[!htb]
    \centering
    \includegraphics[page=2,width=.95\linewidth]{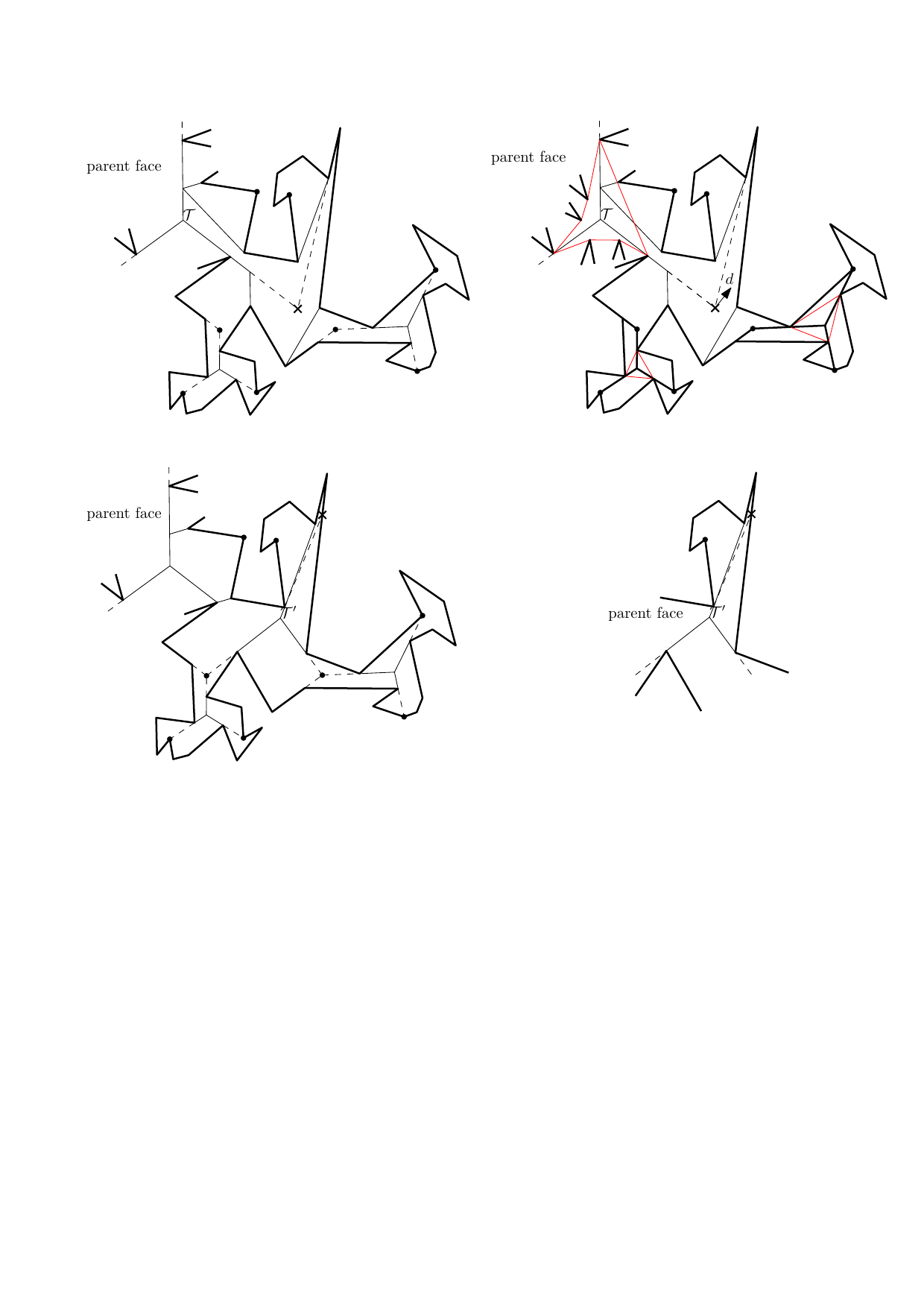}
    \caption{Illustration for the proof of \cref{thm:constructability}. A constructable star center is represented by a solid circle, and the missing star center is represented by a cross. 
    Red segments mark the boundary of the feasible region, and thick black segments mark how we partition the polygon into parts. 
    The top-left figure shows the partition $\mathcal{Q}$, and the top-right figure shows the restricted coordinate maximization problem used in the proof. 
    The bottom-left figure shows the restricted coordinate maximization partition of this problem. 
    The old illegal tripod $\mathcal{T}$ is no longer a tripod, but we get a new illegal tripod $\mathcal{T}'$, which will be resolved recursively. 
    The bottom-right figure shows this new subproblem that we get. 
    Since the number of vertices on the boundary are strictly fewer, the recursion will eventually terminate. 
    }
    \label{fig:construction}
\end{figure}

The following lemma gives us another useful property of constructable optimal star partition, which will be used to design our dynamic programming algorithm in \cref{sec:algorithm}. 

\begin{lemma}
\label{lem:outer-planar}
    Let $\mathcal{Q}$ be a star partition of $P$. If a star center $A_k$ is constructable with respect to $\mathcal{Q}$, then a corner of $P$ appears on the boundary of the piece $Q_k$.
\end{lemma}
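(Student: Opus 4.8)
The plan is to argue by contradiction. Suppose $A_k$ is constructable with respect to $\mathcal{Q}$ but no corner of $P$ lies on $\partial Q_k$; the goal is to derive a contradiction. Since $Q_k\subseteq P$ forces $\operatorname{int}(Q_k)\subseteq\operatorname{int}(P)$ and no corner of $P$ lies in $\operatorname{int}(P)$, every corner of $P$ contained in $Q_k$ must lie on $\partial Q_k$. Under our assumption this gives us two facts: (i) $Q_k$ has no important sight line, because such a line $A_kV\subseteq Q_k$ has its support --- a corner of $P$ --- in its relative interior, hence on $\partial Q_k$; and (ii) $A_k$ is itself not a corner of $P$, because $A_k\in\partial P\cap Q_k$ would put $A_k$ on $\partial Q_k$. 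It therefore suffices to contradict (i) or (ii).

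Next I would unfold the definition of constructability: $A_k$ lies at the intersection of two non-parallel \emph{crucial segments} $s_1,s_2$, each of which is (1) an edge of $P$, (2) an edge of the pseudo-triangle of a marked fake tripod, or (3) an important sight line of $Q_k$. By (i), neither $s_i$ is of type (3). I would then record the structure of the remaining two types: an edge of $P$ lies in $\partial P$, and an edge $s$ of a pseudo-triangle joins two concave corners of $P$ (a shortest path in $P$ bends only at reflex vertices, and the supports of a fake tripod are concave corners of $P$), so either $s$ lies inside a single edge of $P$, hence in $\partial P$, or the relative interior of $s$ lies in $\operatorname{int}(P)$. In all cases, for a crucial segment $s$, the set $s\cap\partial P$ is either all of $s$ or a subset of the endpoints of $s$, each of which is a corner of $P$.

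The case analysis on $(s_1,s_2)$ then runs as follows. If some $s_i\subseteq\partial P$: if the other $s_j$ also lies in $\partial P$, then two non-parallel pieces of $\partial P$ meet at $A_k$, forcing $A_k$ to be a corner of $P$; if instead the relative interior of $s_j$ lies in $\operatorname{int}(P)$, then $A_k\in\partial P\cap s_j$ is an endpoint of $s_j$, again a corner of $P$ --- each contradicting (ii). The only remaining case is that $s_1,s_2$ are pseudo-triangle edges with relative interiors in $\operatorname{int}(P)$, and (since $A_k$ is not an endpoint of either, by (ii)) $A_k$ lies in the relative interior of both; let $s_i$ be an edge of the pseudo-triangle $\Delta_i$ of a marked fake tripod. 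If $\Delta_1=\Delta_2$, then $s_1,s_2$ would be two distinct non-parallel edges of the simple polygon $\Delta_1$ whose relative interiors share the point $A_k$, which is impossible. If $\Delta_1\neq\Delta_2$, then since $A_k\in\operatorname{relint}(s_i)\subseteq\partial\Delta_i$, a small neighbourhood of $A_k$ inside $\operatorname{int}(\Delta_i)$ is an open half-plane bounded by the line through $s_i$; two open half-planes whose bounding lines cross always overlap, so $\operatorname{int}(\Delta_1)\cap\operatorname{int}(\Delta_2)\neq\emptyset$, contradicting \Cref{lem:separate-tripods}. In every case one reaches a contradiction, which would complete the proof.

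The step I expect to be the real obstacle is the last case, where two pseudo-triangle edges cross at the interior point $A_k$: the argument hinges on \Cref{lem:separate-tripods} (disjointness of pseudo-triangle interiors) together with the elementary planar fact that two half-planes whose bounding lines intersect always share a quadrant. The other cases amount to bookkeeping about how crucial segments meet $\partial P$, the only subtlety being mild degeneracies (such as a pseudo-triangle edge running along $\partial P$), which are absorbed uniformly by the observation that the endpoints of such a diagonal are corners of $P$.
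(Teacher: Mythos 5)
Your proof is correct and follows essentially the same route as the paper's: a case analysis on the two types of segments whose intersection marks $A_k$ in the construction process, using that important sight lines are supported by corners of $P$ inside $Q_k$, that pseudo-triangle edges join concave corners of $P$, and \Cref{lem:separate-tripods} for the two-pseudo-triangle case. The only difference is the cosmetic reframing as a proof by contradiction (which lets you dismiss the sight-line case up front), and your handling of degeneracies is at least as careful as the paper's.
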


\begin{proof}
    We consider the different cases in the construction process which can lead to the star center $A_k$ being marked.
    If $A_k$ is marked by an important sight line $\ell$ of $Q_k$, then the support $D'$ of $\ell$ is a corner of $P$ in $Q_k$, and the lemma holds. 
    If $A_k$ is at the intersection of two edges of $P$, $A_k$ must be a corner of $P$. 
    If $A_k$ is at the intersection of an edge of $P$ and an edge of a pseudo-triangle, $A_k$ must be a corner of $P$, as the intersection of the pseudo-triangle and the boundary of $P$ is just a set of concave corners of $P$. 
    If $A_k$ is at the intersection of two non-parallel edges from the same pseudo-triangle, $A_k$ must be a corner of this pseudo-triangle, and it therefore is also a concave corner of $P$. 
    If $A_k$ is at the intersection of two non-parallel edges from different pseudo-triangles, according to \cref{lem:separate-tripods}, $A_k$ must be a corner of one pseudo-triangle, therefore $A_k$ is also a corner of $P$. 
\end{proof}

From this lemma, we directly have the following corollary, which will be used to prove some combinatorial properties of an optimal partition in \cref{sec:algorithm}. 

\begin{corollary}
    \label{cor:outer-planarity}
    For any constructable optimal star partition $\mathcal{Q} = Q_1, \dots, Q_k$, we have $Q_i \cap \partial P \neq \emptyset$, that is, every star-shaped piece touches the boundary of $P$. 
\end{corollary}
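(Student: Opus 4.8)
The plan is to read this off directly from \cref{lem:outer-planar}. By hypothesis $\mathcal{Q} = Q_1, \dots, Q_k$ is a \emph{constructable} optimal star partition, so by the definition of constructability every one of its star centers $A_1, \dots, A_k$ is constructable with respect to $\mathcal{Q}$. Fix an arbitrary index $i \in \{1, \dots, k\}$. Applying \cref{lem:outer-planar} to the star center $A_i$ produces a corner $v$ of $P$ that lies on $\partial Q_i$. Since every corner of $P$ is by definition a point of the boundary $\partial P$, the point $v$ witnesses $Q_i \cap \partial P \neq \emptyset$. As $i$ was arbitrary, the claim follows for every piece.

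There is essentially no obstacle in this step: the entire combinatorial content sits in \cref{lem:outer-planar}, whose proof already performs the case analysis over the ways a constructable star center can be marked (important sight line of $Q_k$, intersection of two edges of $P$, edge of $P$ meeting a pseudo-triangle edge, two edges of the same pseudo-triangle, or two edges of distinct pseudo-triangles via \cref{lem:separate-tripods}), and in each case exhibits a corner of $P$ on $\partial Q_k$. The only thing to check here is the trivial translation between ``a corner of $P$ appears on the boundary of $Q_k$'' and ``$\partial Q_k$ contains a point of $\partial P$'', which is immediate since the corners of a (weakly) simple polygon $P$ are boundary points of $P$, as fixed in \cref{sec:prelim}.

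I would also remark, for context, that this corollary is what gives the ``outer-planar'' structure exploited later: each piece touching $\partial P$ is exactly what makes it possible to separate a subpolygon from the rest of $P$ using the short/long separators of the dynamic program. One may optionally note that constructable optimal partitions exist at all by \cref{thm:constructability}, but the statement of the corollary is a property of \emph{any} given constructable optimal partition and hence needs no existence argument.
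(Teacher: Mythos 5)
Your proof is correct and matches the paper exactly: the paper derives \cref{cor:outer-planarity} as an immediate consequence of \cref{lem:outer-planar}, just as you do, with the only content being that constructability of the partition means every star center is constructable and that a corner of $P$ on $\partial Q_i$ is a point of $\partial P$.
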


\section{Properties of Area Maximum Partitions}
\label{sec:area-max}

The objective of this section is to compute a set of polynomially many points that contains all the Steiner points for some optimal solutions, given all the star centers of an arbitrary optimal solution. 
We will work on a constructable optimal star partition, with all the construction lines fixed as incisions, and analyze the position of the Steiner points in each connected components independently, as we did in the proof of \cref{thm:constructability}. 
Recall \cref{lem:outer-planar}, all the star-shaped pieces still touch the outer boundary of the input polygon $P$. 

Consider a weakly simple polygon $P'$ with potentially some incisions, a sequence of points $\mathcal A=(A_1,\ldots,A_k)$, and a star partition $\mathcal Q=(Q_1,\ldots,Q_k)$ of the interior of $P'$ where $A_j$ is a star center of $Q_j$.
Recall that $\mathcal Q$ is called \emph{area maximum} with respect to $\mathcal A$ if the vector of areas $a(\mathcal Q)=\langle a(Q_1),\ldots,a(Q_k)\rangle$ is maximum in lexicographic order among all partitions of $P$ with star centers $\mathcal A$.
In \Cref{apx:existence} we argue that this notion of area maximum partition is well-defined.
Note that in the definition of area maximum partitions, the positions of the star centers are fixed.
In this section, we prove some properties of area maximum partitions, in particular that all corners of pieces (i.e.\ Steiner points) are in ``nice'' spots.

Recall that a \emph{sight line} of a piece $Q_i$ is a segment of the form $r=A_iC$, where $A_i$ is the star center of $Q_i$ and $C$ is a corner of $Q_i$.
Here, the point $C$ is called the \emph{end} of $r$.

\begin{lemma}\label{lem:newboundaryinsightlines}
Consider an optimal area maximum partition $\mathcal Q$ with a given set of star centers $\mathcal A$.
For any two pieces $Q_1,Q_2\in\mathcal Q$, let $\gamma \coloneqq \partial Q_1\cap\partial Q_2$ be their shared boundary.
Then $\gamma$ is
\begin{itemize}
\item
empty, or

\item
a single point, or

\item
a single line segment contained in a sight line of $Q_1$ or $Q_2$, or

\item
two adjacent line segments, each contained in a sight line of $Q_1$ or $Q_2$.
\end{itemize}
\end{lemma}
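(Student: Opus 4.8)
The plan is to argue by contradiction, assuming that the shared boundary $\gamma = \partial Q_1 \cap \partial Q_2$ contains a sub-segment $\sigma$ that is not contained in any sight line of $Q_1$ or $Q_2$, and then to exhibit a modification of the partition that strictly increases the area vector $a(\mathcal{Q})$ in lexicographic order, contradicting area maximality. First I would set up the following basic observation: since $Q_1$ and $Q_2$ are both star-shaped with fixed centers $A_1$ and $A_2$, and they are interior-disjoint, the shared boundary $\gamma$ is a polygonal curve. At any relatively open sub-segment $\sigma \subset \gamma$, the two pieces lie on opposite sides of the line containing $\sigma$ (locally), so one of $A_1, A_2$ is on each side (or one of them could be on the line itself). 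The key point is that if $\sigma$ is not contained in a sight line of $Q_1$, then the triangle $\mathrm{conv}(A_1 \cup \sigma)$ is contained in $Q_1$ (since $Q_1$ is star-shaped with center $A_1$ and $\sigma \subset Q_1$), and moreover a small neighbourhood of $\sigma$ on the $A_1$-side is interior to $Q_1$ unless $\sigma$ is blocked by the boundary of $Q_1$ — but a sub-segment of $\partial Q_1$ that is a shared boundary with another piece is, by definition, blocked on the other side by $Q_2$, not by $Q_1$ hitting $\partial P$. So if $\sigma$ is not on a sight line of $Q_1$, then $Q_1$ could be enlarged across $\sigma$ into territory currently held by $Q_2$, and symmetrically with roles swapped.

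Next I would make this precise. Suppose $\gamma$ contains a maximal relatively-open segment $\sigma = UV$ on which neither $Q_1$ nor $Q_2$ is ``sight-line-constrained'', meaning $\sigma$ lies on neither a sight line of $Q_1$ nor a sight line of $Q_2$. Orient things so that $Q_1$ is locally on the left of $\sigma$ and $Q_2$ on the right; then $A_1$ is (weakly) on the left and $A_2$ (weakly) on the right, and since $\sigma$ is not on a sight line of $Q_1$, in fact $A_1$ is strictly on the left, and similarly $A_2$ strictly on the right. Consider pushing a thin sliver $R$ of $Q_2$ adjacent to $\sigma$ over to $Q_1$: let $R$ be a sufficiently small region on the $Q_2$-side of $\sigma$, say bounded by $\sigma$ and a parallel segment at distance $\varepsilon$. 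I must check two things. First, $Q_1 \cup R$ is still star-shaped with center $A_1$: every point $x \in R$ is close to $\sigma \subset Q_1$, and since $A_1$ sees all of $\sigma$ from strictly one side, for small enough $\varepsilon$ the segment $A_1 x$ stays inside $Q_1 \cup R$ (here one uses that $\sigma$ is in the relative interior of a single edge of $\partial Q_1$ away from any corner where a sight line could obstruct, and compactness to choose a uniform $\varepsilon$). Second, $Q_2 \setminus R$ is still star-shaped with center $A_2$: this is exactly \Cref{lemma:cut} applied with the half-plane on the $Q_1$-side of the line through the far edge of $R$ — $A_2$ is not in that half-plane (it is strictly on the $Q_2$ side), so removing the connected component $R$ keeps $Q_2$ star-shaped. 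This modification strictly increases $a(Q_1)$ while only the later pieces change, so if $Q_1$ comes before $Q_2$ in the order this already contradicts lexicographic maximality; if $Q_2$ comes first we instead move a sliver of $Q_1$ into $Q_2$ (the argument is symmetric since $\sigma$ is on a sight line of neither piece), strictly increasing $a(Q_2)$ — wait, that decreases $a(Q_1)$ which comes first, so I need to be careful.

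To handle the ordering cleanly I would instead argue as follows: whichever of $Q_1, Q_2$ has the smaller index — say $Q_1$ — I move a sliver of $Q_2$ into $Q_1$, which is only possible using that $\sigma$ is not on a sight line of $Q_1$ (I do not need the condition on $Q_2$'s sight lines for the \emph{feasibility} of enlarging $Q_1$; I need it only to rule out $\sigma$ being forced). Hmm, but the lemma statement allows $\sigma$ to be on a sight line of \emph{either} piece. So the correct contradiction hypothesis is: $\sigma$ is on a sight line of neither piece. Then enlarging $Q_1$ (the lower-indexed one) across $\sigma$ is feasible (star-shapedness of $Q_1 \cup R$ uses $\sigma \not\subset$ sight line of $Q_1$; star-shapedness of $Q_2 \setminus R$ uses \Cref{lemma:cut} and only needs $A_2 \notin$ the relevant half-plane, which holds because $A_2$ is strictly on the $Q_2$-side — and if $A_2$ were on $\sigma$'s line then $\sigma$ would be on a sight line of $Q_2$, excluded). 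This strictly increases $a(Q_1)$, contradicting area-maximality. Hence every sub-segment of $\gamma$ is on a sight line of $Q_1$ or of $Q_2$. Finally, to get the ``at most two adjacent segments'' conclusion: $\gamma$ is connected (both $Q_1,Q_2$ are simply connected and star-shaped, so their intersection's boundary contribution is connected — I would justify this via star-shapedness), and $\gamma$ is a union of maximal segments each lying on a sight line of $Q_1$ or $Q_2$; a sight line of $Q_1$ is a segment emanating from $A_1$, and $\partial Q_1$ crosses such a ray in a controlled way, so $\gamma$ restricted to the portion ``charged'' to $Q_1$ lies on a single ray from $A_1$ hence is a single segment, and likewise for $Q_2$, giving at most two segments total, necessarily adjacent since $\gamma$ is connected.

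The main obstacle I anticipate is the careful verification that the sliver modification is genuinely feasible — in particular choosing a uniform $\varepsilon$ so that $Q_1 \cup R$ remains star-shaped when $\sigma$ is a full maximal sub-segment rather than an infinitesimal piece (corners of $\partial Q_1$ at the endpoints of $\sigma$, and the possibility that other parts of $\partial Q_1$ come close to $R$, need to be dealt with, likely by first shrinking $\sigma$ to a small interior segment) — and the topological claim that $\gamma$ is connected and that the ``$Q_1$-charged'' part of $\gamma$ is a single segment; this last point is where I would lean most heavily on the star-shapedness of both pieces with respect to their fixed centers.
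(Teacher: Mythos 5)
Your local ``sliver'' argument is a sound way to show that every edge of $\gamma$ lies on a sight line of $Q_1$ or $Q_2$ (and it is close in spirit to the rotation argument the paper uses elsewhere), but the two steps you defer to the end are exactly where the real content of the lemma lies, and both have genuine gaps. First, connectivity of $\gamma$ does \emph{not} follow from star-shapedness of the two pieces: two interior-disjoint star-shaped polygons can perfectly well share a disconnected boundary (e.g.\ a triangle sitting inside a slightly larger triangular notch of a star-shaped piece, touching two sides of the notch but not the third). The paper's proof of connectivity instead uses \emph{optimality of the piece count}: if $G_1,G_2\in\gamma$ are in different components, the quadrilateral $A_1G_1A_2G_2$ has its boundary in $Q_1\cup Q_2$ but not its interior, so it traps a third piece that could be absorbed into $Q_1$ and $Q_2$, contradicting minimality. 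Your proposal never invokes minimality, so this step cannot be repaired within your framework as written.

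Second, your claim that the portion of $\gamma$ ``charged'' to $Q_1$ lies on a single ray from $A_1$ is asserted but not proved, and your local contradiction cannot establish it: a three-edge $\gamma$ alternating between a sight line of $Q_1$, a sight line of $Q_2$, and a second, non-collinear sight line of $Q_1$ has every edge on a sight line, so no sliver move is ever triggered, yet it violates the lemma. Excluding this requires a global argument: the paper observes that traversing $\gamma$ from its endpoint $S$ to its endpoint $T$ winds monotonically around each of $A_1$ and $A_2$, so $\gamma$ is trapped in the quadrilateral $H=A_1SA_2T$; by connectivity $H\subseteq Q_1\cup Q_2$, and then the lexicographically area-maximum way to split $H$ between the two pieces is explicitly computed (all of $H$ to the higher-priority piece when $S,T$ are convex corners of $H$, otherwise a split along $SS'\cup S'T$ for the intersection point $S'$ of line $A_1S$ with $A_2T$), which directly yields the at-most-two-adjacent-segments conclusion. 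You would need to add both the minimality-based connectivity argument and this quadrilateral reassignment step for the proof to go through.
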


\begin{figure}
\centering
\includegraphics[page=26]{figs.pdf}
\caption{
Situations in the proof of \Cref{lem:newboundaryinsightlines}.
\emph{Left:}
If $\gamma$ is not connected, $F$ must contain a third piece which can be subsumed by $Q_1$ and $Q_2$, which is a contradiction.
\emph{Middle:}
When $S$ and $T$ are convex, we have $\gamma=SA_2\cup A_2T$.
\emph{Right:}
When $S$ is concave, we have $\gamma=SS'\cup S'T$.
}
\label{fig:boundaryinsightlines}
\end{figure}

\begin{proof}
We first prove that $\gamma$ is connected.
Otherwise, there exist points $G_1,G_2\in\gamma$ which are not connected by $\gamma$; see \Cref{fig:boundaryinsightlines} (left).
Let $A_i\in\mathcal A$ be the star center of $Q_i$.
Since $G_1$ and $G_2$ are not connected by $\gamma$, the quadrilateral $F=A_1G_1A_2G_2$ is not contained in $Q_1\cup Q_2$, but the boundary $\partial F$ is contained.
Hence, $F$ contains a third piece from $\mathcal Q$.
However, the quadrilateral $F$ can be completely assigned to $Q_1$ and $Q_2$ (possibly after being split in two), so the partition $\mathcal Q$ was not optimal, which is a contradiction.
Hence, $\gamma$ is connected.

Let $S$ and $T$ be the endpoints of $\gamma$.
If $S=T$, the shared boundary is a single point which must be a corner of at least one of the two pieces $Q_1,Q_2$, since otherwise their shared boundary would be longer.

Otherwise, let us traverse $\gamma$ from $S$ to $T$; see \Cref{fig:boundaryinsightlines} (middle and right).
Since $Q_1$ and $Q_2$ are star-shaped, we move around $A_1$ in a monotone way, either clockwise or counterclockwise, and we move in the opposite direction around $A_2$.
Hence, $\gamma$ is contained in the quadrilateral $H=A_1SA_2T$.
Assume without loss of generality that $A_1$ had higher priority than $A_2$ when we maximized the areas of the pieces in $\mathcal Q$.

If $S$ and $T$ are both convex corners of $H$, then all of $H$ can be seen from $A_1$, so we can assign the quadrilateral $H$ to $Q_1$. Then $\gamma$ is a continuous part of $SA_2\cup A_2T$ with $A_2 \in \gamma$, so $\gamma$ is contained in two sight lines of $Q_2$.

Otherwise, assume without loss of generality that $S$ is concave and $T$ is convex.
Let $S'$ be the intersection between the line containing the segment $A_1S$ and the segment $A_2T$.
Then we maximize the area of $Q_1$ by assigning the triangle $A_1S'T$ to $Q_1$ and the rest of $H$ (which is the triangle $A_2SS'$) to $Q_2$.
Hence, we have that $\gamma$ is a continuous part of $SS'\cup S'T$ with $S' \in \gamma$, so $\gamma$ is contained in sight lines of $Q_1$ and $Q_2$, respectively.
Note that it may happen that $S'=T$, so that one of these segments is degenerate.
\end{proof}

We say that a point $B$ is \emph{supporting} a sight line $r=AC$ from a star center $A$ if $B\in r\setminus\{A\}$.
The following lemma characterizes all edges of pieces of an area maximum partition, using \Cref{lem:newboundaryinsightlines}.

\begin{lemma}\label{lem:sltypes}
Consider a piece $Q$ with star center $A$ of an optimal area maximum partition.
Let $r=AC$ be a sight line of $Q$ which contains an edge on the boundary of $Q$.
Then $r$ is of one of the following types:
\begin{enumerate}[label=(\roman*)]
\item \label{sl:1}
$r$ is supported by a corner of $P'$, which is not a star center, or

\item \label{sl:2}
$r$ is supported by a star center of another piece, or

\item \label{sl:3}
$C$ is the end of two non-parallel sight lines of type \ref{sl:1} in other pieces.
\end{enumerate}
\end{lemma}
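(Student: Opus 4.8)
The plan is to combine \Cref{lem:newboundaryinsightlines} with a short exchange argument that uses area-maximality. Write $r=AC$, and assume without loss of generality that $C$ is the endpoint farthest from $A$ of the maximal sub-segment of $r$ lying on $\partial Q$; let $e\subseteq r$ be the edge of $Q$ on $r$ incident to $C$. Since $C\ne A$, every point of $r\setminus\{A\}$ may serve as a support of $r$; hence if $r\setminus\{A\}$ contains a corner of $P'$ that is not a star center we are in case~\ref{sl:1}, and if it contains a star center we are in case~\ref{sl:2} (that star center must belong to a piece other than $Q$, since $A$ is the center of $Q$ and is excluded). From now on I assume $r\setminus\{A\}$ contains neither a corner of $P'$ nor a star center, and I must establish case~\ref{sl:3}.

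The engine is the following claim: \emph{if $\sigma$ is an edge shared by two pieces $Q_a,Q_b$ of the area-maximum partition and $Q_a$ has higher priority in the area maximization, then either $\sigma\subseteq\partial P'$, or $\sigma$ lies on a sight line of $Q_a$ of type~\ref{sl:1}.} Indeed, suppose $\sigma\not\subseteq\partial P'$ and $\sigma$ is not on the boundary of the visibility region of $A_a$ within $P'$. Then $A_a$ sees a small sliver just beyond $\sigma$ on the side of $Q_b$, so we may transfer that sliver from $Q_b$ to $Q_a$: the enlarged $Q_a$ stays star-shaped, $Q_b$ stays star-shaped by \Cref{lemma:cut}, no other piece changes, and $a(Q_a)$ strictly increases --- contradicting area-maximality since $Q_a$ has higher priority. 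Hence $\sigma$ lies on the boundary of the visibility region of $A_a$, and, being disjoint from $\partial P'$, on a chord through $A_a$ and a reflex corner $D$ of $P'$ with $D$ between $A_a$ and $\sigma$; this chord carries a sight line of $Q_a$ with support $D$, i.e.\ one of type~\ref{sl:1}.

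Now apply this to $r$. If $e\subseteq\partial P'$, then $r$ is collinear with an edge $f$ of $P'$, and walking along $r$ from $A$ we meet a corner of $P'$ (the first point of $\partial P'$ reached, or an endpoint of $f$) inside $r\setminus\{A\}$, contradicting our assumption --- except for the degenerate configuration $A\in f$, $r\subsetneq f$, which folds into the analysis at $C$ below. So $e$ is shared with a piece $Q'$. If $Q$ has higher priority than $Q'$ along $e$, the claim makes $r$ --- the sight line of $Q$ through $A$ carrying $e$ --- supported by a reflex corner $D$ of $P'$ lying between $A$ and $e\subseteq r$, so $D\in r\setminus\{A\}$ (unless $D=A$, a boundary case), forcing case~\ref{sl:1} or~\ref{sl:2}, a contradiction. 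Hence $Q'$ has higher priority; by the claim, $e$ lies on a type-\ref{sl:1} sight line $s'$ of $Q'$, necessarily collinear with $r$. If the support of $s'$ lies in $r\setminus\{A\}$ we land again in case~\ref{sl:1}/\ref{sl:2}; otherwise \Cref{lem:newboundaryinsightlines} applied to $Q,Q'$ shows that $C$ is a corner of $Q'$ (it is a corner of $Q$ on the shared boundary, hence an endpoint or bend of it, and the bend of the shared boundary cannot be the star center of $Q'$ without placing that center on $r$), so $A'C$ is a sight line of $Q'$, collinear with $s'$ and hence of type~\ref{sl:1}, with end $C$. This is the first witness for case~\ref{sl:3}.

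For the second witness, run the same argument on the other edge $e'$ of $Q$ at $C$ and the piece $Q''$ on its far side. Because $e'$ is not collinear with $r$, $Q$ cannot be the higher-priority side along $e'$ (that would force $e'$ onto the line through $A$ and $C$), so $Q''$ has higher priority and the claim yields a type-\ref{sl:1} sight line of $Q''$ with end $C$, non-parallel to $s'$. If $Q''=Q'$ (one piece wrapping around $C$), the two required sight lines come instead from the two-segment case of \Cref{lem:newboundaryinsightlines} for $Q,Q'$, whose ``bend at the center'' sub-case is again excluded by our assumption; the two segments are then type-\ref{sl:1} sight lines of $Q'$ with common end $C$. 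Either way, $C$ is the end of two non-parallel type-\ref{sl:1} sight lines of other pieces, which is case~\ref{sl:3}. I expect the bookkeeping around $C$ to be the main obstacle: cleanly dispatching the degenerate placements in which a star center or a reflex corner of $P'$ lies on the line of $r$ but outside $[A,C]$ (or at $A$), and verifying in every sub-case of \Cref{lem:newboundaryinsightlines} that the witnessing sight lines can be taken to \emph{end} at $C$ rather than merely to pass through it.
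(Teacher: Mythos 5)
Your route is genuinely different from the paper's --- instead of perturbing the common meeting point $C$ of all sight lines ending there, you try to pin each of the two edges of $Q$ at $C$ separately through a priority/visibility-polygon claim about an arbitrary shared edge --- but two of the difficulties you flag at the end are real gaps, not bookkeeping. First, your engine claim concludes ``type~\ref{sl:1}'', yet the visibility-polygon argument only shows that the blocking vertex $D$ is a \emph{reflex corner of $P'$}, and in the setting where this lemma is applied such a corner can be a star center: the incisions added in the proof of \Cref{lem:dp-points} end at star centers, which become reflex corners of $P'$ and cast windows. In that event your witnessing sight lines are of type~\ref{sl:2}, not type~\ref{sl:1}, and case~\ref{sl:3} as stated does not follow. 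The paper devotes its entire Case~2.2 to exactly this complication: when a sight line at $C$ is supported by star centers, one first transfers a triangle to the farthest supporting star center $A_k$ so that the role of that sight line is taken over by the unsupported segment $A_kC$, and only then perturbs $C$. Nothing in your argument substitutes for this exchange.

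Second --- and you name it yourself as the main obstacle --- your witnesses are sight lines \emph{containing} the edges $e,e'$ of $Q$ at $C$, not sight lines \emph{ending} at $C$. The configuration in which $\partial Q'$ passes straight through $C$, so that $C$ is an interior point of a sight line (or of an edge of $P'$) and not a corner of $Q'$, is not excluded by anything you prove: \Cref{lem:newboundaryinsightlines} only gives that $C$ is an endpoint or bend of the shared boundary of $Q$ and $Q'$, and an endpoint of the shared boundary need not be a corner of $Q'$. The paper rules this configuration out by a separate argument (its Case~1: rotate $r$ about $A$ clockwise or counterclockwise depending on which side of $e$ carries the higher priority, which strictly improves the area vector either way), and it also needs the observation that at least three sight lines end at $C$ before it can extract two non-parallel type-\ref{sl:1} witnesses. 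Your proposal has no counterpart to either step, so as written it does not establish case~\ref{sl:3}. The underlying idea of localizing the exchange to one shared edge at a time is attractive and could perhaps be salvaged, but it would need the Case~1-style rotation to force $C$ to be a genuine corner of the neighbouring pieces and the Case~2.2-style exchange to upgrade type-\ref{sl:2} supports, at which point it largely reconverges with the paper's proof.
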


\begin{proof}
Suppose that a sight line $r=AC$ is supported by no corner of $P'$ and no star center of another piece.
We will prove that the end $C$ must be the end of two sight lines of type \ref{sl:1} of other pieces.
We consider multiple cases.

\begin{figure}
\centering
\includegraphics[page=28]{figs.pdf}
\caption{Situations in the proof of \Cref{lem:sltypes}.
In Case 1.1 or Case 1.2, we can increase the area of $Q$ or $Q_1,Q_2,Q_3$ by rotating $r$ clockwise or counterclockwise, respectively.}
\label{fig:lem:sltypes:rotate}
\end{figure}

\textbf{\boldmath Case 0: $C$ is a corner of $P'$.} In this case, $r$ is supported by $C$ and of type \ref{sl:1} or \ref{sl:2}. 

\textbf{\boldmath Case 1: $C$ is an interior point of another sight line or an edge of $P'$.}
Let us denote this sight line or edge by $e$; see~\Cref{fig:lem:sltypes:rotate}.
Let $f=DC$ be the edge of $Q$ contained in $r$.
Assume without loss of generality that $r$ is horizontal with the end $C$ to the right and that the interior of $Q$ is above $f$.
Then some other pieces $Q_1,\ldots,Q_i$ are below $f$.
Recall that in an area maximum partition, we maximize the vector of areas in a specific lexicographic order; in other words, each piece has a distinct priority when maximizing the areas. In both of the following cases we obtain a contradiction.

\textbf{\boldmath Case 1.1: $Q$ has higher priority than all of $Q_1,\ldots,Q_i$.}
In this case, we can expand $Q$ a bit by rotating $r$ a bit clockwise around $A$, thus ``stealing'' some area from the pieces $Q_1,\ldots,Q_i$ and increasing the area vector with respect to the lexicographic order.

\textbf{\boldmath Case 1.2: One of the pieces $Q_1,\ldots,Q_i$ has higher priority than $Q$.}
In this case, we can rotate $r$ a bit counterclockwise, thus expanding the pieces $Q_1,\ldots,Q_i$ and increasing the area vector.
We conclude that $C$ is not an interior point of another sight line or an edge of $P$.

Note that if $r$ does not fall into case 0 or case 1, $C$ must be in the interior of $P'$.

\begin{figure}[htbp]
    \centering
    \includegraphics[page=37]{figs.pdf}
    \caption{Case 2.1: Illustration of moving point $C$ when none of $r_i$ is supported by a corner or a star center. While moving $C$, we fix the lines containing the boundary segments that touch $r_1, \dots, r_j$, and slide the intersection points accordingly.
    If more than one boundary touches $r_i$ at the same point, we might only extend the one closest to $C$. Since all the new corners we create are convex, all pieces remain star shaped. We only need to move $C$ infinitesimally to obtain a contradiction, so no new crossings will be formed during this process.}
    \label{fig:case-2.11}
\end{figure}

\begin{cfigure}[htbp]
    \centering
    \includegraphics[page=32]{figs.pdf}
    \caption{Case 2.1: If $T$ is the highest priority piece and $T \cup r_i = S$ is a single point, then one side of visibility from its star center $A_T$ to $S$ is not blocked, therefore, $T$ take a sufficiently small triangle near $S$. }
    \label{fig:case-2.12}
\end{cfigure}

\textbf{\boldmath Case 2: $C$ is the end of one or more sight lines of other pieces.}
First, note that according to \Cref{lem:newboundaryinsightlines}, if $C$ is on the boundary of a piece, then $C$ is also contained in a sight line of that piece.
Let the sight lines that share the end $C$ be $r_1,\ldots,r_j$ in counterclockwise order (one of these is $r$), and let the associated set of pieces and star centers be $\mathcal Q=\{Q_1,\ldots,Q_j\}$ and $A_1,\ldots,A_j$, respectively.
We first observe that we must have $j\geq 3$:
Clearly $j\geq 2$, so consider the case $j=2$.
If the two sight lines $r_1$ and $r_2$ are not parallel, then $C$ is a concave corner of one of them that causes the piece to not be star-shaped.
If the two sight lines are parallel, then $C$ is not a corner of the pieces, so $r_1$ and $r_2$ are not (complete) sight lines of the pieces.
Hence, $j\geq 3$.

Assume without loss of generality that the interior of each $Q_i$ is to the left of $r_i$.
This has the consequence that if $r_i$ is supported by a corner of $P$, then the two incident edges are to the right of $r_i$ and likewise, if $r_i$ is supported by a star center, then the interior of the associated piece is also to the right.
Suppose towards a contradiction that at most one of the sight lines $r_1,\ldots,r_j$ is supported by a corner of $P$ that is not also a star center---note that otherwise $r$ is a sight line of type \ref{sl:3}.
Let $\mathcal R$ be the set of pieces $R$ for which $R \notin \mathcal Q$ but $\partial R \cap r_i \neq \emptyset$ for some $i \in \{1, \dots, j\}$.
The goal is to improve the priority of the area vector by exchanging area between pieces in $\mathcal{Q} \cup \mathcal{R}$, which leads to a contradiction. 

\textbf{\boldmath Case 2.1: None of $r_1, \dots, r_j$ is supported by a corner or a star center. }
We will show that it is always possible to move $C$ anywhere within a sufficiently small disk; see~\cref{fig:case-2.11}. 
We attach all sight lines $r_i = A_iC$ to the flexible point $C$.
For each boundary segment $s$ touching one of $r_i$, we fix it on the same straight line, extending or contracting with respect to the movement of $C$, so the intersection point of $s$ and $r_i$ is flexible. 
If there are multiple segments touching $r_i$ at the same point, we will extend only the one closest to $C$ if necessary. 
Since $C$ is a convex corner in all of $Q_1, \dots, Q_j$, and all new corners are formed by the intersection of some ray and some straight line, all pieces remain in star shape with respect to their initial star centers. 

Now we will show that we can always improve the priority of the area vector. 
\begin{itemize}
    \item If $Q_i$ has the highest priority among $\mathcal{Q} \cup \mathcal{R}$, we can slide $C$ along the ray $A_iC$ and expand $Q_i$. 
    \item Otherwise, a piece $T \in \mathcal{R}$ has the highest priority among $\mathcal{Q} \cup \mathcal{R}$.
    Without loss of generality, assume $T$ is touching and to the right of $r_i$. If $T$ only touches $r_i$ at a single point $S$, then we can expand $T$ around $S$. See~\cref{fig:case-2.12}. If $T$ has a boundary segment along $r_i$, we can move $C$ to the left of the original ray $A_iC$ and expand $T$. 
\end{itemize}

\begin{figure}[htbp]
    \centering
    \includegraphics[page=42]{figs.pdf}
    \caption{The top two figures illustrate the reduction from case 2.2 to case 2.1.
    By redistributing a triangle between pieces, the role of $r_i = A_iC$ is replaced by $r'_i = A_kC$, which is no longer a sight line of type~\ref{sl:2}. 
    The bottom two figures show how to expand $Q_i$ if $Q_i$ has the highest priority. 
    We can slide $C$ along the ray $A_iC$ to $C'$ and give the triangle $A_iC'B'_i$ back to $Q_i$ afterward. }
    \label{fig:case-2.21}
\end{figure}

\textbf{\boldmath Case 2.2: None of $r_1, \dots, r_j$ is of type~\ref{sl:1}, but can be of type~\ref{sl:2}. }
Compared with case 2.1, each sight line $r_i$ might be supported by star centers. 
We will show that it's always possible to move $C$ infinitesimally so as to increase the area vector.
See~\cref{fig:case-2.21} top. 
We will make some local modifications along each sight line of type~\ref{sl:2} and reduce to the previous case. 
Consider any sight line $r_i$ of type~\ref{sl:2}. 
Let $A_k$ be the farthest star center from $A_i$ that supports $r_i$, $B_iC$ be an edge of $Q_i$ along $r_{i-1}$. 
We will give the triangle $A_iCB_i$ from $Q_i$ to $Q_k$. 
After this exchange, the role of $r_i$ is replaced by a new sight line $r'_i = A_k C$, which is not supported by any star centers. 
After applying this modification along all the sight lines of type~\ref{sl:2}, we reduce to case 2.1, therefore, we can move $C$ anywhere within a sufficiently small disk. 

Now we will show that we can always improve the priority of the area vector in the end. 
Without loss of generality, assume $r_i$ is the initial sight line that touches the highest priority piece, and this piece is either $Q_i$ or a piece $T \in \mathcal{R}$ to the right of $r_i$. 
If $r_i$ is not supported by any star centers, we can make the same modification as in case 2.1. 
Therefore, we will only consider the case that $r_i$ is of type \ref{sl:2}. 
Similarly, let $A_k$ be the farthest star center from $A_i$ that supports $r_i$. 

\begin{figure}[htbp]
    \centering
    \includegraphics[page=38]{figs.pdf}
    \caption{Case 2.2: When $T$ has the highest priority, and $T$ touches $A_iA_k$, we can give an sufficiently small pentagon to $T$. }
    \label{fig:case-2.23}
\end{figure}

\begin{itemize}
    \item If $Q_i$ has the highest priority, we can slide $C$ along the ray $A_iC$. Let $B_iC$ be the edge of $Q_i$ along $r_{i+1}$. After sliding $C$ along the ray $A_iC$ to $C'$, $B_i$ slides to $B'_i$, we can give the triangle $A_iC'B'_i$ back to $Q_i$ and improve the priority of the area vector.
    See~\cref{fig:case-2.21} bottom. 
    \item If some piece $T \in \mathcal{R}$ touching $r_i$ has the highest priority. If it touches $A_k C$, then after transferring the triangle $A_iCB_i$ from $Q_i$ to $Q_k$, it reduces to case 2.1. Therefore, we only need to consider the case when $T$ touches $A_iA_k$. 
    If $\partial T \cap A_iA_k$ is a single point, we can apply the same modification as case 2.1. See~\cref{fig:case-2.12}. 
    If $\partial T \cap A_iA_k$ is a segment, we can extend the two boundary segments end in $A_iA_k$ into the triangle $A_iCB_i$, give a sufficiently small pentagon $T$, and repartition the triangle $A_iCB_i$ accordingly. See~\cref{fig:case-2.23}. 
\end{itemize}

\begin{figure}[htbp]
    \centering
    \includegraphics[page=39]{figs.pdf}
    \caption{Case 2.3: When $r_l = A_l C$ is supported by a non-center-corner $D$, we treat the segment $DC$ as $r_l$, and make the all the intersection points with $DC$ flexible, as in case 2.1. If there are star centers along $DC$ or any other $r_i$, we treat it in the same way as in case 2.2. This modification keeps all the pieces star shaped as long as $C$ is on or to the left of ray $A_l D$ and within a sufficiently small disk. }
    \label{fig:case-2.31}
\end{figure}

\begin{cfigure}[htbp]
    \centering
    \includegraphics[page=40]{figs.pdf}
    \caption{Case 2.3: If there is no star centers along $r_i = A_i C$, we can give the triangle $A_iCC'$ to $Q_i$. }
    \label{fig:case-2.32}
\end{cfigure}

\begin{cfigure}[htbp]
    \centering
    \includegraphics[page=41]{figs.pdf}
    \caption{Case 2.3: If there are some star centers along $r_i = A_i C$, let $A_k$ be the farthest from $A_i$. Since $C$ is a convex corner at all pieces, there must be a $r_m$ to the left of $r_i$. Then we can take a sufficiently close point $C''$ along $r_m$ to $C$, and redistribute the quadrilateral $A_i A_k C' C''$ to $Q_i$ and $Q_k$. }
    \label{fig:case-2.33}
\end{cfigure}

\textbf{\boldmath Case 2.3: Some $r_l$ is of type \ref{sl:1}. }
Let $D$ be the corner of $P$ that supports $r_l$, which is not a star center. 
Without loss of generality, assume $r_l$ is horizontal with $C$ to the right. 
In this case, we might not be able to move $C$ to the right of $A_lD$, as $D$ blocks the visibility from $A_l$ to $C$. 
But we can still move $C$ to anywhere within a sufficiently small disk while keeping on or to the left of $A_l D$. 
See~\cref{fig:case-2.31}. 

In this case, we will fix all the pieces in $\mathcal{R}$ touching $A_l D$, and extend the highest priority among the others. 
Let $\mathcal{R}'$ be the set of pieces touching $\{r_1, \dots, r_j, CD\} \setminus \{r_l\}$ that is not in $\mathcal{Q}$. 
As before, assume $r_i$ is the sight line touching the highest priority piece. 

\begin{itemize}
    \item If there is a horizontal $r_k = A_kC$ with $A_k$ to the right of $C$, then we can reduce to case 1 by taking $r_k \cup r_l$ as $e$ and consider any other sight line ends at $C$; 
    \item If $T \in \mathcal{R}'$ has the highest priority, we can apply the modification in case 2.2 and expand $T$. 
    \item Otherwise, some $Q_i$ has the highest priority. If $A_i$ is on or below the straight line $A_l D$, we can slide $C$ along the ray $A_i C$ and expand $Q_i$. Let us assume $A_i$ is above the straight line $A_l D$. If $r_i = A_i C$ is not supported by any star centers, then we can take a sufficiently close point $C'$ to $C$ on $CD$, and give the triangle $A_iCC'$ to $Q_i$. See~\cref{fig:case-2.32}. If $r_i$ is supported by a star center, let $A_k$ be the farthest star center that supports $r_i$ from $A_i$, then $Q_i$ and $Q_k$ can collectively take some region around $A_k C$. See~\cref{fig:case-2.33}. \qedhere{}
\end{itemize}
\end{proof}

\begin{figure}
\centering
\includegraphics[page=4,width=.45\textwidth]{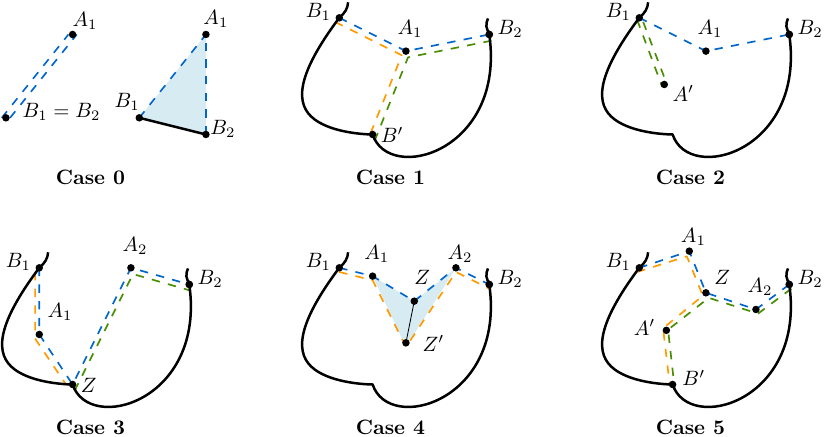}
\caption{Examples of the different sight line types of \Cref{lem:sltypes} and different Steiner points these can give rise to, as characterized by \Cref{lem:sightlines-steinerpoints}. The sight lines $A_4Z_1$ and $A_5Z_1$ are of type \ref{sl:1} (supported by a corner), meeting in point $Z_1$. The sight line $A_2Z_2$ is of type \ref{sl:2} (supported by star center $A_3$). The sight line $A_1Z_1$ is of type \ref{sl:3} (ending in the common endpoint of two sight lines of type \ref{sl:1}). We note that $Z_2$ here is a Steiner point arising from a sight line of type \ref{sl:2} ending on a sight line of type \ref{sl:3}. We will later on show the same diagram for the entire partition of the input polygon. There are five star centers and two corners involved in the definition of Steiner point $Z_2$, which turns out to be the worst case. In our algorithm, we will have $O(n^6)$ potential star centers, making for a total of $O(n^{32})$ potential such Steiner points.}
\label{fig:sltypes}
\end{figure}

See \Cref{fig:sltypes} for an example of the different types of sight lines and Steiner points (i.e.\ corners of the star pieces) needed in the partition.
The following corollary, which characterizes all required Steiner points, is immediate from \Cref{lem:newboundaryinsightlines,lem:sltypes}; indeed, each Steiner point must be on the beginning or end of a sight line. Together with \cref{cor:bit-complexity}, we can bound the complexity to encode each Steiner point. 

\begin{corollary}
\label{lem:sightlines-steinerpoints}
In an area maximum partition of $P'$, each Steiner point is of one of the following types:
\begin{enumerate}

\item The end of a sight line of type \ref{sl:1} or \ref{sl:2} on an edge of $P'$, \label{st:1}

\item The end of a sight line of type \ref{sl:1} or \ref{sl:2} on a sight line of type \ref{sl:1}--\ref{sl:3}, \label{st:2}

\item The common end of two sight lines of type \ref{sl:1}, \label{st:3}

\item
A star center. \label{st:4}
\end{enumerate}
And each Steiner point can be encoded by a sequence of $O(n)$ corners of $P'$. 
\end{corollary}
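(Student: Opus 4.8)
The plan is to derive \Cref{lem:sightlines-steinerpoints} as a direct bookkeeping consequence of \Cref{lem:newboundaryinsightlines,lem:sltypes}, together with the bit-complexity bound of \Cref{cor:bit-complexity}. First I would observe that every Steiner point $Z$ is a corner of some piece $Q$ of the area maximum partition, and therefore $Z$ lies on $\partial Q$. By \Cref{lem:newboundaryinsightlines}, the shared boundary of $Q$ with any neighbouring piece is contained in one or two sight lines of the participating pieces, and thus $Z$—being a corner of $Q$—must be the end of at least one sight line $r = AC$ with $C = Z$ (if $Z$ were merely in the relative interior of a sight-line segment of every incident piece, it would not be a corner of any piece). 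So the classification reduces to: where does the end $C$ of a boundary-carrying sight line $r$ live, and by \Cref{lem:sltypes} the sight line $r$ itself is of type \ref{sl:1}, \ref{sl:2}, or \ref{sl:3}.

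Next I would enumerate the location of the end $C$ of $r$. Either $C$ lies on $\partial P'$, in which case it is on an edge of $P'$ (type \ref{st:1}) — here $r$ is necessarily of type \ref{sl:1} or \ref{sl:2}, since a type-\ref{sl:3} sight line by definition ends where two type-\ref{sl:1} sight lines meet, and that meeting point, being on two sight lines supported by corners of $P'$ interior to those sight lines, is in the interior of $P'$. Or $C$ lies in the interior of $P'$. In the interior case, if $r$ is of type \ref{sl:3} then $C$ is by definition the common end of two type-\ref{sl:1} sight lines, giving type \ref{st:3}; if instead $r$ is of type \ref{sl:1} or \ref{sl:2}, then $C$ is an endpoint of $r$ lying on the boundary of some other piece, hence by \Cref{lem:newboundaryinsightlines} again $C$ is on a sight line of that piece, which by \Cref{lem:sltypes} is of type \ref{sl:1}--\ref{sl:3}, yielding type \ref{st:2}. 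Finally, the degenerate possibility that $C$ coincides with a star center is recorded as type \ref{st:4}. This exhausts all cases.

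For the bit-complexity claim, I would trace through each of the four types. A type-\ref{st:4} Steiner point is a star center, which by \Cref{cor:bit-complexity} is encoded by $O(n)$ corners. A type-\ref{st:1} point is the intersection of an edge of $P'$ with a sight line $AC$ of type \ref{sl:1} or \ref{sl:2}; such a sight line is determined by the star center $A$ (cost $O(n)$ corners via \Cref{cor:bit-complexity}) together with the supporting corner or supporting star center, so $O(n)$ corners suffice, and intersecting with a fixed edge of $P'$ adds only a constant. A type-\ref{st:2} point is the intersection of one sight line (of type \ref{sl:1} or \ref{sl:2}) with another (of type \ref{sl:1}--\ref{sl:3}); in the worst case a type-\ref{sl:3} sight line is specified by the star center of its own piece plus the two type-\ref{sl:1} sight lines meeting at its end, each of which costs $O(n)$, so the whole configuration still costs $O(n)$ corners. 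A type-\ref{st:3} point is the common end of two type-\ref{sl:1} sight lines, each determined by a star center and a supporting corner, again $O(n)$. In every case the encoding is a sequence of $O(n)$ corners of $P'$ (equivalently of $P$, since corners of $P'$ are corners of $P$ or previously constructed star centers, each of bit-complexity $O(n)$ corners).

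The only genuinely delicate point is the argument that a Steiner point which is a corner of a piece must be the \emph{end} of some sight line, rather than merely an interior point of boundary segments lying on sight lines; this is where \Cref{lem:newboundaryinsightlines} does the real work, since it tells us the shared boundary of two pieces breaks into at most two sight-line segments, so a corner of a piece that is not the end of any incident sight line cannot exist. Everything else is routine case analysis and accounting, and I expect no further obstacle.
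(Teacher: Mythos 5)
Your proposal is correct and follows exactly the route the paper intends: the paper declares this corollary ``immediate'' from \Cref{lem:newboundaryinsightlines,lem:sltypes} (``each Steiner point must be on the beginning or end of a sight line'') together with \Cref{cor:bit-complexity} for the encoding bound, and your case analysis is a faithful elaboration of that one-line justification. The only small inaccuracy is attributing the ``every Steiner point is the end of a sight line'' step to \Cref{lem:newboundaryinsightlines} --- it is immediate from the definition of a sight line (the segment from a piece's star center to any of its corners) --- while the role of \Cref{lem:newboundaryinsightlines} is rather to guarantee that the incident boundary edges lie along classified sight lines; but this does not affect the validity of the argument.
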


Combined with the result of \cref{sec:structure}, we can categorize all sight lines and Steiner points in some optimal constructable star partition. 
\begin{lemma}
    There exists an optimal constructable partition of any simple polygon $P$, such that each sight line $r = AC$ containing a segment on the shared boundary is of one of the following types: 
    \begin{enumerate}[label=(\alph*)]
        \item $r$ is supported by a corner of $P$; \label{sl:a}
        \item $r$ is supported by another star center; \label{sl:b}
        \item $C$ is the end of two non-parallel lines of type~\ref{sl:a}. \label{sl:c}
    \end{enumerate}
    
    And each Steiner point is of one of the following types:
    \begin{enumerate}[label=(\alph*)]
        \item The end of a sight line of type \ref{sl:a} or \ref{sl:b} on an edge of $P$, \label{st:a}
        \item The end of a sight line of type \ref{sl:a} or \ref{sl:b} on a sight line of type \ref{sl:a}, \ref{sl:b} or \ref{sl:c}, \label{st:b}
        \item The common end of two sight lines of type \ref{sl:a}. \label{st:c}
        \item
        A star center. \label{st:d}
    \end{enumerate}
    And each Steiner point can be encoded by a sequence of $O(n)$ corners of $P$. 
\end{lemma}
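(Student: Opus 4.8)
The plan is to combine the two main structural results established so far --- the existence of a constructable optimal partition (\Cref{thm:constructability}) and the classification of sight lines and Steiner points of an area maximum partition (\Cref{lem:sltypes,lem:sightlines-steinerpoints}) --- by running the area maximum analysis inside the components cut out by the construction lines, exactly as in the proof of \Cref{thm:constructability}. First I would fix a constructable optimal star partition $\mathcal Q$ with star centers $\mathcal A$ together with a marking sequence witnessing constructability; we may assume the sequence contains no fake tripod that is not subsequently used to mark a star center, so by the remark after \Cref{thm:constructability} every marked fake tripod is a real tripod. I would then turn all construction lines into incisions of $P$: for every star center marked via an important sight line $\ell=A_kC_k$ of $Q_k$, incise $P$ along $\ell$, and for every marked tripod incise along its two child legs and cut $P$ along the third leg. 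This splits $P$ into weakly simple components $P'_1,\dots,P'_m$, and two observations about a component $P'_t$ carry the argument. First, every corner of $P'_t$ is a corner of $P$, a star center of $\mathcal A$, or a tripod point of a marked tripod; since the three tripod angles are strictly convex, every tripod point is a strictly convex corner of every component and therefore never supports a sight line. Second, every edge of $P'_t$ is either an edge of $P$ or a sub-segment of an important sight line (an incision, or a tripod leg), hence in the latter case a segment supported by a corner of $P$, i.e.\ a sight line of type~\ref{sl:a} with respect to the global polygon.

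Next I would re-partition inside each component. Within $P'_t$, let $\mathcal A_t\subseteq\mathcal A$ collect the star centers lying in $P'_t$ and let $\mathcal Q^*_t$ be an area maximum partition of $P'_t$ with star centers $\mathcal A_t$, which exists by \Cref{lem:areamax}. The union $\mathcal Q^*=\bigcup_t\mathcal Q^*_t$ is an optimal star partition of $P$ with the same star centers $\mathcal A$, and it is still constructable: no star center has moved, each marked fake tripod is still defined by the same two child star centers and three concave corners and still has an empty pseudo-triangle, and wherever a marking step used an important sight line of a piece one uses that the corresponding incision still lies on the boundary of the relevant piece, which can be arranged by assigning an arbitrarily thin sliver along that incision to the piece whose star center it emanates from via \Cref{lemma:cut}. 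Applying \Cref{lem:sltypes} to each $\mathcal Q^*_t$ classifies every sight line of $\mathcal Q^*_t$ carrying a shared-boundary segment as one of types~\ref{sl:1}--\ref{sl:3} relative to $P'_t$, and \Cref{lem:sightlines-steinerpoints} classifies every Steiner point as one of the four listed types relative to $P'_t$.

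It then remains to lift these classifications from $P'_t$ to $P$. A sight line of type~\ref{sl:1} is supported by a corner of $P'_t$ that is not a star center; by the first observation such a corner is a corner of $P$, so the sight line is of type~\ref{sl:a}. A sight line of type~\ref{sl:2} is supported by a star center of $\mathcal A$, hence of type~\ref{sl:b}. A sight line of type~\ref{sl:3} ends at the common end of two type~\ref{sl:1} sight lines, which have just become type~\ref{sl:a}, so it is of type~\ref{sl:c}. For the Steiner points the same dictionary applies, with the single extra point that an ``edge of $P'_t$'' in types~\ref{st:1}--\ref{st:2} of \Cref{lem:sightlines-steinerpoints} is, by the second observation, either an edge of $P$ (giving a Steiner point of type~\ref{st:a}) or a segment of a type~\ref{sl:a} sight line (giving type~\ref{st:b}); types~\ref{st:3} and~\ref{st:4} become~\ref{st:c} and~\ref{st:d} verbatim.

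Finally, for the bit-complexity bound I would observe that each such Steiner point is the intersection of the lines spanned by at most two sight lines, and each of these lines is determined by a constant number of star centers and corners of $P$: a type~\ref{sl:a} line by one star center and one corner, a type~\ref{sl:b} line by two star centers, and a type~\ref{sl:c} line by one star center together with the end of the sight line, which is itself the intersection of two type~\ref{sl:a} lines; a tripod point, when it appears, is the intersection of two leg lines, each through a child star center and a support. Since each star center is encodable by $O(n)$ corners of $P$ by \Cref{cor:bit-complexity}, and a Steiner point combines only a constant number of such encodings, each Steiner point is encodable by $O(n)$ corners of $P$. The part I expect to be most delicate is the claim that the per-component area maximization preserves constructability --- concretely, that area maximization never detaches an incision from the piece whose star center it emanates from --- so the proof should make the sliver-transfer argument via \Cref{lemma:cut} precise and verify that these transfers, performed simultaneously, do not disturb the classification of the remaining sight lines and Steiner points (which concerns only segments of the component boundary that are already pinned down).
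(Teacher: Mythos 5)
Your proposal is correct and follows essentially the same route as the paper: incise $P$ along all construction lines, take an area maximum partition within each resulting component, and lift the classifications of \Cref{lem:sltypes} and \Cref{lem:sightlines-steinerpoints} back to $P$ using the facts that component corners are corners of $P$, star centers, or (strictly convex, hence non-supporting) tripod points, and that component edges are edges of $P$ or segments of type~\ref{sl:a} sight lines. You are in fact slightly more careful than the paper's own write-up in flagging that the per-component area maximization must preserve constructability and in spelling out the $O(n)$ encoding via \Cref{cor:bit-complexity}.
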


\begin{proof}
    Consider each connected components by adding all the construction lines as incisions. 
    Within each component, we take the area maximum partition. 

    We will first map the corners and edges in each connected component back to objects defined by $P$ and the star centers. 
    Note that in each connected component, all corners are either star centers, or tripod points, or a corner of the initial polygon $P$. 
    Each edge in the connected components, is either an edge of $P$, or a part of a construction segment. 
    Note that each construction segment is a sight line supported by a concave corner of $P$, which is of type \ref{sl:a}. 

    Next we will map all the sight lines in each connected component back to objects defined by $P$ and the star centers. 
    Here we will map them back type by type according to \cref{lem:sltypes}. 
    
    \textbf{\boldmath Type \ref{sl:1} sight lines. }
    All non-star-center corners of each connected component $P'$, must be either corners of $P$ or tripod points. 
    Since tripod points are convex corners in any connected components, it can not support any sight line from the interior. 
    Therefore, any sight line of type \ref{sl:1} must be supported by a corner of $P$, which is of type \ref{sl:a}. 

    \textbf{\boldmath Type \ref{sl:2} sight lines. }
    They are still supported by star centers, so all type \ref{sl:2} sight lines inside a connected component $P'$ are of type \ref{sl:b}. 

    \textbf{\boldmath Type \ref{sl:3} sight lines. }
    Since all type \ref{sl:1} sight lines are of type \ref{sl:a}, $C$ must be the end of two sight lines of type \ref{sl:a}, so all type \ref{sl:3} sight lines inside a connected component are of type \ref{sl:c}. 

    Next we will map all the Steiner points in each connected components to objects defined by $P$ and star centers. 
    Here we will map back class by class according to \cref{lem:sightlines-steinerpoints}. 

    \textbf{\boldmath Corners of a connected component. }
    They could be Steiner points in the partition of $P$ as well. 
    Note that in each connected component, all corners are either star centers (class \ref{st:d}), or tripod points (class \ref{st:c}), or a corner of the initial polygon $P$ (not a Steiner point), so they all fit into one of the classes. 
    
    \textbf{\boldmath Steiner points in class \ref{st:1}. }
    Since all type \ref{sl:1} sight lines in each connected component are of type \ref{sl:a}, all type \ref{sl:2} sight lines in each connected component are of type \ref{sl:b}, every Steiner point $C$ in class \ref{st:1} is the end of a sight line of type \ref{sl:a} or \ref{sl:b}. 
    Each edge in each connected component is either an edge of $P$, or a part of a sight line of type \ref{sl:a}, it falls into class \ref{st:a} or \ref{st:b}. 

    \textbf{\boldmath Steiner points in class \ref{st:2}. }
    Since the types of sight lines match, all Steiner points in class \ref{st:2} are in class \ref{st:b}. 

    \textbf{\boldmath Steiner points in class \ref{st:3}. }
    Since the types of sight lines match, all Steiner points in class \ref{st:3} are in class \ref{st:c}. 

    \textbf{\boldmath Steiner points in class \ref{st:4}. }
    They are in class \ref{st:d}. 
\end{proof}

\section{Algorithm} 
\label{sec:algorithm}
In this section we present our polynomial time algorithm to find a minimum star partition of a polygon. We restate our main \cref{thm:main-theorem} below, that we prove in this section.

\MainTheorem*

\begin{remark}
\label{rem:alg-running-time}
Although it is polynomial time, it is not exactly efficient. Since our main result is that the problem is in \textsf{P} (while previously it was not clear whether the problem was even in \textsf{NP}), we have not tried optimizing the running time.
We believe that it should not be particularly difficult to significantly improve the exponent something like $\approx 50$ by a more refined analysis.
For instance, using a smaller set of potential Steiner points would lead to a smaller state-space of the dynamic program (see \Cref{app:struct}).
Our aim here is to give the simplest possible description of an algorithm with polynomial running time.
Our techniques alone might not be sufficient to bring down the exponent to, say, a single digit.
We leave it as an open question to optimize the running time as far as possible, or conversely provide fine-grained lower bounds.
\end{remark}

\paragraph{Overview.} We begin with a brief overview of our algorithm (see also the technical overview in \cref{sec:technical-overview}).
There are two main challenges to overcome when designing a minimum star partition algorithm:
\begin{itemize}
\item First, even if we are given a set of potential Steiner points, it is not clear how to construct an optimal star-partition. 
\item Second, we need a way to find these potential Steiner points.
\end{itemize}
For the first challenge, we devise a dynamic programming algorithm. For the second, we rely heavily on our structural results in \cref{sec:structure} together with a ``greedy choice'' lemma. In fact, in order to find the potential Steiner-points, we need to invoke the dynamic programming algorithm (which assumes that we know all the potential Steiner points already) on many smaller instances in a recursive fashion.

\paragraph{Dynamic program.}
We begin by assuming that we know a set $\Scen$ of potential star centers. In \cref{sec:dp} we show a dynamic programming algorithm to find a partition of the polygon into a minimum number of star-shaped pieces \emph{such that the star center of each piece is in $\Scen$}. The algorithm runs in $O(\poly(n,|\Scen|))$ time. 
There are a few key properties that we show that allow us to define this dynamic programming algorithm (details can be found in \cref{sec:dp}):
\begin{itemize}
\item We show that using $\Scen$ and the corners of $P$, we can find a set of all potential Steiner points (e.g.\  internal corners of the star pieces). We do this by invoking our structural lemmas about \emph{area maximum partitions} from \cref{sec:area-max}. There will only be $O(\poly(n,|\Scen|))$ many of these potential Steiner points to consider.
\item We argue that each star piece touches the boundary in some optimal partition (\cref{cor:outer-planarity}).
\item
The above observation allows us to define a set of natural separators (see also \cref{fig:dp-separators}) involving at most two star pieces.
For points $B_1,B_2$ on the boundary of $P$, star centers $A_1,A_2\in \Scen$ and a potential internal corner $Z$ on the shared boundary of the two star pieces, we can define a (``long'') separator $B_1$-$A_1$-$Z$-$A_2$-$B_2$. We also consider (``short'') separators of the form $B_1$-$A_1$-$B_2$. These separators allow us to define a sub-region $P'$ of $P$ on one side of the separator, that we can recursively solve using a dynamic programming approach.
\end{itemize}

\paragraph{Finding potential star centers.} Given the above mentioned dynamic programming algorithm, the ultimate challenge is finding some relatively small (i.e.\ polynomial-sized) set of points $\Scen$ such that some optimal solution only uses star centers from $\Scen$. However, this turns out to be quite challenging and we present how we overcome this, together with the full algorithm, in \cref{sec:full-algo}.

A first attempt might be to consider $\Scen$ to be all the $O(n^4)$ points on the intersections of pairs of diagonals of the polygon. This turns out to not be sufficient, as can be seen in \cref{fig:art}. Indeed, the same figure shows that the star center points can have \emph{degree} as high as $\Omega(n)$ (in particular, the position of some star centers depend on up to $\Omega(n)$ corners of the input polygon).

Instead, here we use our crucial structural properties of optimal star partitions proven in \cref{sec:structure}. Essentially, we show there that the only non-trivial structure in some extreme optimal partitions must be \emph{tripods}(see \cref{sec:prelim}), e.g., like those in \cref{fig:art,fig:extripod,fig:greedy}. The tripods must be supported by three corners of the polygon, so there are only $O(n^3)$ such choices where a tripod can appear. However, the location of the tripod point might depend on other tripods (again, see \cref{fig:art} for a recursive construction capturing this). To overcome this, we need a \emph{greedy choice} property that allows us to argue that, for each potential tripod, there is only a single arrangement of this tripod we need to care about: the one that is least restrictive for one of the involved star centers.

To find this greedy arrangement of the tripod, we need to solve the minimum star partition problem on a subregion of the polygon. For this we can recursively call our algorithm to construct potential star centers for this smaller instance, and then use the dynamic programming algorithm to find the optimal star-partition. 

In \cref{sec:structure}, we argue that the tripods of some optimal solution are all oriented in a consistent way. Indeed, recall that each tripod is constructed by its two child star centers and used to construct its parent star center, so only one of the three subpolygons fenced off by the tripod depends on the other two subpolygons. This consistent orientation means that all the tripods can be oriented towards some arbitrary root face (see \cref{fig:tripodconsistent}). This is crucial for our algorithm since this allows us to bound the number of subproblems to $O(n^2)$ (each diagonal of $P$ will define a subproblem on the side not containing this root face, that can be solved first and must not depend on the other side).

\subsection{Dynamic Program} \label{sec:dp}

In this section we prove the following theorem.

\begin{theorem}
\label{thm:dp}
Suppose we are given a polygon $P$ with $n$ corners. Suppose also that we know some set $\Scen$ of potential star centers, such that we are guaranteed that there exists an optimal partition of $P$ into the minimum number of weakly simple star-shaped pieces where: (i) each star piece's center is in $\Scen$, and (ii) each star piece contains a corner of $P$.
Then we can find such an optimal solution in $O(\poly(n,|\Scen|))$ arithmetic operations\footnote{Instead of measuring running time here, we count the number of arithmetic operations. This is since points in $\Scen$ might be complicated to represent exactly. In fact, we will invoke the dynamic programming algorithm with points in $\Scen$ of \emph{degree} (and hence bit-complexity) $\Omega(n)$, so we cannot assume that we can perform computation on these points in $O(1)$ time.}.
\end{theorem}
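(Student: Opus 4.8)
The plan is to set up a dynamic program whose states are the ``separators'' described in the overview, and whose value is the minimum number of star-shaped pieces needed to fill the subpolygon cut off by the separator. First I would enumerate all candidate points: starting from $\Scen$ and the corners of $P$, I apply \Cref{lem:sightlines-steinerpoints} (and the companion lemma characterizing sight lines) to produce a set $\Sint$ of potential internal Steiner points and a set $\Sbor$ of potential boundary Steiner points; each of these is obtained by intersecting at most a constant number of lines, each of which passes through two points of $\Scen\cup V_0(P)$, so $|\Sint|,|\Sbor| = O(\poly(n,|\Scen|))$. A \emph{short separator} is a triple $B_1\textrm{-}A\textrm{-}B_2$ with $A\in\Scen$ and $B_1,B_2$ on $\partial P$ (endpoints of a sight line of the piece centered at $A$); a \emph{long separator} is $B_1\textrm{-}A_1\textrm{-}Z\textrm{-}A_2\textrm{-}B_2$ with $A_1,A_2\in\Scen$, $Z\in\Sint$ the shared corner of the two pieces, and $B_i\in\partial P$. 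There are $O(\poly(n,|\Scen|))$ separators in total. Each separator, together with a choice of ``side,'' bounds a weakly simple subpolygon $P'\subseteq P$, which is the DP subproblem; the base cases are the two trivial short separators mentioned in the text (a degenerate $B\textrm{-}A\textrm{-}B$, and a $B_1\textrm{-}A\textrm{-}B_2$ enclosing only a triangle on one edge of $P$).

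Next I would define the recurrence. Fix a subpolygon $P'$ cut off by a separator $\sigma$. By hypothesis there is an optimal partition of $P$ in which every piece touches $\partial P$ and has its center in $\Scen$; restricting such a partition to $P'$ and applying \Cref{cor:outer-planarity} / the outer-planarity structure, the ``outermost'' piece $Q$ of $P'$ along $\sigma$ is a star piece with center $A\in\Scen$ that shares boundary with $\partial P$; its boundary with the rest of $P'$ is, by \Cref{lem:newboundaryinsightlines}, one or two segments lying in sight lines, hence described by one or two further separators. Peeling $Q$ off therefore reduces $P'$ to a constant number of strictly smaller subpolygons, each again described by a separator from our enumerated family. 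Concretely, the DP value $f(\sigma)$ of a separator is $1 + \min$ over compatible ways of decomposing $P'\setminus Q$ into the sub-separators, of the sum of their $f$-values; I would spell out the $O(1)$ merge operations (attaching a piece along a short separator, and splicing two short separators at a shared Steiner point into a long one, as sketched in \Cref{sec:technical-overview}) and check geometric validity of each merge (star-shapedness of the new piece via \Cref{lemma:cut}, non-crossing of the separator segments, containment in $P$). Since each subpolygon in a valid decomposition is strictly ``inside'' $P'$ with respect to the separator ordering, the dependency graph is a DAG, so the DP is well-defined; I would order states by the area of the enclosed subpolygon, or by a combinatorial measure such as the number of $\Scen\cup V_0$ points enclosed, to make this precise.

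For correctness I would prove two directions. Soundness: any value $f(\sigma)$ produced corresponds to an actual star partition of $P'$ with that many pieces, by induction on the state order, assembling the partition from the recursively guaranteed partitions of the sub-separators and checking the merged pieces are weakly simple and star-shaped. Completeness: given the hypothesized optimal partition of $P$, I peel it piece-by-piece as above; the key point is that at every stage the boundary segments exposed are sight lines whose supports/endpoints lie in our enumerated sets (this is exactly what \Cref{lem:sltypes} and \Cref{lem:sightlines-steinerpoints} guarantee once we know the centers are in $\Scen$), so the induced separators all belong to our family, and hence the DP, minimizing over all valid decompositions, does at least as well. Finally, for the running time: there are $O(\poly(n,|\Scen|))$ states, each computed by iterating over $O(\poly(n,|\Scen|))$ candidate outermost pieces and $O(\poly(n,|\Scen|))$ sub-separator choices, with each step doing a constant-size geometric feasibility test using $O(\poly(n))$ arithmetic operations (intersecting lines, comparing along the boundary), so the total is $O(\poly(n,|\Scen|))$ arithmetic operations; reading off an optimal partition is done by the standard DP back-pointers.

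The main obstacle I anticipate is the completeness argument — specifically, showing that the separators are rich enough to capture some optimal partition while remaining polynomially many. This requires simultaneously invoking the right extremal partition (area maximum for the Steiner-point characterization, and the outer-planar / constructable partition of \Cref{sec:structure} so that pieces touch $\partial P$) and verifying that the peeling process never creates a boundary configuration outside the enumerated families; subtle points include pieces that are only weakly simple (meeting themselves at the star center, as in \Cref{fig:weaklysimple}), pieces sharing two segments of boundary rather than one, and making sure the ``side'' of a separator and the compatibility conditions for merges are defined so that no region of $P'$ is double-counted or missed. Handling the bookkeeping of these degenerate cases cleanly, rather than any deep new idea, is where the real work lies.
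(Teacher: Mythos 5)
Your proposal follows the paper's proof architecture essentially step for step: construct $\Sint$ and $\Sbor$ from $\Scen$ and the corners of $P$ via the sight-line and Steiner-point characterizations of \Cref{lem:sltypes} and \Cref{lem:sightlines-steinerpoints}, set up a DP over short and long separators, and prove soundness by assembly and completeness by decomposing a suitable extremal optimal partition. The paper's transitions are phrased as six incremental separator operations (merging two short separators sharing a center at a boundary point, spawning a degenerate long separator when a new center sees $B_1$, moving the common corner $Z$ along the shared boundary, and splicing two long separators at $Z$) rather than ``peel off the outermost piece in one step,'' but this is a presentational difference: a single piece may touch $\partial P$ at many points and border many other pieces, so the one-shot peeling you describe is realized in the paper as a sequence of these local merges; the state space and complexity are the same either way.

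There is one concrete idea you flag but do not supply, and without it the completeness argument does not close. To get the Steiner points into the polynomial-size sets you must pass to an \emph{area-maximum} partition with the given star centers, but area-maximization with fixed centers can destroy property (ii): a piece might no longer touch $\partial P$, and then the whole separator scheme (which relies on every separator crossing at most two pieces, each anchored at the boundary) breaks. The paper's resolution, in the proof of \Cref{lem:dp-points}, is to first fix, for each piece of the hypothesized partition, a sight line $r_i=A_iB_i$ to a corner $B_i$ of $P$, add these as incisions turning $P$ into a weakly simple polygon $P'$, and only then take the area-maximum partition of $P'$; the incisions force each piece to retain (at least degenerately) its contact with the boundary, and since each $r_i$ lies on a line of type $L^{(i)}$ no new Steiner-point types are introduced. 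You correctly identify this tension as ``where the real work lies,'' but naming the obstacle is not the same as resolving it, so this is the gap to fill.
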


\begin{remark}
\label{rem:dp}
For our purposes, we will have $|\Scen| = O(n^6)$, and the total running time of the dynamic programming algorithm in \cref{thm:dp} will be $O(n^{105})$ under the RAM model. 
\end{remark}

\subsubsection{Defining Other Steiner Points}
Suppose we are given a polygon $P$ and a set of potential star centers $\Scen$, as in the statement of \cref{thm:dp}. Using these, we will be able to identify all potential Steiner points needed for our dynamic program.
The main idea is to consider an optimal partition that is \emph{area-maximum}, and use our structural results from \cref{sec:area-max} (in particular \cref{lem:sightlines-steinerpoints}, that characterizes all potential Steiner points).
We will define the set $\Sint$ of potential Steiner points to be used as corners of the star-shaped pieces. Moreover, we define a smaller set $\Sbor\subseteq (\Sint\cap \partial P)$ of potential corners of the star pieces that are also on the boundary of $P$. 

\begin{lemma}
\label{lem:dp-points}
Let $P$ be a polygon and $\Scen$ a set of points satisfying the premise of \cref{thm:dp}.
Then we can find sets $\Sbor$ and $\Sint$ of size $\poly(n, |\Scen|)$ such that some weakly simple
minimum star partition $(Q_1, Q_2, \ldots Q_k)$ of $P$ with corresponding star centers $(A_1, A_2, \ldots, A_k)$ satisfies the following properties:
\begin{enumerate}
    \item Each piece $Q_i$ touches the polygon boundary $\partial P$.
    \item All star centers $A_i$ are contained in $\Scen$.
    \item All corners of $Q_i$ are contained in $\Sint$. 
    \item All corners of $Q_i$ that are also on the boundary of $P$ are contained in $\Sbor$. 
\end{enumerate}
\end{lemma}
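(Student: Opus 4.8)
The plan is to invoke the structural machinery of Sections \ref{sec:structure} and \ref{sec:area-max} in the same spirit as the proof of \Cref{thm:constructability}: start from a constructable optimal star partition, cut the polygon along all the construction segments (the important sight lines and the legs of the fake tripods) to get weakly simple connected components, and analyze each component independently. By \Cref{cor:outer-planarity}, every piece of a constructable optimal partition touches $\partial P$, which gives property (1). Property (2) is exactly the premise of \Cref{thm:dp} applied to the constructable partition — the set $\Scen$ is assumed to contain the star centers of such a partition, so there is nothing new to do here beyond fixing the choice of partition to be a constructable one whose centers lie in $\Scen$.

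For properties (3) and (4), I would pass to the \emph{area maximum} partition within each component (with the star centers fixed), which exists by \Cref{lem:areamax} and does not move the centers or change which component contains which center. Now apply the classification of Steiner points in area maximum partitions: \Cref{lem:sightlines-steinerpoints} says every Steiner point is (st:1) the end of a type-\ref{sl:1} or \ref{sl:2} sight line on an edge of the component, (st:2) the end of a type-\ref{sl:1} or \ref{sl:2} sight line on another sight line, (st:3) the common end of two type-\ref{sl:1} sight lines, or (st:4) a star center. I would then ``lift'' each of these back to an object defined purely by $P$ and the points in $\Scen$, exactly as in the final unnamed lemma of \Cref{sec:area-max}: corners of a component are either corners of $P$, tripod points (themselves encodable from $\Scen$ and two concave corners of $P$ by \Cref{thm:constructability}/\Cref{cor:bit-complexity}), or star centers in $\Scen$; edges of a component are either edges of $P$ or sub-segments of type-\ref{sl:a} sight lines supported by concave corners of $P$; and each sight line of type \ref{sl:1}/\ref{sl:2}/\ref{sl:3} maps to one of type \ref{sl:a}/\ref{sl:b}/\ref{sl:c}. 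Since a type-\ref{sl:a} sight line is determined by a star center in $\Scen$ and a corner of $P$, a type-\ref{sl:b} one by two star centers in $\Scen$, and a type-\ref{sl:c} endpoint by two type-\ref{sl:a} sight lines (hence two centers and two corners), each potential Steiner point is the intersection of two lines, each of which is spanned by a bounded number of points drawn from $\Scen \cup V_0$. Enumerating all such pairs of lines gives a set $\Sint$ of size $\poly(n,|\Scen|)$; intersecting $\Sint$ with the (polynomially many) edges of $P$ — equivalently, only keeping those Steiner points whose defining data includes an edge of $P$, as in case (st:1)/(st:a) — yields $\Sbor \subseteq \Sint \cap \partial P$ of size $\poly(n,|\Scen|)$.

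The one subtlety — and the step I expect to need the most care — is the bookkeeping that the \emph{same} optimal partition simultaneously witnesses all four properties. A constructable optimal partition gives (1) and (2); replacing it by the area maximum partition \emph{within each component, with centers fixed} preserves (1) and (2) (centers are unchanged, and by \Cref{lem:outer-planar} each piece still meets $\partial P$) while giving the Steiner-point structure needed for (3) and (4). So the final partition is: take a constructable optimal partition with centers in $\Scen$, cut along its construction segments, and within each resulting component take the area maximum partition; glue the results. I would spell out that this glued partition is still a minimum star partition of $P$ (the number of pieces is unchanged, weak simplicity is preserved, and the incisions are only formal), and that every corner of every piece is either a corner of $P$, a star center in $\Scen$, a tripod point, or one of the lifted Steiner points above — all of which land in the polynomially-sized set $\Sint$, with those on $\partial P$ landing in $\Sbor$. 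The size bounds are then just counting the pairs of spanning lines, which I would not grind through here beyond noting it is $\poly(n,|\Scen|)$.
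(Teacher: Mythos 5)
Your overall route (area-maximize with the centers fixed, then lift the Steiner-point classification of \Cref{lem:sightlines-steinerpoints} back to lines spanned by $\Scen$ and corners of $P$) matches the paper's construction of $\Sbor$ and $\Sint$, and your counting is fine. But there is a genuine gap at exactly the step you flag as delicate, namely property (1). You justify that each area-maximized piece still touches $\partial P$ by citing \Cref{lem:outer-planar}; that lemma applies to a star center that is constructable \emph{with respect to the partition at hand}, and constructability refers to important sight lines of the pieces --- which change when you area-maximize. After area maximization a piece whose star center lies in the interior of $P$ could, a priori, meet the boundary of its component only along incision segments at points interior to $P$ (or not meet an edge of $P$ at all), so the corollary does not transfer. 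The paper explicitly notes that property (1) ``does not seem to follow directly from area-maximality'' and resolves it differently: for each piece $Q_i$ of the starting partition it fixes one sight line $r_i=A_iB_i$ to a corner $B_i$ of $P$, adds $r_i$ as an incision, area-maximizes in the resulting weakly simple polygon, and then \emph{assigns the zero-area slit $r_i$ to the piece $Q_i'$}, which anchors $Q_i'$ to $\partial P$ at $B_i$ while keeping it weakly simple and star-shaped. Your cutting along all construction segments gets close (those incisions do pass through concave corners of $P$), but you never argue that the area-maximized piece of $A_k$ actually receives a side of its incision through the supporting concave corner; without that explicit assignment, property (1) is unproven. You would also need to check, as the paper does, that these extra incision edges do not create Steiner points outside $\Sint$ (they lie on lines in $L^{(i)}$, so they do not).

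A secondary mismatch: you base the whole argument on a \emph{constructable} optimal partition whose centers lie in $\Scen$, but the premise of \Cref{thm:dp} only guarantees some optimal partition with centers in $\Scen$ whose pieces contain corners of $P$ --- not a constructable one. The paper's proof deliberately works with an arbitrary partition satisfying the premise and needs none of the constructability machinery, which is why it must (and does) re-derive the ``touches the boundary'' property via the incision trick rather than quoting \Cref{cor:outer-planarity}. If you want to keep your route, you must either strengthen the hypothesis or replace the appeal to \Cref{lem:outer-planar} with the explicit slit-assignment argument.
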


\paragraph{Construction of Steiner points.}
We use the characterization of area maximum partitions from \cref{sec:area-max} in order to define the sets $\Sbor$ and $\Sint$. \Cref{fig:sltypes} shows the ``worst case'' example where some Steiner point depends on five star centers and two corners of $P$.
We begin by constructing the sight line types as in \cref{lem:sltypes}.

\begin{itemize}
\item Let $L^{(i)}$ be the set of lines passing through a potential star center in $\Scen$ and a distinct corner of $P$. Note that $|L^{(i)}| = O(n|\Scen|)$, and they correspond to sight lines of type~\ref{sl:1}.
\item Similarly, let $L^{(ii)}$ be the set of lines passing through a pair of distinct potential star center in $\Scen$. Note that $|L^{(i)}| = O(|\Scen|^2)$, and they correspond to sight lines of type~\ref{sl:2}.
\item To define $L^{(iii)}$, we first define $S^{(iii)}$ to be the set of points on the intersection of two non-parallel lines in $L^{(i)}$.  Then we define $L^{(iii)}$ to be the lines through a potential star center in $\Scen$ and a distinct point in $S^{(iii)}$.
Note that $|S^{(iii)}| = O(n^2|\Scen|^2)$, so $|L^{(iii)}| = O(n^2|\Scen|^3)$, and that these correspond to sight lines of type~\ref{sl:3}.
\end{itemize}

Now we are ready to use these sight lines to construct all necessary Steiner points, specifically, the different types specified in \cref{lem:sightlines-steinerpoints}.

\begin{itemize}
\item Let $S^{(1)}$ be the intersections of a segment of $P$ and a (non-parallel) line in $(L^{(i)}\cup L^{(ii)})$.
Note that $|S^{(1)}| = O(n|\Scen|^2)$
\item Let $S^{(2)}$ be the intersections of a line in $(L^{(i)}\cup L^{(ii)})$ and a (non-parallel) line in $(L^{(i)}\cup L^{(ii)}\cup L^{(iii)})$. Note that $|S^{(2)}| = O(n^2|\Scen|^5)$
\item Let $S^{(3)} := S^{(iii)}$ be the intersections of two (non-parallel) lines in $L^{(i)}$. Note that $|S^{(3)}| = O(n^2|\Scen|^2)$.
\end{itemize}

Finally, we can, by \cref{lem:sightlines-steinerpoints}, define our ``small'' sets of potential Steiner points to consider:
\begin{itemize}
    \item $\Sint = \text{corners}(P)\cup S^{(1)}\cup S^{(2)}\cup S^{(3)}$ for internal corners of star pieces, with $|\Sint| = O(n^2|\Scen|^5)$.
    \item $\Sbor = \text{corners}(P)\cup S^{(1)}$ for corners of star pieces also on the boundary $\partial P$, with
    $|\Sbor| = O(n|\Scen|^2)$.
\end{itemize}
We additionally note that $\Scen\subseteq S^{(2)}\subseteq \Sint$ (since a point $A\in \Scen$ will lie on at least two lines in $L^{(i)}$ as $P$ has at least three non-collinear corners). Similarly $(\Scen\cap \partial P)\subseteq S^{(1)}\subseteq \Sbor$.

\begin{proof}[Proof of Lemma~\ref{lem:dp-points}]
Consider any minimum star partition $\mathcal{Q} = (Q_1, Q_2, \ldots Q_k)$---with star centers $(A_1, A_2, \ldots A_k)$---of $P$ that satisfies the premise of \cref{thm:dp}: that is each star center is in $\Scen$ and each piece touch the boundary of $P$ at some corner.

We now consider an area maximum partition 
 $\mathcal{Q}' = (Q'_1, Q'_2, \ldots Q'_k)$ with the same star centers $(A_1, A_2, \ldots A_k)$.
By \cref{lem:areamax,lem:sltypes,lem:sightlines-steinerpoints}, this partition must satisfy that each corner of $Q_i$ is in $\Sint$ and if this corner is also on the boundary $\partial P$ it must be in $\Sbor$. Indeed $L^{(i)}, L^{(ii)}, L^{(iii)}$ must contain all possible sight line types of \cref{lem:sltypes}, and so $S^{(1)}$, $S^{(2)}$, $S^{(3)}$ must contain all potential Steiner points as specified in \cref{lem:sightlines-steinerpoints}.

What remains is to argue that each star piece touches the boundary of $P$. This is non-trivial, and unfortunately does not seem to follow directly from area-maximality. Instead we use the fact that the star pieces in the original partition $\mathcal{Q}$ touched the boundary at some corner. For each piece $Q_i$ we can choose an arbitrary sight line $r_i = A_iB_i$ to a corner $B_i$ of $P$. Intuitively, we then ``fix'' this sight line before doing area-maximality. That is, we instead consider $\mathcal{Q}'$ to be an area maximum partition where the star centers $A_i$ are fixed, \emph{and the chosen sight lines $r_i$ must be contained in piece $Q_i$}.

Formally, we can do this by changing the input polygon $P$ into a weakly simple polygon $P'$ defined as follows. For each chosen sight line $r_i$ we add it as an ``incision'' to $P'$ (so now $P'$ is a weakly simple polygon).
Note that these incisions cannot intersect except at their endpoints. The partition $\mathcal{Q}$ is also a minimum star-partition of $P'$, but now each star center is at a corner of $P'$. If $\mathcal{Q}'$ is chosen to be area maximum in this new polygon $P'$, we can then look at $\mathcal{Q}'$ as a partition of $P$ where we assign the ``incision'' $r_i$ to piece $Q_i'$. This means that each piece must touch the boundary (perhaps only because of a degenerate ray from the star center to some corner at $P$, but this is acceptable since we allow the pieces to be weakly simple polygons).

What remains is to argue that $\Sint$ and $\Sbor$ are still sufficient, i.e.\ that we did not introduce any new Steiner points. All new corners of $P'$ were star centers, so we did not introduce any additional sight lines for \cref{lem:sltypes}. The additional edges of $P'$ are the ``incisions'' $r_i$, but these will all be in $L^{(i)}$ (lines from star centers to corners of $P$), so they are considered as potential endpoints of sight lines in item 2 (instead of item 1) in \cref{lem:sightlines-steinerpoints}.
\end{proof}

\subsubsection{Dynamic Programming Algorithm}

\begin{figure}
\centering
\includegraphics[page=3]{dynprog.pdf}
\caption{A short separator $B'_1$-$A'_1$-$B'_2$ and a long separator $B_1$-$A_1$-$Z$-$A_2$-$B_2$ as part of a star partition.}
\label{fig:dp-separators}
\end{figure}

We now provide our dynamic programming algorithm (\cref{alg:dp}) that will consider each possible star-partition satisfying the properties of \cref{lem:dp-points}, and thus will find an optimal partition given the set $\Scen$. 
Let $B_1, B_2$ be two points on $\partial P$, $P[B_1:B_2] \subset \partial P$ be the chain from $B_1$ to $B_2$ along $\partial P$ in counterclockwise order. 
We consider the separators (see also \cref{fig:dp-separators}):
\begin{itemize}
\item Short separator of the form $B_1$-$A_1$-$B_2$ for $B_1,B_2\in \Sbor$, and $A_1\in \Scen$. 
\item Long separator of the form $B_1$-$A_1$-$Z$-$A_2$-$B_2$ for $B_1,B_2\in \Sbor$, $A_1,A_2\in \Scen$, and $Z\in \Sint$.
\end{itemize}

In the dynamic program, we will, for a given separator, calculate an optimal way to partition the subpolygon $P'$ enclosed by $P[B_1:B_2]$ and the separator, 
given that there are star centers already placed at $A_1$ (and $A_2$ in case of a long separator) on the separator. Since each piece touches the boundary, we will see that it is sufficient to consider separators passing through at most two star pieces.
We describe a few elementary ways to build separators for larger and larger subpolygons $P'$ by e.g.\ merging two separators or moving the common corner point $Z$.
In figure \cref{fig:dp-cases} and the pseudo-code \cref{alg:dp} we can see the different cases we consider for transitions. We also explain the cases here:

\begin{figure}
\centering
\includegraphics[width=.9\textwidth,page=1]{dynprog.pdf}
\caption{The different transitions we need to consider for the dynamic program algorithm.
Cases 0-2 concerns short separators $B_1$-$A_1$-$B_2$, 
and Cases 3-5 concerns long separators $B_1$-$A_1$-$Z$-$A_2$-$B_2$, and we want to solve the subpolygon ``below'' these separators.
Curve parts indicate that the details have now been shown.}
\label{fig:dp-cases}
\end{figure}

\newcommand\DPCase[1]{Case~\hyperref[itm:case#1]{#1}}

\begin{description}
 \item[Case 0:\phantomsection\label{itm:case0}] (Base Case) In the base case we consider trivial short separators, $B_1$-$A_1$-$B_2$ where either $B_1 = B_2$, or $B_2$ is next to $B_1$ in counterclockwise order. 
 Here $B_1 A_1 B_2$ forms a possibly degenerate triangle with one side on the boundary $\partial P$, that can be assigned to the star piece with center $A_1$
 \item[Case 1:\phantomsection\label{itm:case1}] (Merge short + short) A short separator $B_1$-$A_1$-$B_2$ can be seen as the ``merge'' of two other short separators $B_1$-$A_1$-$B'$ and $B'$-$A_1$-$B_2$ for some $B'\in \Sbor \cap P[B_1:B_2]$.
 \item[Case 2:\phantomsection\label{itm:case2}] (New star center) When a short separator $B_1$-$A_1$-$B_2$ is neither trivial (\DPCase{0}) or the merge of two short separators (\DPCase{1}), some other star center $A'$ must be able to see the boundary point $B_1$ too. This becomes a long separator $B_1$-$A'$-$B_1$-$A_1$-$B_2$, where the segment $B_1A'$ is a ``spike'' that occurs twice.
 \item[Case 3:\phantomsection\label{itm:case3}] (Combine short + short) A long separator $B_1$-$A_1$-$Z$-$A_1$-$B_2$ where $Z$ is on the boundary somewhere between $B_1$ and $B_2$ can be decomposed into two short separators $B_1$-$A_1$-$Z$ and $Z$-$A_2$-$B_2$.
 \item[Case 4:\phantomsection\label{itm:case4}] (Move common corner) A long separator $B_1$-$A_1$-$Z$-$A_1$-$B_2$ can also arise by moving the common corner $Z$ from some other point $Z'$, where $B_1$-$A_1$-$Z'$-$A_1$-$B_2$  is also a long separator. Here the triangles $A_1 Z' Z$ and $A_2ZZ'$ can be assigned to the star piece with centers $A_1$ and $A_2$ respectively.
 \item[Case 5:\phantomsection\label{itm:case5}] (Merge long + long) For a long separator $B_1$-$A_1$-$Z$-$A_1$-$B_2$, if neither the common corner $Z$ can be moved (Case 4), nor it is on the boundary (\DPCase{3}), there must exists some other star center $A'$ that can see $Z$. The star piece with center $A'$ must also touch the boundary at some point, say at $B'$. Then 
 our separator is a ``merge'' of two long ones: 
 $B_1$-$A_1$-$Z$-$A'$-$B'$
 and  $B'$-$A'$-$Z$-$A_2$-$B_2$.
\end{description}
Note that it is only in \DPCase{0} and \DPCase{4} where we actually assign some positive area of $P$ to some star piece.
Whenever we say \textbf{``$X$ can see $Y$''} in \cref{alg:dp}, we mean that the segment $XY$ is contained within the subpolygon of $P$ restricted by the separator.

\SetKwFunction{SolveSep}{SolveSeparator}

To use the dynamic programming algorithm to find an optimal star partition, we arbitrarily pick consecutive points $B_1, B_2\in \Sbor$ on the boundary of $P$, where $B_2$ is next to $B_1$ in clockwise order. 
There must be some star piece seeing this segment, so we can simply try each possibility of star centers $A\in \Scen$ and call $\SolveSep(B_1,A,B_2)$ to find the optimal solution given that $A$ sees the segment $B_1B_2$ (and then just return the best solution we found). 
Note that we do not consider $B_1$ and $B_2$ to be ``adjacent'' here for \DPCase{0}, as the region enclosed by $P[B_1:B_2]$ and $B_1$-$A$-$B_2$ is $P \setminus B_1AB_2$. 

\begin{observation}
\label{obs:multiple-solutions}
Note that the above actually gives us all possible positions, in optimal solutions, for star centers $A\in\Scen$ that see the segment $B_1B_2$. This will be useful later in the full algorithm.
\end{observation}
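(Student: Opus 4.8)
The observation says precisely this: writing $\mathcal A^\star$ for the set of $A\in\Scen$ for which the top-level call $\SolveSep(B_1,A,B_2)$ returns a minimum star partition of $P$, and $\mathcal A^{\mathrm{opt}}$ for the set of points of $\Scen$ that, in some minimum star partition of $P$ of the kind produced by \cref{lem:dp-points}, are the star center of the piece containing the boundary segment $B_1B_2$, we must show $\mathcal A^\star=\mathcal A^{\mathrm{opt}}$; since a piece containing $B_1B_2$ is star-shaped around its center, every element of $\mathcal A^{\mathrm{opt}}$ does see $B_1B_2$, which is the phrasing in the statement. The plan is to prove the two inclusions, after first nailing down what ``the piece containing $B_1B_2$'' means.

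The combinatorial input is: since $B_1,B_2$ are consecutive in $\Sbor$ and $\mathrm{corners}(P)\subseteq\Sbor$, one of the two arcs of $\partial P$ delimited by $B_1,B_2$ contains no other point of $\Sbor$, in particular no corner of $P$, so it lies on a single edge of $P$ and equals the segment $B_1B_2$; and in any star partition as in \cref{lem:dp-points}, the piece that a point of $\partial P$ belongs to can change only at a corner of a piece that lies on $\partial P$, and all such corners are in $\Sbor$. For the latter: if two distinct pieces $Q\neq Q'$ both met the relative interior of $B_1B_2$ in more than one point, a common point $p$ would lie in $\partial Q\cap\partial Q'$, which by \cref{lem:newboundaryinsightlines} is a union of at most two segments; this union cannot run along $\partial P$, since one side of such a shared segment is interior to $Q$ and the other to $Q'$ whereas one side of $B_1B_2$ is the exterior of $P$, so $Q$ and $Q'$ meet $B_1B_2$ in the single point $p$, a common corner of both. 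Hence $B_1B_2$ is contained in a unique piece $Q$, whose center $A\in\Scen$ satisfies $B_1AB_2\subseteq Q$ (as $Q$ is star-shaped around $A$), so $A$ sees $B_1B_2$. In particular $\mathcal A^{\mathrm{opt}}\neq\emptyset$ by \cref{lem:dp-points}, and every $A\in\mathcal A^{\mathrm{opt}}$ is among the star centers the top-level loop actually passes to $\SolveSep$.

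Both inclusions then follow from correctness of the dynamic program. For $\mathcal A^{\mathrm{opt}}\subseteq\mathcal A^\star$: if $\mathcal Q$ is a partition as above whose $B_1B_2$-containing piece has center $A$, then because the triangle $B_1AB_2$ sits against $\partial P$, the restriction of $\mathcal Q$ to $P\setminus B_1AB_2$, obtained by absorbing the triangle into the (still star-shaped) piece at $A$, is exactly the configuration the separator $B_1$-$A$-$B_2$ is designed to certify; so $\SolveSep(B_1,A,B_2)$ returns at most $|\mathcal Q|$ pieces, i.e.\ the optimum, and $A\in\mathcal A^\star$. For $\mathcal A^\star\subseteq\mathcal A^{\mathrm{opt}}$: soundness of the dynamic program guarantees that whenever $\SolveSep(B_1,A,B_2)$ outputs a value it actually exhibits a valid star partition of $P$ attaining it, one of whose pieces is centered at $A$ and contains $B_1AB_2$ (hence the arc $B_1B_2$); if that value is the optimum, this partition witnesses $A\in\mathcal A^{\mathrm{opt}}$.

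The only non-routine ingredient is the correctness of $\SolveSep$ itself --- that the transitions of Cases~$0$--$5$ together neither undercount nor overcount the optimum of the separator-bounded subpolygon --- which is established in the remainder of \cref{sec:dp} and which I would simply cite here. Everything else is bookkeeping on top of \cref{lem:dp-points}, \cref{cor:outer-planarity} and \cref{lem:newboundaryinsightlines}; the one point in that bookkeeping that deserves a separate argument is the claim, handled in the second paragraph, that the ownership of $\partial P$ by pieces switches only at corners of pieces.
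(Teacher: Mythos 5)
Your proof is correct and follows the route the paper leaves implicit: the observation is stated without a separate proof, resting on the correctness of the dynamic program together with the fact that two consecutive points of $\Sbor$ delimit a boundary segment owned by a single piece (since all $\partial P$-corners of pieces lie in $\Sbor$ by \cref{lem:dp-points}). Your two inclusions and the uniqueness argument for the owning piece are exactly the bookkeeping the paper intends, so there is nothing substantively different to compare.
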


\SetKwFunction{FMain}{SolveSeparator}
\begin{algorithm}[!htb]
\caption{Dynamic programming algorithm}
\label{alg:dp}
\small
\DontPrintSemicolon
\SetKwData{opt}{opt}
\SetKwProg{Fn}{function}{:}{}
\Fn{\FMain{$B_1$, $A_1$, $B_2$}}{

    \tcp{returns the minimum number of additional (not counting $A_1$) star pieces needed to cover the enclosed region of $P[B_1:B_2]$ and $B_1$-$A_1$-$B_2$.}

    Let $P'$ be the region enclosed by $P[B_1:B_2]$ and $B_1$-$A_1$-$Z$-$A_2$-$B_2$, compute visibility of $\Scen \cup \Sbor \cup \Sint$ within $P'$\;
    
    $\opt \gets n$\;
    
    \If(\tcp*[f]{\DPCase{0}: base case}){$B_1 = B_2$ or $B_2$ is next to $B_1$ in counterclockwise order}{
        $\opt \gets 0$\;
    }

    \For(\tcp*[f]{\DPCase{1}: merge short + short}){$B'\in(\Sbor\setminus\{B_1, B_2\}) \cap P[B_1:B_2]$}{
        \If{$A_1$ can see $B'$}{
            $\opt \gets \min(\opt, {\FMain}(B_1, A_1, B') + {\FMain}(B', A_1, B_2))$\;
        }
    }
    
    \For(\tcp*[f]{\DPCase{2}: new star center}){$A'\in(\Scen\setminus\{A_1\}) \cap P'$}{
    \If{$A'$ can see $B_1$}{
        $\opt \gets \min(\opt, 1 + {\FMain}(B_1, A', B_1, A_1, B_2))$\;
        }
    }
   
    \KwRet{\opt}\;
}
\;
\Fn{\FMain{$B_1$, $A_1$, $Z$, $A_2$, $B_2$}}{
    \tcp{returns the minimum number of additional (not counting $A_1$ or $A_2$) star pieces needed to cover the enclosed region of $P[B_1:B_2]$ and $B_1$-$A_1$-$Z$-$A_2$-$B_2$}

    Let $P'$ be the region enclosed by $P[B_1:B_2]$ and $B_1$-$A_1$-$Z$-$A_2$-$B_2$, compute visibility of $\Scen \cup \Sbor \cup \Sint$ within $P'$\;
    
    $\opt \gets n$\;
    
    \If(\tcp*[f]{\DPCase{3}: combine short + short}){$Z\in\Sbor \cap P[B_1:B_2]$}{
        $\opt \gets \min(\opt, {\FMain}(B_1, A_1, Z) + {\FMain}(Z, A_2, B_2))$\;
    }
    
    \For(\tcp*[f]{\DPCase{4}: move common corner}){$Z' \in (\Sint \setminus \{Z\}) \cap P'$}{
        \If{all three of $A_1$, $A_2$ and $Z$ can see $Z'$}{
            $\opt \gets \min(\opt, {\FMain}(B_1, A_1, Z', A_2, B_2))$\;
        }
    }
    
    \For(\tcp*[f]{\DPCase{5}: merge long + long}){$A'\in(\Scen\setminus\{A_1, A_2\}) \cap P'$}{
        \For{$B'\in\Sbor \cap P[B_1:B_2]$}{
            \If{$A'$ can see $B'$ and $Z$}{
                $\opt \gets \min(\opt, 1 + {\FMain}(B_1, A_1, Z, A', B') + {\FMain}(B', A', Z, A_2, B_2))$\;
            }
        }
    }

  \KwRet{\opt}\;
}
                                                                                                                                  
\end{algorithm}

\paragraph{Correctness.}
We now argue that \cref{alg:dp} is correct, that is that the optimal solution can be constructed using the transitions (cases) in \cref{fig:dp-cases}.
Suppose we have some optimal partition satisfying the properties of \cref{lem:dp-points}. We will show that the dynamic program will consider this optimal solution.

Let us first consider the case that we are looking at a short separator $B_1$-$A_1$-$B_2$ in this optimal partition. If either $B_1 = B_2$ or $B_1$ and $B_2$ are consecutive points on the border of $P$ in $\Sbor$, we are in \DPCase{0}. Otherwise, in the optimal partition, either the star piece with center $A_1$ will also touch the boundary somewhere in between $B_1$ and $B_2$, or not. In case it does, it must touch in a point $B'\in \Sbor$, where we naturally have two short separators of sub-regions $B_1$-$A_1$-$B'$ and $B'$-$A_1$-$B_2$, which is handled by \DPCase{1}. In case it does not, there must be some other star piece (say with center $A'$) that touches $B_1$, and then we are in \DPCase{2} with long separator $B_1$-$A'$-$B_1$-$A_1$-$B_2$.

Now suppose instead that we are looking at a long separator $B_1$-$A_1$-$Z$-$A_2$-$B_2$ that is part of the optimal partition. This means that the two pieces with centers $A_1$ and $A_2$ touch. Note that they will touch in a single contiguous internal boundary (\cref{lem:newboundaryinsightlines} give a complete characterization of how this boundary can look; it is either a single point or up to two line segments). Note that $Z$ must be a point on this contiguous internal boundary. If $Z$ is not the last corner on this boundary, we can move it to the next corner $Z'$, as in \DPCase{4}. If $Z$ instead was the last corner on this boundary between pieces with centers $A_1$ and $A_2$, we have two sub-cases: (i) either $Z$ is on the boundary of the polygon $P$, or (ii) else there must be some other star piece touching $Z$. In sub-case (i) it must be the case that $Z\in \Sbor$, and we have two natural short separators for sub-regions: $B_1$-$A_1$-$Z$ and $Z$-$A_2$-$B_2$, as handled by \DPCase{3}. In sub-case (ii), let $A'\in \Scen$ be the star center of the additional piece touching $Z$ in the optimal partition. Note that $A'$ must also touch the boundary of $P$ somewhere (\cref{lem:dp-points}), say in point $B'\in \Sbor$. Again, we have two natural (long) separators for sub-regions: $B_1$-$A_1$-$Z$-$A'$-$B'$ and $B'$-$A'$-$Z$-$A_2$-$B_2$, which is handled by \DPCase{5}.

To conclude, any optimal partition satisfying the properties of \cref{lem:dp-points} must be constructable by Cases \hyperref[itm:case0]{0}-\hyperref[itm:case5]{5}. Since \cref{alg:dp} considers all these cases as transitions, it will find an optimal partition of $P$ into star-shaped pieces.

\paragraph{Running time.}
To make \cref{alg:dp} run in polynomial time we assume standard memoization, i.e.\ that if \FMain is called several times with the same arguments it only needs to be solved once. Since $\Sbor$ and $\Sint$ are both of size $\poly(n,|\Scen|)$, it is clear that we have a polynomial many separators and polynomially many transitions, and therefore \cref{alg:dp} runs in $O(\poly(n,|\Scen|))$ time (proving \cref{thm:dp}). Below we analyze the complexity in more detail. 

Note that in the 3-parameter function \FMain{$B_1, A_1, B_2$} (short separators), we have $O(|\Sbor|^2\cdot|\Scen|)$ states, as $B_1,B_2\in \Sbor$ and $A_1\in \Scen$.
Similarly, in the 5-parameter function \FMain{$B_1, A_1, Z, A_2, B_2$} (long separators), we have $O(|\Sbor|^2\cdot|\Scen|^2\cdot \Sint)$ states, as $B_1,B_2\in \Sbor$, $A_1,A_2\in \Scen$, and $Z\in \Sint$.
We count the number of transitions in the algorithm for each ``Case'':
\begin{description}
\item[\DPCase{0}:] $O(1)$ transitions for $O(|\Sbor|^2\cdot |\Scen|)$ many separators.
\item[\DPCase{1}:] $O(|\Sbor|)$ transitions for $O(|\Sbor|^2\cdot |\Scen|)$ many separators.
\item[\DPCase{2}:] $O(|\Scen|)$ transitions for $O(|\Sbor|^2\cdot |\Scen|)$ many separators.
\item[\DPCase{3}:] $O(1)$ transitions for $O(|\Sbor|^2\cdot |\Scen|^2\cdot |\Sint|)$ many separators.
\item[\DPCase{4}:] $O(|\Sint|)$ transitions for $O(|\Sbor|^2\cdot |\Scen|^2\cdot |\Sint|)$ many separators.
\item[\DPCase{5}:] $O(|\Scen|\cdot |\Sbor|)$ transitions for $O(|\Sbor|^2\cdot |\Scen|^2\cdot |\Sint|)$ many separators.
\end{description}
We see that \DPCase{4} and \DPCase{5} dominate all other cases, for a total of
\begin{equation*}
O(|\Sbor|^3\cdot|\Scen|^3\cdot|\Sint| +
|\Sbor|^2\cdot|\Scen|^2\cdot|\Sint|^2)
\end{equation*}
many transitions.
For each of these transitions, we might need to go through all $O(n)$ segments of the polygon to verify the ``$X$ can see $Y$'' statements, adding another factor of $O(n)$ to the running time. Since $|\Sbor|, |\Sint| = \poly(n,|\Scen|)$, we have proved \cref{thm:dp}.
\begin{remark}
Plugging in $|\Sbor| = O(n|\Scen|^2), |\Sint| = O(n^2|\Scen|^5)$, we see that the total number of transitions we have is
$O(n^6|\Scen|^{16})$, assuming $|\Scen| \ge n$. In the final algorithm we will have $|\Scen| = O(n^6)$, making for a total of $O(n^{102})$ transitions!
This means there are a total of $O(n^{103})$  arithmetic operations to run the dynamic programming algorithm (to check the ``$X$ can see $Y$'' statements). 
According to \cref{cor:bit-complexity} and \cref{lem:sightlines-steinerpoints}, we will consider only points with degree $O(n)$ (i.e.\ all points can be described by $O(n)$ arithmetic operations from the input points), so we can perform arithmetic operations in $O(n^2)$ time naively. 
Therefore the total running time of \cref{alg:dp}, for our purposes, is $O(n^{105})$, as stated in \cref{rem:dp}.
\end{remark}

\subsection{Finding Star Centers \& Full Algorithm}
\label{sec:full-algo}

Now we turn to show our full polynomial time algorithm, thus proving \cref{thm:main-theorem}. We note that if we can find a relatively small set of potential star centers, we can simply use our dynamic programming algorithm (\cref{alg:dp} and \cref{thm:dp}). However, we will see that in order to find such a sufficient set of potential star centers, we will need to solve smaller instances of the same problem (where we need to invoke the algorithm recursively).

\paragraph{Constructable partitions.}
Throughout this section, we will let the \emph{root edge} $r$ be an arbitrary edge of $P$. We will focus our attention to optimal partitions that can be constructed using the process defined in \cref{sec:structure}---specifically by \cref{thm:constructability}---and call such a star-partition \emph{constructable}. In particular, we recall that a \emph{constructable} partition satisfies the following properties:
\begin{enumerate}[label=(\roman*)]
    \item\label{itm:consti} It is optimal, that is it uses a minimum number of star pieces.
    \item\label{itm:outer-planar} Each star piece touches the boundary of $P$ at some corner.
    \item All tripods in the partition are oriented towards the root face (the face with the root edge $r$).
    \item\label{itm:constiv} All star centers are at the intersection of two lines, each of these lines are either an edge of $P$, a diagonal between two concave corners of $P$, or a line through a tripod point and one of its supporting corners. 
\end{enumerate}
Indeed, by \cref{thm:constructability} (and \cref{lem:outer-planar} for item~\ref{itm:outer-planar}) there must exist some constructable partition. However, restricting ourselves to constructable partitions is not enough to get an efficient algorithm: in general there are a double-exponential number of points that appear as star centers in some constructable partition. Therefore we seek to restrict our class of optimal partitions further, and here the \emph{greedy choice} comes into play (defined below in \cref{sec:greedy-choice}).

Before presenting the \emph{greedy choice}, we prove a simple lemma saying that the partition inside the pseudo-triangle of a tripod is not particularly important, and that any partition outside this pseudo-triangle can always be extended to cover the pseudo-triangle too. This will be useful for our algorithm, since we can then focus on solving sub-problems defined by diagonals of $P$ (and not defined by the unknown tripod).

\begin{lemma}
\label{lem:extending-to-tripod}
(See \cref{fig:extending-to-tripod}).
Suppose $T$ is a tripod supported by corners $(D_1, D_2, D_3)$ in some constructable partition, with tripod point $C$. Let $P'$ be one of the three sub-polygons $T$ splits $P$ into (say between corners $D_1$ and $D_3$), and say $A_1\in P'$ is the star centers participating in $T$ through corner $D_1$. Let $\Delta$ be the pseudo-triangle of the points $(D_1, D_2, D_3)$. Note that $P'\setminus \Delta$ consists of several (at least one) sub-polygons, call them $P_1, P_2, \ldots, P_k$, where $A_1\in P_1$.

Then \emph{any} partition of $P_1, P_2, \ldots, P_k$, where $A_1$ is a star center in $P_1$ seeing corner $D_1$, can be extended to a partition of $P'$ (without moving star centers) that $A_1$ sees $C$. 
\end{lemma}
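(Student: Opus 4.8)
The plan is to decompose $P'$ into the part already partitioned, namely $P' \setminus \Delta = P_1 \cup \dots \cup P_k$, and the ``inner'' region $R \coloneqq \Delta \cap P'$, and then to (a) grow the piece of the given partition $\mathcal{P}$ through $A_1$ so that it reaches $C$, and (b) cover $R$ by additional star-shaped pieces. First I would record the tripod geometry that makes this work: since $A_1$, $D_1$, $C$ are collinear with $D_1$ strictly between $A_1$ and $C$, the segment $A_1C$ splits as $A_1D_1 \cup D_1C$; the leg $D_1C$ is one of the two legs of $T$ bordering the sub-polygon $P'$, so it lies on $\partial P'$, hence on $\partial R$; and, writing $\hat Q_1$ for the piece of $\mathcal{P}$ containing $A_1$, the sub-segment $A_1 D_1$ lies in $\hat Q_1$, because $\hat Q_1$ is star-shaped with center $A_1$ and contains $D_1$ (which we may assume, attaching the segment $A_1D_1$ to $\hat Q_1$ first if necessary). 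In particular $A_1 C \subseteq \overline{P'}$.

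Next I would make $A_1$ see $C$ almost for free, exploiting that pieces are permitted to be weakly simple (\Cref{sec:prelim}): replace $\hat Q_1$ by $\hat Q_1' \coloneqq \hat Q_1 \cup D_1C$. This is a weakly simple star-shaped polygon with the same center, since $A_1 y = A_1 D_1 \cup D_1 y \subseteq \hat Q_1 \cup D_1 C$ for every $y \in D_1C$; it contains $A_1 C$, so $A_1$ now sees $C$; and since it attaches only a boundary segment of $P'$, it introduces no overlap.

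Finally I would cover $R$ by any star partition of it whatsoever---for instance a triangulation, each triangle of which is star-shaped---and add those pieces to $\mathcal{P}$ (with $\hat Q_1$ replaced by $\hat Q_1'$). Because $R$ meets $P_1 \cup \dots \cup P_k$ only along $\partial R$ (the pseudo-diagonal joining $D_1$ and $D_3$, together with the two legs), the result is a star partition of $P'$ extending $\mathcal{P}$ in which $A_1$ sees $C$. The only delicate point---and the step I expect to be the main (if mild) obstacle---is the bookkeeping inside the weakly-simple framework: one must check that all these pieces admit a common arbitrarily small perturbation into interior-disjoint simple polygons covering a perturbation of $P'$, which is arranged by letting the whisker $D_1C$ of $\hat Q_1'$ hug $\partial P'$ and take an arbitrarily thin sliver of $R$ while the triangle of the triangulation bordering $D_1C$ shrinks accordingly. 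Given the conventions of \Cref{sec:prelim}, this is routine, so the genuinely load-bearing facts are just the collinearity of $A_1$, $D_1$, $C$ and that the leg $D_1C$ lies on $\partial P'$.
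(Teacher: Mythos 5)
There is a genuine gap, and it lies in your final step: you cover the region $R=\Delta\cap P'$ with \emph{new} pieces (a triangulation of $R$). The entire content of this lemma---the reason it is not trivial---is that the extension must not increase the number of pieces: the partition of $P'$ is obtained from the given partition of $P_1,\dots,P_k$ by \emph{enlarging the existing pieces only}. Every application relies on this: in \cref{lem:less-restrictive} the extended partition must again be \emph{optimal}, and the algorithm's correctness rests on minimum partitions of $P'\setminus\Delta$ and minimum partitions of $P'$ respecting the tripod having the same piece count. Your triangulation adds new pieces and new star centers, so the resulting partition of $P'$ is strictly larger than the one you started from, and the transfer of optimality breaks down. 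The paper instead absorbs $R$ into the existing pieces: every boundary segment of the given partition that touches the pseudo-diagonal from $D_1$ to $D_3$ is prolonged into $R$ until it meets the legs $D_1C\cup CD_3$ or a previously drawn extension, with $Q_1$ treated specially so that it receives the whole region between the leg $D_1C$ and the extension of the ray from $A_1$ through the endpoint of $Q_1$'s share of the pseudo-diagonal. This forces two verifications your route never needs: that all newly created corners are convex (so the enlarged pieces stay star-shaped), and that $R$ is entirely covered, which requires that no edge of $P$ lies on the pseudo-diagonal---a fact the paper derives from $T$ being a tripod of a constructable partition.

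A secondary problem is the ``whisker'' $\hat Q_1\cup D_1C$: it makes $A_1$ see $C$ only in a degenerate, one-dimensional sense. Downstream, $Q_1$ must serve as one of the three pieces of a tripod at $C$, which requires a \emph{strictly} convex corner of $Q_1$ at $C$ and that the three pieces together contain a disk around $C$; a zero-width whisker provides neither. So even setting the count issue aside, the extension you build does not yield a partition in which the tripod at $C$ can actually be realized when combined with the partitions of the other two subpolygons.
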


\begin{figure}
\centering
\includegraphics[width=.9\textwidth,page=6]{dynprog.pdf}
\caption{Modification as in  \cref{lem:extending-to-tripod}. The pseudo-triangle $\Delta$ splits the polygon up into multiple sub-polygons, where we let $P_1, P_2, \ldots$ be those on the same ``side'' as the star-center $A_1$. Any partitions of $P_1, P_2, P_3, \ldots$ can be extended to the polygonal line $D_3$-$C$-$D_1$ in such a way that $A$ sees the tripod point $C$. We first give the blue region to $Q_1$, then extend all the segments touching the pseudo-diagonal $D_1D_3$ one by one, from right to left, until it meets any existing segment. 
}
\label{fig:extending-to-tripod}
\end{figure}

\begin{proof}
See \cref{fig:extending-to-tripod}. 
Let $s_1, \dots, s_m$ be the set of segments in the partition of $P_1, \ldots, P_k$ that touches $\partial \Delta$, ordered by the touching point from $D_1$ to $D_3$. 
We simply extend $s_1, \ldots, s_m$ one by one, until they meet the separator $D_3$-$C$-$D_1$ or previous extended boundaries. 

We will handle the piece $Q_1$ with star center $A_1$ in a special manner, since we want it to see the tripod point $C$. So let $Z$ be the point on $\partial \Delta$ so that $Q_1$ contains the segment $D_1Z$ on this boundary. Then we extend the line $A_1Z$ until it meets the line $D_3C$ first, and assign the blue region to $Q_1$. 
This clearly leads to a partition of $P'$ while keeping the assignment on $P_1 \cup \ldots \cup P_k$. 
Since $C$ is a convex corner of $P'$, and all new angles are intersection of a ray and a straight line (which must be convex), all pieces must remain in star shape. 

What remains is to argue that everything inside $\Delta$ is covered by our extended partition. This is true as long as the pseudo-diagonal between $D_1$ and $D_3$ contains no edge of $P$ (i.e. $\partial \Delta\cap \partial P$ just contains a finite amount of points, and no line segments). To argue this we use that $T$ was a tripod of some constructable solution. In particular this means that there exists some partition where $T$ is a tripod and no star center lies within the pseudo-triangle $\Delta$. Hence, if there was some edge $e$ of $P$ contained in $\partial \Delta$, then the interior of this edge could not be seen by any star center in such a constructable partition, which is our desired contradiction.
\end{proof}

\subsubsection{Greedy Choice}
\label{sec:greedy-choice}

\begin{figure}
\centering
\includegraphics[width=.95\textwidth,page=7]{dynprog.pdf}
\caption{\emph{Left:} Two child star-centers $A_1$ and $A_2$ define a (fake) tripod with tripod point $C$, splitting the polygon into three subpolygons: ``childs'' $P_1, P_2$, and ``parent'' $P_3$. The angle $\varphi = \angle C D_3 D'$ of this tripod is a measure on how ``restricted'' a potential star-center $A_3$ defined by this tripod is. \emph{Middle:} Two other star centers $A'_1$ and $A'_2$ define another fake tripod on the same support, which is less restrictive (i.e., with an angle $\varphi' < \varphi$). Inside $P_3$ (the parent-side of the tripod), the same partition is shown (in red and blue) as on the left. \emph{Right:} The star partition of subpolygon $P_3$ can be adjusted (without moving any star centers) to also work with the less restrictive tripod.}
\label{fig:greedy-choice-angle}
\end{figure}

Consider three concave corners $(D_1, D_2, D_3)$ of $P$, that might support a tripod in some \emph{constructable} partition.
We now argue that if there are many possibilities for how the legs of a tripod supported by $(D_1, D_2, D_3)$ look like, then it suffices to consider a single one of these possibilities! We will call this arrangement the \emph{greedy choice} of this tripod. Recall that the tripod point is constructed by two of the sub-polygons $P_1,P_2$, and used to define the third sub-polygon $P_3$ (see \cref{fig:greedy,fig:greedy-choice-angle}). In particular, $P_1$ and $P_2$ are the children in the \emph{tripod tree} (see \cref{fig:tripod-tree}), and $P_3$ the parent. We argue that the greedy choice will be the combination of optimal solutions in the two subpolygons $P_1$ and $P_2$ that give rise to the \emph{least restrictive} tripod-center when constructing the optimal solution for $P_3$. With \emph{least restrictive} we mean the one that minimizes the angle $\varphi = \angle C D_3 D'$ as in \cref{fig:greedy-choice-angle}, as such a tripod will only impose the mildest restrictions on where the star center $A_3\in P_3$ participating in the tripod lies.

We begin by proving that it never hurts replacing a tripod with a less restrictive one, see also \cref{fig:greedy-choice-angle}. 
\begin{lemma}
\label{lem:less-restrictive}
Suppose that there is a tripod $\mathcal{T}$ with tripod point $C$ supported by three corners $(D_1,D_2,D_3)$, part of a constructable partition $\mathcal{Q}$.
Let $\Delta$ be the pseudo-triangle of the tripod, and consider the three sub-polygons $P_1, P_2, P_3$ in $P\setminus \Delta$ participating in the tripod, such that the parent star center of $\mathcal{T}$ is contained in $P_3$, as in \cref{fig:greedy-choice-angle,fig:greedy}.

Suppose now that there is some other constructable sub-partitions of $P_1$ and $P_2$ (using the same number of star pieces as in the original one) giving rise to another fake tripod $\mathcal{T}'$ (supported on the same three concave corners $D_1, D_2, D_3$) with tripod point $C'$, that is less restrictive (the angle $\varphi$ in \cref{fig:greedy-choice-angle} is smaller) for $P_3$. Then there also exists a constructable optimal star partition of $P$ that contains these new sub-partitions of $P_1$ and $P_2$, and the fake tripod $\mathcal{T}'$.
\end{lemma}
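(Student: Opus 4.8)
The plan is to turn the given constructable partition $\mathcal{Q}$ into a new optimal partition $\mathcal{Q}^{\star}$ that already carries the prescribed new sub-partitions of $P_1,P_2$ and the fake tripod $\mathcal{T}'$, and then to repair its constructability with \cref{lem:construct-from-middle} without disturbing those objects. The key observation for the surgery is that the pseudo-triangle $\Delta$ depends only on the supports $(D_1,D_2,D_3)$, so it is common to $\mathcal{T}$ and $\mathcal{T}'$; hence $P\setminus\Delta$ has the same connected components for both tripods, and $\mathcal{T},\mathcal{T}'$ differ only in how their legs cut $\Delta$. I build $\mathcal{Q}^{\star}$ as follows: on $P_1$ and $P_2$ install the given new constructable sub-partitions (with child star centers $A_1',A_2'$ seeing $D_1,D_2$); on every other component of $P\setminus\Delta$, including $P_3$ and any further components, keep the restriction of $\mathcal{Q}$; and re-fill $\Delta$ according to the legs $D_1C',D_2C',D_3C'$ of $\mathcal{T}'$ by the boundary-extension procedure in the proof of \cref{lem:extending-to-tripod}, treating, on the side of $A_1'$ and on the side of $A_2'$, the piece containing the respective child star center in the special manner of \cref{lem:extending-to-tripod} so that it reaches $C'$. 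This extension procedure applies even though $\mathcal{T}'$ is only a fake tripod: its only use of ``tripodness'' is to exclude an edge of $P$ on a pseudo-diagonal, which here follows from the new sub-partitions being constructable, since they place no star center in $\Delta$ and an edge of $P$ on $\partial\Delta$ could not be seen by any star center.

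The hypothesis $\varphi'\le\varphi$ (i.e.\ that $\mathcal{T}'$ is less restrictive, cf.\ \cref{fig:greedy-choice-angle}) is exactly what makes the parent side work out. There the region changes from $R_3$ (bounded by the legs $D_3C$, $D_xC$) to $R_3'$ (bounded by $D_3C'$, $D_xC'$), and $R_3$ and $R_3'$ agree outside $\Delta$. Because $\varphi'\le\varphi$, the leg $D_3C'$ lies on the side of $D_3C$ towards the edge of $P$ at $D_3$ belonging to the parent side, so the only change to the parent piece $Q_3$ near $D_3$ is that it loses the thin wedge of $\Delta$ between $D_3C'$ and $D_3C$ to a child side; this wedge lies inside $\Delta$ and so contains no star center, so trimming $Q_3$ against $D_3C'$ keeps it star-shaped by \cref{lemma:cut}, while that same wedge is exactly what the \cref{lem:extending-to-tripod}-extension of the corresponding $P_1$ or $P_2$ sub-partition supplies. (Had $\varphi'$ been larger, $Q_3$ would instead have to be stretched over a wedge of $\Delta$ it might not see, possibly forcing an extra piece.) No star center is moved, and the extensions create only convex corners and no new pieces, so $\mathcal{Q}^{\star}$ is a valid star partition using the same number of pieces as $\mathcal{Q}$ (using that the new sub-partitions have the same count by hypothesis); hence $\mathcal{Q}^{\star}$ is optimal. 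Moreover $\mathcal{T}'$ is a fake tripod of $\mathcal{Q}^{\star}$: after the extension the legs $D_1C'$, $D_2C'$ lie on piece boundaries, the angles at $C'$ are strictly convex by assumption, $D_3$ is a concave corner of the component carrying the root edge, and the three regions meeting at $C'$ cover a small disk around it.

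Finally I make $\mathcal{Q}^{\star}$ constructable. In $\mathcal{Q}^{\star}$ every star center inside $P_1$ or $P_2$ is constructable: its marking sequence from the given constructable sub-partition stays valid in $P$, because all corners (resp.\ concave corners) of $P_1,P_2$ are corners (resp.\ concave corners) of $P$, and the pseudo-triangles of the fake tripods used in those marking sequences lie inside $P_1$ or $P_2$, hence contain no further star center of $\mathcal{Q}^{\star}$. Then $\mathcal{T}'$ can be marked, since $A_1',A_2'$ are marked and no star center of $\mathcal{Q}^{\star}$ lies in $\Delta$. Apply \cref{lem:construct-from-middle} to $\mathcal{Q}^{\star}$ with this set of already-constructable star centers and fake tripods (which includes $\mathcal{T}'$ and the fake tripods internal to the $P_1,P_2$ sub-partitions): as long as some star center is not constructable, the lemma returns another optimal partition with one more constructable star center in which all previously constructable star centers and fake tripods are retained; iterating finitely often yields a constructable optimal partition $\mathcal{Q}'$ containing $\mathcal{T}'$ and the centers $A_1',A_2'$. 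It still contains the full sub-partitions of $P_1,P_2$: marking $\mathcal{T}'$ cuts $P$ along $\mathcal{T}'$'s legs and removes $\Delta$ from the feasible region, and \cref{lem:construct-from-middle} operates on each resulting connected part independently, never modifying a part all of whose star centers are already constructable — and the two child parts of $\mathcal{T}'$ are precisely of this kind. The step I expect to be the main obstacle is the parent-side re-fill: rigorously checking, from $\varphi'\le\varphi$, that the adaptation of $\mathcal{Q}$'s partition of $R_3$ to $R_3'$ is a genuine star partition using no extra pieces, and that the wedge exchanged across $D_3C'$ is consistently covered from the $P_1$/$P_2$ side.
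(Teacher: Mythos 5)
Your proof is correct and follows essentially the same route as the paper's (much terser) argument: install the new sub-partitions, re-fill $\Delta$ via the extension procedure of \cref{lem:extending-to-tripod}, observe that $\mathcal{T}'$ is then a constructable fake tripod since no star center lies in $\Delta$, and finish with \cref{lem:construct-from-middle}. The parent-side re-fill you flag as the main obstacle is in fact a non-issue, because \cref{lem:extending-to-tripod} applies symmetrically to the parent region $R_3'$ (extending the restriction of $\mathcal{Q}$ to $P_3$ and its sibling components up to the new legs), so the less-restrictive hypothesis is not actually needed for this lemma --- it only becomes essential later, in \cref{lem:greedy-constructable}.
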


\begin{proof}
    By \cref{lem:extending-to-tripod}, there exists an optimal partition $\mathcal{Q}'$ containing the same sub-partition in $P_1$ and $P_2$ that gives $\mathcal{T}'$. Since no star center is in the pseudo-triangle $\Delta$ of $\mathcal{T}$, the pseudo-triangle $\mathcal{T}'$ is also constructable with respect to $\mathcal{Q}'$. 
    Our lemma then follows directly from \cref{lem:construct-from-middle}.
\end{proof}

\paragraph{Greedy-constructable partitions.}
By \cref{lem:less-restrictive}, it never hurts to replace a tripod with a less restrictive one. The next step is to argue that we can assume that \emph{all} tripods in our constructable partition can follow such a greedy choice---which is a very useful property when designing an algorithm. This is a bit subtle, since such an algorithm might not actually find the least restrictive version of a tripod $\mathcal{T}$, but only the least restrictive version \emph{given that all children tripods also follow the greedy choice}. We formalize this by the notion of \emph{greedy-constructable} partitions.
A \emph{greedy-constructable} partition is \emph{constructable} and also satisfies the following extra property, adding to properties \ref{itm:consti}--\ref{itm:constiv} of a constructable partition.
\begin{itemize}
\item[(v)] Any tripod $\mathcal{T}$ in the partition is \emph{greedy} (see \cref{def:greedy-choice} below).
\end{itemize}

\begin{definition}[Greedy Choice \& Less Restrictive Tripods]
\label{def:greedy-choice}
First, we define the \emph{angle} of a fake tripod~$\mathcal{T}$ as in \cref{fig:greedy-choice-angle}, i.e., the angle $\varphi = \angle CD_3D'$, where $D'$ is the next corner of $P$ after $D_3$, in the parent-subpolygon $P_3$.
We say that a fake tripod $\mathcal{T}$ is \emph{less restrictive} than another fake tripod $\mathcal{T}'$ (supported by the same three corners) if the angle is smaller for $\mathcal{T}$ than for $\mathcal{T}'$, i.e.\ if $\mathcal{T}$ imposes a weaker restriction on the potential star center in the parent subpolygon.
We break ties in an arbitrary but consistent manner.

Consider a tripod $\mathcal{T}$ supported by corners $(D_1, D_2, D_3)$ in some constructable partition. Suppose this tripod splits the polygon into the three sub-polygons $P_1, P_2, P_3$ and is oriented towards $P_3$. Consider all the pairs of \emph{greedy-constructable}\footnote{We note that the definition of \emph{greedy choice} tripods and \emph{greedy-constructable} partitions are mutually dependent on each other. With greedy-constructable for a sub-polygon, we mean that all the tripods used in the partition of the sub-polygon abides the greedy choice. Since the tripods form a rooted tree (see \cref{lem:tripod-tree}), this recursive definition is well-defined, as tripods only depend on other tripods deeper down in the tree.} sub-partitions of $P_1$ and $P_2$, giving rise to some (fake) tripods $\mathcal{T}'$ supported by the same corners $(D_1, D_2, D_3)$. Then the tripod $T$ is the \emph{greedy choice} for $(D_1,D_2,D_3)$, or we simply say that $\mathcal{T}$ is \emph{greedy}, if it is the \emph{least restrictive} for $P_3$ among all such (fake) tripods $\mathcal{T}'$.
\end{definition}
 
\begin{lemma}
\label{lem:greedy-constructable}
There exists a greedy-constructable partition. 
\end{lemma}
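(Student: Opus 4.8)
The plan is to prove the lemma by induction on $k$, the minimum number of star pieces needed to partition $P$. The base case ($k=1$, or more generally any $P$ whose optimal partitions contain no tripod) is immediate, since such a partition is vacuously greedy-constructable. For the inductive step, note first that if $\mathcal{T}$ is a tripod in any $k$-piece partition of $P$, then each of the three subpolygons its legs cut off from $P$ has a minimum partition using strictly fewer than $k$ pieces --- each such subpolygon omits the two pieces of $\mathcal{T}$ centered in the other two subpolygons --- so by the induction hypothesis every such subpolygon admits greedy-constructable partitions. This makes the recursive notion from \Cref{def:greedy-choice} well-defined and non-vacuous throughout the argument. Now take a constructable optimal partition $\mathcal{Q}$ of $P$, which exists by \Cref{thm:constructability,lem:outer-planar}, and repeatedly modify it toward a greedy-constructable one.

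At each step, if some tripod of $\mathcal{Q}$ is not greedy, let $\mathcal{T}$ be a non-greedy tripod of \emph{maximum depth} in the tripod tree (\Cref{lem:tripod-tree}); then all tripods in the two subtrees hanging below $\mathcal{T}$ are greedy, so the two child subpolygons $P_1,P_2$ of $\mathcal{T}$ carry greedy-constructable sub-partitions. Among all pairs of greedy-constructable sub-partitions of $P_1$ and $P_2$ --- of which there are only finitely many, since by \Cref{cor:bit-complexity} every star center in such a partition has $O(n)$ bit-complexity --- pick the pair whose induced fake tripod on the supports $(D_1,D_2,D_3)$ is least restrictive in the tie-broken sense of \Cref{def:greedy-choice}, and call it $\mathcal{T}'$. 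Then $\mathcal{T}'$ is, by definition, the greedy choice for $(D_1,D_2,D_3)$, and $\mathcal{T}'\neq\mathcal{T}$ because $\mathcal{T}$ is not greedy. By \Cref{lem:less-restrictive} --- whose proof, via \Cref{lem:extending-to-tripod} and \Cref{lem:construct-from-middle}, does not actually use the angle being strictly smaller, only constructability of the new child sub-partitions and of the fake tripod --- there is a constructable optimal partition $\mathcal{Q}'$ of $P$ containing those sub-partitions of $P_1,P_2$ and the fake tripod $\mathcal{T}'$; whenever $(D_1,D_2,D_3)$ carries a genuine tripod in $\mathcal{Q}'$, that tripod equals $\mathcal{T}'$ and is therefore greedy.

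The main obstacle is to argue that this replacement process terminates. The delicate point is that the passage $\mathcal{Q}\mapsto\mathcal{Q}'$ uses \Cref{lem:construct-from-middle} to restore constructability after the legs of $\mathcal{T}$ move, and this may reorganize the partition on the parent side of $\mathcal{T}$ and disturb tripods other than $\mathcal{T}$. The key structural claim to establish is that this reorganization propagates only \emph{toward the root}: moving the legs of $\mathcal{T}$ renders non-constructable precisely the parent star center $A$ of $\mathcal{T}$ and the star centers built from it, i.e.\ the strict tripod-tree ancestors of $\mathcal{T}$, while the subtrees below $\mathcal{T}$ are kept verbatim and --- using \Cref{lem:separate-tripods} to keep the affected region away from the pseudo-triangles of the remaining tripods --- the tripods in sibling subtrees are untouched. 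Granting this, a replacement at depth $d$ leaves every tripod of depth greater than $d$ greedy, creates no non-greedy tripod of depth $\geq d$, and turns the depth-$d$ tripod $\mathcal{T}$ into a greedy one, so the multiset of depths of non-greedy tripods --- compared by sorting each in decreasing order and using the lexicographic order --- strictly decreases; since depths are bounded by the number of tripods, hence by $k$, this is a descent in a well-founded order and the process halts. The resulting partition is still constructable and optimal and now has all its tripods greedy, i.e.\ it is greedy-constructable, which closes the induction. I expect the verification of the ``propagates only toward the root'' claim --- in particular, ruling out long-range dependencies introduced by construction lines through distant star centers or tripod points --- to be the technically hardest part.
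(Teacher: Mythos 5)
Your overall strategy matches the paper's: start from a constructable partition (\cref{thm:constructability}), pick a deepest non-greedy tripod $\mathcal{T}$ so that its two child subpolygons already carry greedy-constructable sub-partitions, swap in the greedy fake tripod $\mathcal{T}'$ via \cref{lem:less-restrictive}, and argue termination. However, your termination argument has a genuine gap, and it sits exactly where you flag it. You need the claim that the reorganization caused by the swap ``propagates only toward the root'' and leaves sibling subtrees untouched; but if you apply \cref{lem:less-restrictive} (and hence \cref{lem:construct-from-middle}) on the full polygon $P$, nothing prevents the re-constructabilization on the parent side from moving star centers and rearranging tripods anywhere in $P_3$ --- including in subtrees hanging off ancestors of $\mathcal{T}$, whose tripods can have depth greater than $d$. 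If a greedy tripod at depth $>d$ in such a sibling branch is destroyed or replaced by a non-greedy one, your multiset-of-depths measure does not decrease, and the descent fails. (Your phrase ``the subtrees below $\mathcal{T}$ are kept verbatim'' is also not quite right --- those sub-partitions are \emph{replaced} by possibly different greedy-constructable ones --- though this part is harmless since the replacements are greedy by construction.)

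The paper closes this gap with a specific device you are missing. It defines $\mathcal{T}_{good}$ as the tripods whose entire subtree is greedy, and $R(\mathcal{Q})$ as the region reachable from the root edge without crossing a leg of any tripod in $\mathcal{T}_{good}$; all non-greedy tripods lie in $R(\mathcal{Q})$. It then applies \cref{lem:less-restrictive} not to $P$ but to the truncated polygon $P'=P_1\cup P_2\cup\bigl(P_3\cap R(\mathcal{Q})\bigr)$, fencing off every good tripod so the modification cannot reach it, and taking care (via a localized use of \cref{lem:construct-from-middle} that fixes everything outside $P_3\cap R(\mathcal{Q})$) not to push star centers into the pseudo-triangles of the boundary tripods. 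The termination measure is then not a multiset of depths but the region $R(\cdot)$ itself: after the swap, $R(\mathcal{Q}')=R(\mathcal{Q})\cap P_3\subsetneq R(\mathcal{Q})$, and since constructable solutions admit only finitely many tripod positions, $R(\cdot)$ can shrink only finitely often. So your proposal is the right skeleton, but the unproven structural claim it rests on is false without the truncation, and supplying the paper's $R(\cdot)$ argument (or an equivalent fencing mechanism) is necessary to complete the proof.
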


\begin{proof}
The general idea is to start with a constructable partition (which exists by \cref{thm:constructability}), and replace tripods that are not greedy by their greedy version instead, using \cref{lem:less-restrictive}.
At first it might not be apparent that this will work, since \cref{lem:less-restrictive} might introduce new tripods that are not greedy. We overcome this by carefully eliminating bad tripods in a bottom-up fashion, and continuing in a recursive manner, similar to our proof that there exists \emph{constructable} partitions (proof of \cref{thm:constructability}).

Formally, let us consider some polygon $P$ together with a \emph{constructable} partition $\mathcal{Q}$ of $P$. Throughout this proof, whenever we say \emph{tripod}, we also consider \emph{fake tripods} (see discussion in \cref{sec:tripod-trees,rem:fake-tripod})---i.e.\ those that can be constructed by two children star centers but might not be used to construct a parent star center.  Consider the \emph{rooted tripod tree} (see \cref{lem:tripod-tree,fig:tripodconsistent,fig:tripod-tree}) of this partition. In this tree we mark all tripods that are \emph{not} greedy as bad, along with all ancestor tripods. Call this set of bad tripods $\mathcal{T}_{bad}(\mathcal{Q})$, and let $\mathcal{T}_{good}(\mathcal{Q})$ be all other tripods. That is $\mathcal{T}_{good}(\mathcal{Q})$  is the set of tripods $T$ in $\mathcal{Q}$ for which all tripods in the subtree rooted at $T$ (including $T$ itself) are greedy. If $\mathcal{T}_{bad}(\mathcal{Q})$ is empty, $\mathcal{Q}$ is \emph{greedy-constructable}, so we are done; so suppose that this is not the case. Let $R(\mathcal{Q})$ be the region in $P$ reachable from the root edge $r$ without passing through any tripod from $\mathcal{T}_{good}(\mathcal{Q})$ (see also \cref{fig:greedy-choice-proof}). Note that all tripods in  $\mathcal{T}_{bad}(\mathcal{Q})$ must by definition be in $R(\mathcal{Q})$. We will argue that we can find some other (constructable) partition $\mathcal{Q}'$ such that $R(\mathcal{Q}')$ is smaller. This is enough, since there are only finitely many possible positions for tripods in constructable solutions, so $R(\cdot)$ can only decrease a finite number of times.

\begin{figure}
\centering
\includegraphics[width=.95\textwidth,page=5]{dynprog.pdf}
\caption{The process of replacing a tripod $T$ with the \emph{greedy} one $T'$, as described in the proof of \cref{lem:greedy-constructable}. Bad tripods $\mathcal{T}_{bad}$ are marked in red, good
$\mathcal{T}_{good}$ in blue. The dashed lines indicates ``fences'' where star centers are not allowed to pass through. The gray area $R(\cdot)$ goes down, since the bad tripod $T$ was replaced with a good one $T'$. The partitions (and tripods) inside $P_1, P_2$ and $P'_3=R(\mathcal{Q})\cap P_3$ can change: for example note that the tripods inside $P_1$ changed, and that a new bad tripod inside $R(\mathcal{Q}')$ appeared.}
\label{fig:greedy-choice-proof}
\end{figure}

Consider \cref{fig:greedy-choice-proof}. Let $T\in \mathcal{T}_{bad}(\mathcal{Q})$ be some non-greedy tripod. 
Say it is supported by corners $(D_1, D_2, D_3)$ and splits the polygon into sub-polygons $P_1, P_2, P_3$ where it is oriented towards $P_3$. Without loss of generality, we may assume that, in our partition, the sub-partitions of $P_1$ and $P_2$ are \emph{greedy-constructable}: that is all tripods in $P_1$ and $P_2$ abide the greedy choice (else we can instead choose $T$ to be one of these non-greedy tripods deeper down in the tripod tree). Since $T$ is not greedy, there must exist other sub-partitions of $P_1$ and $P_2$ that are also \emph{greedy-constructable} and give rise to a better (less restrictive) \emph{greedy} tripod $T'$. First note that these new sub-partitions of $P_1$ and $P_2$ must use the same number of pieces as the original partition used for these sub-polygons (otherwise it was not optimal; note that it never makes sense to use an extra piece in a sub-partition to get a less restrictive tripod as we might as well place this extra piece at the tripod point which would make the whole tripod redundant).

Now we want to use \cref{lem:less-restrictive} to replace the tripod $T$ with $T'$. However, we cannot directly apply this lemma as it might destroy some greedy tripods and introduce new non-greedy tripods inside the region $P_3$ in an unpredictable manner. Let us define $P'_3 = P_3 \cap R(\mathcal{Q})$ and $P' = P_1 \cup P_2 \cup P'_3$.
We use \cref{lem:less-restrictive} on the polygon $P'$ instead of $P$, which makes sure that we do not destroy any greedy tripods inside the region $P_3$.
We must be slightly careful also to not move any star centers into the pseudo-triangles of tripods in $P$, whose tripod points might now be corners of $P'$.
For this, we note that it is easy to extend \cref{lem:less-restrictive} to take into account these pseudo-triangles:

In the first part of the proof of \cref{lem:less-restrictive} we argued that there is an optimal (not necessarily constructable) solution without moving any star centers in $P'_3$, and in the second part we used \cref{thm:constructability} (or rather \cref{lem:construct-from-middle}) to argue that then there must also exist a constructable one. When we invoke \cref{lem:construct-from-middle} we can do so on the full polygon $P$, but on our partitions where everything outside $P_3'$ is already constructable (so these star pieces and tripods will not change, and no star center will be moved into the pseudo-triangles of the boundary tripods of $P'_3$). Then \cref{lem:construct-from-middle} would give us an optimal partition where tripods and star centers inside $P'_3$ are also constructable.

To recap, we now have a constructable partition of $P'$ where the greedy tripod $T'$ is used instead of $T$. We extend this to a partition $\mathcal{Q}'$ of $P$ by using all pieces from the original partition $\mathcal{Q}$ which where in $P\setminus P' = P_3\setminus R(\mathcal{Q})$. By design we see that $\mathcal{Q}'$ is \emph{constructable}. Moreover, $R(\mathcal{Q'}) = P'_3 = R(\mathcal{Q})\cap P_3$, since now $T'$ is a greedy tripod (and still all tripods in the subregions $P_1$ and $P_2$ are greedy, although the new partition of these parts can be quite different from the original one). To conclude, $R(\cdot)$ must have gone down, as it is now also restricted by the tripod $T'$ (and the original tripod $T$ was completely contained in $R(\mathcal{Q})$). Hence, by induction, there must exist some greedy-constructable partition.
\end{proof}

\subsubsection{Minimum Star Partition Algorithm}
We are now ready to present the full algorithm, and thus proving \cref{thm:main-theorem}.
To optimally partition a polygon $P$ into a minimum number of star pieces we employ the following strategy. We begin by enumerating all possible positions for tripods (that is triples of concave corners $D_1, D_2, D_3$ of $P$ that might support a tripod). Now, for each of these, we can employ the \emph{greedy choice} (see \cref{def:greedy-choice}) to only have a single tripod-center we need to consider. By \cref{lem:greedy-constructable}, there must exist some optimal solution in which all tripods follow this greedy choice. For now, assume we can actually compute these greedy tripod points (we will get back to this later). That is we have $O(n^3)$ potential tripod-centers in total. By \cref{thm:constructability} we can now construct a set of potential star centers $\Scen$ by considering all intersections of pairs of lines, where each line is either (i) an extension of a diagonal of $P$, or (ii) a line from a (greedy) tripod-center through the corresponding concave corner in this tripod. Note that there are only $O(n^2 + n^3)$ such lines, so we can bound $|\Scen|$ by $O(n^6)$.
Additionally, we note that in (greedy-)constructable partitions, each piece touches the boundary at some corner.
Given $\Scen$, we can hence employ the dynamic programming algorithm (\cref{alg:dp}, \cref{thm:dp}) to find a minumum star partition of the polygon $P$.

\paragraph{Resolving the greedy choice.}
Now, let us return to the issue of actually determining the position of the greedy choice tripod point, of some tripod supported on concave corners $D_1, D_2, D_3$ of $P$. Let $P_1,P_2,P_3$ be the sub-regions (like in \cref{fig:greedy}), such that the tripod is oriented towards $P_3$. Note that $P_1$ and $P_2$ are defined by a diagonal of $P$ and not by the (so far unknown) tripod.
Now, we can find greedy-constructable optimal solutions for $P_1$ and $P_2$ separately, by invoking our algorithm recursively (see also \cref{fig:algorithm-tripod-recurse}): in the subpolygons we again enumerate all potential tripods, solve using the greedy choice, and invoke the dynamic program to obtain an optimal solution. Note that there are only $O(n^2)$ subproblems, since each subproblem is defined by some diagonal between two concave corners of $P$ (here we use the fact that the tripods in a greedy-constructable partition form a \emph{rooted} tree). Moreover, our dynamic program allows us to find all possible positions of the star center $A_1$ in $P_1$ used to define the tripod. Indeed $A_1$ is the star center that sees a prefix of the pseudo-diagonal from $D_1$ to $D_2$, so we can find all possible positions of it by \cref{obs:multiple-solutions}.
Similarly for the star center $A_2$ in $P_2$. Therefore we may simply enumerate over all pairs of possibilities of $A_1$ and $A_2$ and choose the best valid one according to the greedy choice (\cref{def:greedy-choice}, see also \cref{fig:greedy,fig:greedy-choice-angle,fig:algorithm-tripod-recurse}).
By \cref{lem:extending-to-tripod}, any partition we get here can be extended to a partition meeting the tripod legs. 
Conversely, any constructable partition with a tripod supported by $D_1, D_2$ and $D_3$, is also a partition of $P\setminus \Delta$ (as no star center would be inside the pseudo-triangle $\Delta$; so we can apply \cref{lemma:cut} to carve out this part). Hence we do not loose or gain anything by restricting ourselves to finding optimal partitions restricted by the pseudo-diagonal, instead of restricted by the (so far unknown) tripod.

\SetKwFunction{SolveSub}{SolveSubregion}
\SetKwFunction{TripodGreedy}{TripodGreedyChoice}

\begin{figure}[htb!]
\centering
\includegraphics[scale=0.9,page=8]{dynprog.pdf}
\caption{An illustration of how \texttt{TripodGreedyChoice}($D_1,D_2,D_3$) works. First the pseudo-triangle $\Delta$ (dashed in red) is computed. Then it calls \texttt{SolveSubregion}() on the yellow and blue subpolygons of $P\setminus \Delta$, to recursively partition these optimally. Moreover, using \cref{obs:multiple-solutions}, we find all potential positions (in greedy-constructable optimal solutions) for star centers $A_1$ and $A_2$, whose piece contains a prefix (marked green in the figure) of the pseudo-diagonals adjacent to $D_1$ and $D_2$ (respectively). The pair which makes for the \emph{least restrictive} (\cref{def:greedy-choice}) tripod point is chosen, and this point $C$ is returned. 
Curved parts indicated that details have been omitted.}
\label{fig:algorithm-tripod-recurse}
\end{figure}

\paragraph{Full algorithm.}
The pseudo-code can be found in \cref{alg:full}, consisting of two mutually recursive functions \TripodGreedy{} and \SolveSub{}. The function \TripodGreedy{} will find the tripod point of the greedy tripod, and \SolveSub{$D_1,D_2$} will optimally solve the sub-polygon enclosed by the diagonal $D_1$-$D_2$. To obtain the optimal partition for the full polygon, we can just call \SolveSub{} on some edge of $P$. The correctness follows from the above discussion.

\setcounter{AlgoLine}{0}
\begin{algorithm}[!htb]
\caption{Minimum Star Partition Algorithm.}
\label{alg:full}
\small
\DontPrintSemicolon
\SetKwProg{Fn}{function}{:}{}
\Fn{\TripodGreedy{$D_1,D_2,D_3$}}{
\tcp{Returns the greedy choice tripod point of the tripod supported by corners $D_1,D_2,D_3$ of $P$. See \cref{fig:algorithm-tripod-recurse}.}
\tcp{Suppose, w.l.o.g.\ (other cases are similar), that we are like in \cref{fig:greedy}: that is the tripod should be oriented towards $D_3$ and the root is in the face fenced of by the $D_2$-$D_3$ pseudo-diagonal.}
Construct the pseudo-diagonal $D_1$-$D_2$, say it goes through points $D_1 = X_1, X_2, ..., X_k = D_2$\;
Calculate the optimal number of star pieces to cover the sub-polygon defined by separator $X_1$-$X_2$, by calling \SolveSub($X_1,X_2$)\;
Additionally, this finds all possible positions, in greedy constructable optimal solutions, of star centers that see a small part of the segment $X_1$-$X_2$ next to $D_1$\;
Do the same for diagonal $D_2$-$D_3$\;
Look at all combinations of star centers and return only the single tripod-center that makes for the greedy choice (if any).\;
}

\BlankLine
\Fn{\SolveSub{$D_1$, $D_2$}}{
\tcp{Requires that $D_1,D_2$ is a diagonal of $P$. Call the sub-polygon on the right side (when looking from $D_1$ to $D_2$) of the diagonal $P'$.}
\tcp{\SolveSub($D_1,D_2$) will optimally partition $P'$ into a minimum number of star-shaped pieces. Moreover, it will consider all possible positions, in greedy constructable solutions, for the star center that sees a small part of the $D_1D_2$ segment next to $D_1$.}
Enumerate all valid positions of tripods (i.e. 3-tuples of concave corners) inside $P'$, and call \TripodGreedy{} on these.\;
Let $L$ be the set of lines that are either (i) the line through two corners of $P$, or (ii) the line through some tripod-center and the corresponding concave corner of $P$.\;
Let $\Scen$ be the set of intersection points of pairs of lines from $L$.\;
Call the dynamic programming algorithm on $P'$ and $\Scen$ to find an optimal solution.\;
By \cref{obs:multiple-solutions} we can additionally find all possible positions of the star center seeing a prefix of the $D_1$-$D_2$ pseudo-diagonal from $D_1$ to $D_2$.\;
}
\end{algorithm}

\paragraph{Running time.} We now analyze the running time of \cref{alg:full}, and again we apply memoization to not recompute the same subpolygons multiple times. We will see that the total running time is $O(\poly(n))$---or, in fact, it takes $O(n^{105})$ arithmetic operations or $O(n^{107})$ time. According to \cref{cor:bit-complexity} and \cref{lem:sightlines-steinerpoints}, every star center or steiner point can be encoded by $O(n)$ corners of $P$, therefore each arithmetic operation can be done in $O(n^2)$ time. 

\begin{itemize}
\item \TripodGreedy{} will be called at most $O(n^3)$ times, since there are only $O(n^3)$ choices for 3-tuples of corners $D_1, D_2, D_3$ of $P$. Constructing the pseudo-diagonals can be done in $O(n)$ time. There are at most $O(n^6)$ possible positions for star centers, so only $O(n^{12})$ possible combinations for the greedy choice. For each of these combinations, we might need to go through the full polygon to see that the legs of the tripod does not intersect the polygon. Hence the computation inside \TripodGreedy{} need in total $O(n^3 \cdot n^{12} \cdot n) = O(n^{16})$ time over the run of the full algorithm.
\item \SolveSub{} will be called at most $O(n^2)$ times, since there are only $O(n^2)$ choices for pairs of corners $D_1, D_2$ of $P$. Enumerating valid tripod-positions can be done in $O(n^4)$ time ($O(n^3)$ many possible 3-tuples of corners, and each can be checked in $O(n)$ time  by going through the polygon and constructing the pseudo-triangle to see if a valid tripod can be formed there). We then construct the set $\Scen$ of size at most $O(n^6)$. Calling the dynamic programming algorithm on this set takes $O(n^{103})$ arithmetic operations and $O(n^{105})$ time (see \cref{thm:dp,rem:dp}). Followed from \cref{obs:multiple-solutions}, in the same time, we can find all possible positions of star centers covering the start of the $D_1$-$D_2$ segment: indeed, we call the dynamic programming algorithm $\SolveSep(D_1,A,X)$ (where $X \neq D_1$ is the closest point to $D_1$ in $\Sbor$ on the $D_1$-$D_2$ segment) for all possible star centers $A\in \Scen$, to see which ones of these give a partition of minimum size. Note that between these calls to $\SolveSep$, we do not need to reset the dp-cache, so in total this takes only $O(n^{103})$ arithmetic operations and $O(n^{105})$ time.
Hence, in total, over the full run of the algorithm, we will spend $O(n^2 \cdot n^{103}) = O(n^{105})$ arithmetic operations and $O(n^2 \cdot n^{105}) = O(n^{107})$ time in \SolveSub.
\end{itemize}
The above discussion concludes the proof of \cref{thm:main-theorem}.

\newpage
\printbibliography

\appendix

\newpage
\section{Existence of Coordinate and Area Maximum Partitions}
\label{apx:existence}

\begin{proof}[Proof of Lemma~\ref{lem:coordmax}]
Recall that the \emph{Hausdorff distance} between two compact sets $A,B \subset \RR^2$ is defined as $d_H(A,B) \coloneqq \inf \{ r \geq 0 \mid A \subset B \oplus D(r) \text{ and } B \subset A \oplus D(r) \}$,
where $\oplus$ is the Minkowski sum and $D(r)$ is the disk of radius $r$ centered at the origin.
We will use the fact that $(M(\RR^2), d_H)$ is a compact metric space, where $M(\RR^2)$ denotes the set of all non-empty closed subsets of $\RR^2$; see for instance \cite[Theorem~4.5]{evans_probability_2008}.

Let $c^*=\sup c(\mathcal Q)$, where the supremum is taken over all optimal star partitions $\mathcal Q$ of $P$.
We claim there exists a star partition which realizes the coordinate vector $c^*$.
Consider a sequence of optimal star partitions $(\mathcal Q_i)_{i\in\NN}$ so that $c(\mathcal Q_i)$ converges to $c^*$ as $i\longrightarrow\infty$.
Let $c(\mathcal Q_i)=\langle A_{1i},\ldots,A_{ki}\rangle$ and let the pieces of $\mathcal Q_i$ be $Q_{1i},\ldots,Q_{ki}$ so that the maximum star center of $Q_{ji}$ is $A_{ji}$.
Be passing to a subsequence, we can assume that the sequence $(Q_{1i})_{i\in\NN}$ converges to a compact set $Q_1^*$ with respect to Hausdorff distance.
Similarly, we can assume that $(Q_{ji})_{i\in\NN}$ converges to $Q_j^*$ for each $j\in\{1,\ldots,k\}$.
Let $\mathcal Q^*=\{Q_1^*,\ldots,Q_k^*\}$ and $c^*=\langle A_1^*,\ldots,A_k^*\rangle$.
We claim that $\mathcal Q^*$ is a star partition of $P$ where $A_j^*$ is the star center of $Q_j^*$.

To see that $Q_j^*$ is star-shaped with star center $A_j^*$, we first observe that $A_j^*\in Q_j^*$.
Otherwise, the star center $A_{ji}$ would not be in $Q_{ji}$ for sufficiently large values of $i$.
Suppose now that $A_j^*$ is not a star center of $Q_j^*$.
This means that there is a point $B\in Q_j^*$ so that the line segment $A_j^*B$ is not in $Q_j^*$.
For each $i\in\NN$, we can choose $B_i\in Q_{ji}$ so that $B_i\longrightarrow B$ for $i\longrightarrow\infty$.
Since $A_{ji}\longrightarrow A_j^*$, $B_i\longrightarrow B$ and $A_j^*B$ is not in $Q_j^*$, it follows that $A_{ji}B_i$ is not in $Q_{ji}$ for sufficiently high values of $i$, which contradicts that $A_{ji}$ is a star center of $Q_{ji}$.
In a similar way, we can argue that the sets in $\mathcal Q^*$ are interior-disjoint; otherwise, the sets in $\mathcal Q_i$ could not be interior-disjoint for sufficiently high values of $i$.
Likewise, we get that $\bigcup \mathcal Q^*=P$, and we conclude that $\mathcal Q^*$ is a coordinate maximum optimal star partition of $P$.
\end{proof}

\begin{proof}[Proof sketch of Lemma~\ref{lem:coordmaxsubset}]
The proof is analogous to that of~\Cref{lem:coordmax}.
Without loss of generality, we can assume that $d=(1,0)$, so that for any set of points $A'_i,\ldots,A'_k$, we have
\[\langle A'_i\cdot d, A'_i\cdot d^\perp,A'_{i+1}\cdot d, A'_{i+1}\cdot d^\perp,\ldots,A'_k\cdot d,A'_k\cdot d^\perp\rangle=\langle A'_i,A'_{i+1},\ldots,A'_k\rangle.\]
Define the supremum $\langle A^*_i,A^*_{i+1},\ldots,A^*_k\rangle=\sup \langle A'_i,A'_{i+1},\ldots,A'_k\rangle$ over all partitions with star centers $A'_i,A'_{i+1},\ldots,A'_k$ in the region $F$ and the rest fixed at the points $A_1,\ldots,A_{i-1}$.
We then consider a convergent sequence of star partitions with fixed star centers $A_1,\ldots,A_{i-1}$ and the rest converging to $A^*_i,\ldots,A^*_k$, and it follows that the limit of the pieces constitute a partition realizing the supremum.
\end{proof}

\begin{proof}[Proof of Lemma~\ref{lem:areamax}]
The proof is similar to that of \Cref{lem:coordmax}.
Namely, let $a^*=\sup a(\mathcal Q)$, where the supremum is taken over all star partitions $\mathcal Q$ of $P$ with star centers $\mathcal A$.
We then consider a sequence of partitions $(\mathcal Q_i)_{i\in\NN}$ with star centers $\mathcal A$ so that $a(\mathcal Q_i)\longrightarrow a^*$ as $i\longrightarrow\infty$.
Let the pieces of $\mathcal Q_i$ be $Q_{1i},\ldots,Q_{ki}$ so that $A_j$ is a star center of $Q_{ji}$.
By passing to a subsequence, we can assume that $(Q_{ji})_{i\in\NN}$ converges to a compact set $Q_j^*$ for each $j\in\{1,\ldots,k\}$.

As in the proof of \Cref{lem:coordmax}, we can conclude that $\mathcal Q^*=\{Q_1^*,\ldots,Q_k^*\}$ is a star partition of $P$ with star centers $\mathcal A$ and that $a(\mathcal Q^*)=a^*$, so $\mathcal Q^*$ is area maximum.
\end{proof}

\section{Structural Theorem}
\label{app:struct}

In this section, we give an elementary and independent proof of a structural theorem about optimal star partitions, which is not used in the rest of the paper.
A point on the boundary of the polygon $P$ is \emph{canonical} if it is a corner of $P$ or the endpoint of an extension of an edge of $P$; see \cref{fig:extensions} (left).

\begin{theorem}\label{thm:2}
Let $k$ be minimum such that there exists a star partition of $P$ consisting of $k$ polygons and assume $k\geq 2$.
There exists a star partition $Q_1,\ldots, Q_k$ of $P$, where each piece $Q_i$ has the following properties:
\begin{enumerate}
\item $\partial Q_i$ contains a concave corner of $P$, and \label{prop:boundary}
\item for each connected component of the shared boundary $\partial P\cap \partial Q_i$, both endpoints are canonical. \label{prop:endpoint}
\end{enumerate}
\end{theorem}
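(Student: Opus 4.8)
The plan is to start from any minimum star partition of $P$ and reshape it, in two stages, into one satisfying both properties, using the half-plane trimming \Cref{lemma:cut} and the existence of area-maximum partitions (\Cref{lem:areamax}) as the main tools, together with a choice of a minimum partition that is extremal for a suitable finite objective. Observe first that a polygon with no concave corner is star-shaped, so the hypothesis $k\ge 2$ guarantees that $P$ has concave corners available as anchors.

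Stage~1 (property~\ref{prop:boundary}). Fix a minimum partition, fix its star centers $A_1,\dots,A_k$, and pass to the area-maximum partition with these centers; by \Cref{lem:newboundaryinsightlines} the shared boundary of any two pieces is connected, so for a piece $Q_i$ that avoids every concave corner of $P$ the set $\partial P\cap\partial Q_i$ is a union of \emph{convex chains} of $\partial P$ (arcs all of whose corners are convex), while the rest of $\partial Q_i$ consists of one- or two-segment arcs shared with neighbours. I would then redistribute $Q_i$ among its neighbours: to the neighbour $Q_l$ across a shared segment $s_l$ assign the part of $Q_i$ visible from $A_l$ through $s_l$. Each enlarged $Q_l$ stays star-shaped by \Cref{lemma:cut}, and---this is the point needing care---these ``shadows'' cover all of $Q_i$: a segment from a point of $Q_i$ toward some $A_l$ cannot exit $Q_i$ through a convex chain of $\partial P$ without also exiting $P$, so it must first cross a shared segment, and a short radial-ordering argument (with separate handling of shared boundaries that bend at an interior concave corner of a piece) upgrades this to a genuine covering. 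We obtain a partition into $k-1$ pieces, contradicting minimality; hence there is a minimum partition in which every piece touches a concave corner of $P$, and in particular every piece meets $\partial P$.

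Stage~2 (property~\ref{prop:endpoint}). Among all minimum partitions satisfying property~\ref{prop:boundary} I would take one maximizing the (bounded, integer) number of canonical endpoints, summed over all components of all the arcs $\partial P\cap\partial Q_i$. Suppose some component $\gamma$ of $\partial P\cap\partial Q_i$ has a non-canonical endpoint $X$; then $X$ lies in the relative interior of an edge $e$ of $P$ and $\partial Q_i$ turns off $e$ at $X$ onto an internal edge $f=XW$. The claim is that, because $X$ is neither a corner of $P$ nor the endpoint of an edge-extension, the configuration near $X$ is not pinned by the geometry of $P$, so there is slack to perform a local surgery---slide $X$ along $e$, pivot $f$, and if necessary move the incident star center along its sight line---that makes $X$ canonical (it either reaches a corner of $P$, or $f$ comes to lie on the extension of an edge of $P$) while keeping all pieces star-shaped via \Cref{lemma:cut} and without un-canonicalizing any other endpoint. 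This strictly increases the objective, so after finitely many steps every endpoint is canonical; property~\ref{prop:boundary} is preserved because the surgeries only trim slivers near $\partial P$ and never delete a concave corner from a piece boundary.

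The main obstacle is Stage~2: spelling out the local surgery that makes a non-canonical endpoint canonical. A plain ``slide $X$ until stuck'' gets stuck at points where a sight line through a concave corner meets $e$---which are not canonical---so the surgery genuinely has to be allowed to relocate the offending star center, and one must check that doing so neither increases the piece count, nor spoils the other arcs, nor breaks property~\ref{prop:boundary}. This is, in miniature, the same difficulty the main body of the paper confronts when controlling star-center positions; the saving grace here is that the theorem only constrains the boundaries of the pieces, so it suffices to push star centers ``just enough'' rather than to characterize them. The only other place requiring real care is the covering step in Stage~1 (verifying the shadows exhaust $Q_i$, bent shared boundaries included).
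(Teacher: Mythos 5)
Your two-stage skeleton (first force every piece onto a concave corner of $P$, then repair non-canonical endpoints by a terminating sequence of local modifications) matches the paper's, but both stages have genuine gaps at exactly the points where the paper does something different. In Stage~1 the covering claim fails as argued. The assertion that a segment from $x\in Q_i$ toward a neighbouring center $A_l$ ``cannot exit $Q_i$ through a convex chain of $\partial P$ without also exiting $P$'' does not give what you need: such a segment may leave $P$ and re-enter it ($P$ is not convex), so it need not first cross a shared segment; and when it does first cross a shared segment, that segment may belong to a third piece $Q_m$, and nothing places $x$ in $Q_m$'s shadow ($xA_m$ is a different segment, and iterating over neighbours this way can cycle). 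Moreover a shared segment can be radial from the neighbouring center --- precisely the case where the shared boundary lies on a sight line of the neighbour, cf.\ \Cref{lem:newboundaryinsightlines} --- and then the corresponding shadow is empty. The paper avoids all of this by first establishing \Cref{claim2} via an explicit expansion process: every concave corner of $Q_i$ becomes either a concave corner of $P$ or the star center of a neighbour \emph{lying on $\partial Q_i$}. It then cuts $Q_i$ along triangulation diagonals incident to its concave corners into convex cells, each of which has such a star center on its boundary and can be absorbed outright by that neighbour (with a separate expansion step when $Q_i$ is convex). Your shadow construction has no substitute for \Cref{claim2}, and I do not see how a ``radial-ordering argument'' supplies one.

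In Stage~2 you have correctly located the crux --- sliding the non-canonical endpoint along the edge gets stuck when the sight line from the star center hits a concave corner $G$ of $P$ --- but your proposed fix (relocating the star center) is exactly the step you concede you cannot justify, so the argument is incomplete at its decisive point. The paper resolves this case \emph{without} moving any star center: it defines $F'$ as the first point past the current endpoint at which one of three stopping conditions holds, extends $Q_i$ all the way to $F'$ (sweeping past $G$), and then returns the triangle $GF''F'$ beyond the blocking corner to the pieces it intersected, extending their internal boundary segments into that triangle; here $F''$ is the first canonical point reached when walking back from $F'$, so the new endpoint of $Q_i$'s shared segment is canonical by construction (this is exactly why endpoints of edge extensions are included among the canonical points). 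The paper also measures progress with a per-edge potential (the number of shared segments in the canonical prefix from one endpoint of the edge) rather than a global count of canonical endpoints, since the operations can shorten the prefix geometrically and one of the cases must be iterated up to $k$ times before progress is made. Without a concrete surgery and a potential that provably increases under it, Stage~2 remains an outline of the difficulty rather than a proof.
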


\begin{figure}[htb!]
\centering
\includegraphics[page=9]{figs.pdf}
\caption{Left: A polygon with the extensions of the edges and the canonical points shown.
Right: Using only canonical points as corners of pieces on the boundary of $P$, one of the points $C$ and $D$ must be used.}
\label{fig:extensions}
\end{figure}

Before going into the proof, let us first make a few remarks.
Note that there are at most $3n$ canonical points, since each edge of $P$ creates at most $2$ canonical points that are not corners of $P$.
The algorithm described in this paper considers $O(n^{13})$ Steiner points on the boundary of $P$, so it might be possible to use property~\ref{prop:endpoint} to obtain a faster algorithm.
However, in the proof of the theorem, we change the partition in order to only use canonical Steiner points on the boundary.
It does therefore not follow immediately that our algorithm can also be modified to find the resulting partition, but we are confident that the result can be used to create a faster algorithm.

We find it somewhat surprising that canonical Steiner points suffice on the boundary of $P$, since, as shown in \cref{fig:art}, it is sometimes necessary to use Steiner points in the interior of $P$ of degree $\Omega(n)$, whereas the canonical points have degree at most $1$ (as defined in \cref{sec:intro}).
\Cref{fig:extensions} (right) proves that it is necessary to have least some Steiner points on the boundary of $P$:
In any partition into two star-shaped pieces, we must have a Steiner point on the segment $CD$.

\begin{proof}
Let $Q_1,\ldots,Q_k$ be a minimum star partition of $P$.
Let us fix a star center $A_i$ in each piece $Q_i$.
If $\partial Q_i\cap \partial Q_j\neq\emptyset$ and $i\neq j$, we say that $Q_i$ and $Q_j$ are \emph{neighbours}.
The shared boundary $\partial Q_i\cap \partial Q_j$ of two neighbours $Q_i$ and $Q_j$ is a collection of open polygonal curves.
An \emph{interior point} of an open curve is a point on the curve which is not an endpoint.

\begin{claim}\label{claim1}
Let $\gamma$ be an open curve in the shared boundary $\partial Q_i\cap \partial Q_j$ of two neighbours $Q_i$ and $Q_j$.
We can assume that $\gamma$ is either a line segment or two line segments that have one of the star centers $A_i$ and $A_j$ as a common endpoint.
If a star center of one piece is a corner of $\gamma$, the corner is convex with respect to that piece and concave with respect to the other piece.
\end{claim}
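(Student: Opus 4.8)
The plan is to keep the star centers $A_1,\dots,A_k$ fixed and to repair one shared-boundary component at a time. For fixed neighbours $Q_i,Q_j$, let $\gamma$ be a connected component of $\partial Q_i\cap\partial Q_j$ with endpoints $S$ and $T$; if $S=T$ the component is a point and there is nothing to do, so assume $S\neq T$. First I would \emph{localize} $\gamma$. Since $A_i$ sees every corner of $Q_i$ we have $A_iS,A_iT\subseteq Q_i$ and, symmetrically, $A_jS,A_jT\subseteq Q_j$. Since $Q_i$ is star-shaped, $\partial Q_i$ meets each ray out of $A_i$ in a single point, so as a point traverses $\gamma$ from $S$ to $T$ the ray from $A_i$ to it turns monotonically, and the ray from $A_j$ to it turns monotonically the other way. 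Hence $\gamma$ lies in the intersection of the angular wedge at $A_i$ between $A_iS$ and $A_iT$ with the wedge at $A_j$ between $A_jS$ and $A_jT$, that is, in the (possibly non-convex) quadrilateral $H$ with vertices $A_i,S,A_j,T$ in cyclic order, all four of whose sides lie in $Q_i\cup Q_j$.

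Next I would \emph{evacuate} $H$. No piece can cross a side of $H$: a side lies in $Q_i$ or in $Q_j$, so it cannot meet the interior of a third piece, and (after an arbitrarily small perturbation) it does not lie on the boundary of a third piece either; the ``pinched'' configuration in which a side lies on $\partial Q_i$ is harmless because $Q_i$ then occupies both sides of it. So any piece meeting the interior of $H$ is contained in $H$, and reassigning its area to $Q_i$ and $Q_j$ as below would yield a star partition with fewer than $k$ pieces, contradicting minimality. Thus $H\subseteq Q_i\cup Q_j$, and $\gamma$ splits the interior of $H$ into the interiors of $Q_i\cap H$ and $Q_j\cap H$.

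Then I would \emph{recut} $H$ along a diagonal lying in its interior (every simple quadrilateral has at least one; if $H$ is degenerate then $\gamma$ already lies on a line and can be taken to be a single segment). If that diagonal is $ST$, split $H$ into the triangles $A_iST$ and $A_jST$ and hand the first to $Q_i$, the second to $Q_j$: the new common boundary inside $H$ is the single segment $ST$. If instead the interior diagonal is $A_iA_j$ --- which happens exactly when $A_i$ or $A_j$ is a reflex vertex of $H$ --- split $H$ into $SA_iA_j$ and $TA_iA_j$, one triangle per piece: the new common boundary is the single segment $A_iA_j$, whose endpoints are star centers. In either case the triangle given to a piece has that piece's center as a vertex and is therefore seen from it; the region deleted from a piece lies in the open half-plane bounded by the line through the diagonal that excludes that piece's center, so \Cref{lemma:cut}, applied to each connected component of the deleted region in turn, keeps the shrunk piece star-shaped, while the region added to a piece lies in a triangle inside its center's visibility wedge and keeps the enlarged piece star-shaped. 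The surgery stays inside $H$ and, by the previous step, no other component of any shared boundary enters the interior of $H$, so the other components are untouched; since each repair strictly decreases the number of components not yet of the claimed form, the process terminates.

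It remains to account for the convexity clause and to name the hard part. A star center is a corner of $\gamma$ only if it lies on the boundary of its own piece --- either this held before any surgery, or it is an endpoint of the segment $A_iA_j$ produced above, in which case $\gamma$ is the single segment $A_iA_j$ and has no corner in the relevant sense, so only the former case is nontrivial. A star center on the boundary of its own piece is necessarily a \emph{convex} corner of that piece (the kernel is convex and cannot reach a reflex vertex), so the two incident edges of $Q_i$ at $A_i$, which are degenerate sight lines of $Q_i$, enclose an angle $<\pi$ on the $Q_i$-side; since the angles around $A_i$ sum to at most $2\pi$, the $Q_j$-side angle is $>\pi$, which is exactly ``convex for its own piece, concave for the other'' and also exhibits $\gamma$ locally as two line segments through the star center. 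The step I expect to be the real obstacle is making the recut robust: one must check that the diagonal recut, or the fallback of routing $\gamma$ along the two sight-line edges at a boundary center, can always be carried out without destroying the star-shapedness or weak simplicity of $Q_i$ or $Q_j$, even when a center sits on the boundary of its piece or the piece is pinched at its center. Doing this cleanly amounts to a somewhat tedious configuration analysis, with each configuration settled by a careful application of \Cref{lemma:cut}.
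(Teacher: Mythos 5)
Your localization of $\gamma$ inside the quadrilateral $H=A_iSA_jT$ and your handling of the convex case (recut along the interior diagonal $ST$) match the paper's proof. The genuine gap is in the reflex case. When, say, $A_i$ is a reflex vertex of $H$, you recut along the diagonal $A_iA_j$ and assign the triangle $SA_iA_j$ to one piece and $TA_iA_j$ to the other, asserting that the new common boundary is the single segment $A_iA_j$. That assertion is false: the sides $SA_j$ and $A_iT$ of $H$ are (generically) interior chords of $Q_j$ and $Q_i$ respectively, so after the triangles change owners, the region just inside $H$ across $SA_j$ belongs to one piece while the region just outside still belongs to the other, and likewise for $A_iT$. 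The shared boundary component therefore becomes the three-segment polyline $S$\,-\,$A_j$\,-\,$A_i$\,-\,$T$, which is not of the form the claim allows. Worse, $Q_j$ loses the wedge $\angle SA_jA_i<\pi$ at its own star center, so $A_j$ becomes a reflex corner of $Q_j$ --- the opposite of the ``convex for its own piece'' conclusion you need; and the angle of $Q_i$ at $A_i$ after your surgery is $2\pi-\angle A_jA_iT$, which need not be convex either. The paper's fix is different: when $A_i$ is reflex in $H$, route the new shared boundary through $A_i$ along the two sides of $H$, i.e.\ replace $\gamma$ by $SA_i\cup A_iT$, handing \emph{all} of the interior of $H$ to $Q_j$. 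This produces exactly two segments meeting at the star center $A_i$; the corner there has angle $2\pi-\angle SA_iT<\pi$ on the $Q_i$ side (since the interior angle of $H$ at the reflex vertex exceeds $\pi$) and $>\pi$ on the $Q_j$ side, which is precisely the convexity clause of the claim.

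Two smaller points. First, your justification of the convexity clause in the pre-existing case rests on the parenthetical ``the kernel is convex and cannot reach a reflex vertex,'' which is false: for an L-shaped polygon the reflex corner is a vertex of the kernel, so a star center can sit at a reflex corner of its own piece. The convexity statement in the claim is not a general fact about star centers on boundaries; in the paper it is a consequence of the specific construction (one only ever routes $\gamma$ through a star center that is reflex in $H$). Second, your ``evacuation'' of $H$ via minimality is fine but unnecessary: the region of $H$ on the $A_i$-side of $\gamma$ is bounded by a closed curve lying entirely in the simply connected set $Q_i$, hence lies in $Q_i$, and symmetrically for $Q_j$, so $H\subseteq Q_i\cup Q_j$ without invoking optimality.
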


\begin{subproof}

\begin{figure}
\centering
\includegraphics[page=10]{figs.pdf}
\caption{The process simplifying the shared boundary between $Q_i$ and $Q_j$ in order to satisfy \cref{claim1}.
The gray region is the quadrilateral $F$.}
\label{fig:moveDnew}
\end{figure}

Let the endpoints of $\gamma$ be $C$ and $D$; see \cref{fig:moveDnew}.
Since the segments $A_i C$ and $A_i D$ are in $Q_i$, and $A_j C$ and $A_j D$ are in $Q_j$, we have a well-defined quadrilateral $F=A_i C A_j D$.
If $A_i$ and $A_j$ are both convex corners of $F$, we replace $\gamma$ by the segment $CD$, which must be a diagonal of $F$.
Otherwise, consider without loss of generality the case that $A_i$ is a concave corner of $F$.
We then replace $\gamma$ by $CA_i \cup A_i D$.
In either case, the modification clearly leaves $Q_i$ and $Q_j$ star-shaped, and we are left with a shared boundary of the claimed type.
\end{subproof}

\noindent
\textbf{Property~\ref{prop:boundary}.}
In order to prove that there is a partition satisfying Property~\ref{prop:boundary}, suppose that $\partial Q_i$ does not contain a concave corner of $P$.
We show how to modify the partition so that the property is eventually satisfied.
In essence, we expand the piece $Q_i$ until Property~\ref{prop:boundary} is eventually satisfied.

\begin{claim}\label{claim2}
We can assume that each concave corner $D$ of $Q_i$ is a concave corner of $P$ or a star center $A_j$ of a neighbour $Q_j$ of $Q_i$.
\end{claim}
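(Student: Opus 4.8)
The plan is to classify each concave corner $D$ of $Q_i$ according to whether it lies on $\partial P$ and how many pieces meet there, dispose of the easy cases directly using \Cref{claim1}, and eliminate the remaining ``spurious'' corners by a local surgery that keeps the partition minimum and all pieces star-shaped.

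First the two easy cases. Since $D$ is concave in $Q_i$, the interior angle of $Q_i$ at $D$ exceeds $\pi$. If $D\in\partial P$, then as $Q_i\subseteq P$ the interior angle of $P$ at $D$ is at least as large, so $D$ is a concave corner of $P$ and there is nothing to prove. Now suppose $D$ lies in the interior of $P$, so $D$ lies on shared boundaries of $Q_i$ with neighbours. If exactly one neighbour $Q_j$ meets $Q_i$ at $D$, then the two edges of $Q_i$ incident to $D$ both lie on the single shared boundary curve $\gamma=\partial Q_i\cap\partial Q_j$, so $\gamma$ bends at $D$. By \Cref{claim1}, $\gamma$ bends at a star center of $Q_i$ or $Q_j$, which is convex with respect to that piece and concave with respect to the other; since $D$ is concave in $Q_i$, it must be the star center $A_j$ of the neighbour $Q_j$, as claimed.

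The remaining case --- $D$ in the interior of $P$ with at least three pieces meeting there --- is where the real work lies, and I would handle it by reducing it to the previous case. At such a $D$, the piece $Q_i$ occupies one angular wedge of measure more than $\pi$, while the complementary wedge of measure less than $\pi$ is subdivided in cyclic order by neighbours $Q_{j_1},\dots,Q_{j_m}$ with $m\ge 2$. Inside a sufficiently small disk about $D$, I would transfer the wedges of $Q_{j_2},\dots,Q_{j_m}$ to $Q_{j_1}$: by \Cref{lemma:cut} each shrunken piece remains star-shaped, and the enlarged $Q_{j_1}$ remains star-shaped because the transferred slivers hang off the edge of $Q_{j_1}$ incident to $D$, which is visible from $A_{j_1}$. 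After this, only $Q_i$ and $Q_{j_1}$ meet at $D$ and $Q_i$ itself is untouched, so applying the modification of \Cref{claim1} to the now-single shared boundary $\partial Q_i\cap\partial Q_{j_1}$ either straightens it out (so $D$ is no longer a corner of $Q_i$) or turns it into a bend at $A_{j_1}$. Repeating over all pieces and all concave corners, while tracking a monovariant such as the number of pairs (piece, interior point where at least three pieces meet and at which that piece is concave), the process terminates with a partition satisfying the claim.

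The main obstacle is making this last reduction watertight: one must set up the local wedge-transfer carefully so that \emph{all} affected pieces --- both the shrinking $Q_{j_2},\dots,Q_{j_m}$ and the growing $Q_{j_1}$ --- stay star-shaped, and check that applying the modification of \Cref{claim1} afterwards does not introduce fresh interior junctions or spurious concave corners on $Q_i$ or on other pieces. This is precisely why I would argue termination through a carefully chosen monovariant rather than by an induction on a single corner, and why it is cleanest to phrase the whole argument as: ``each modification strictly decreases the monovariant and never increases $k$''.
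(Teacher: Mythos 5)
Your reduction to the case of an interior point $D$ where at least three pieces meet is correct and matches the paper's use of \Cref{claim1}, but the surgery you propose for that case has two genuine gaps, both of which you flag but neither of which you resolve. First, the wedge transfer to $Q_{j_1}$ does not always preserve star-shapedness: if $A_{j_1}$ lies on the extension of the edge $e$ of $Q_{j_1}$ incident to $D$ that borders the transferred sliver, then for a point $X$ strictly on the far side of the line through $e$ the segment $A_{j_1}X$ leaves $Q_{j_1}$ immediately after $A_{j_1}$ and only re-enters the transferred wedge in a small neighbourhood of $X$, so $A_{j_1}$ does not see $X$ in the enlarged piece. This is precisely the degenerate configuration that forces the separate sub-case (``both star centers are on an extension of their shared boundary'') in the first lemma of \Cref{sec:structure} on eliminating points where four pieces meet; your argument implicitly assumes the non-degenerate sub-case, and transferring to $Q_{j_m}$ instead does not help since both boundary neighbours can be degenerate simultaneously. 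Second, the termination measure does not work as stated: the transfer creates new corners where the boundary of the small disk (or the chords replacing it) meets the old edges separating $Q_{j_2},\ldots,Q_{j_m}$; these are new interior junctions of three pieces at which the donor pieces typically acquire new concave corners that are neither corners of $P$ nor star centers, so the quantity ``number of (piece, bad junction) pairs'' can increase and the global process over all pieces is not obviously finite.

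The paper sidesteps both problems by expanding $Q_i$ itself rather than rearranging its neighbours: it takes the maximal edges $CD$ and $DE$ of $Q_i$ at the bad corner, slides $D$ along the segment $DF$ shared by the two pieces behind it until the first event occurs (a corner of $P$ appears on the new boundary, the corner is straightened out at $CE$, a star center appears on the new boundary, or $D'$ reaches $F$), assigns the swept quadrilateral to $Q_i$ (using \Cref{lemma:cut} for the shrinking pieces), and terminates because each event type strictly improves one coordinate of a lexicographic potential (a concave corner of $P$ is reached; the number of concave corners of $Q_i$ drops; the number of star centers on $\partial Q_i$ grows, bounded by $k-1$; the total number of segments drops). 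To salvage your route you would need to (a) handle the degenerate star-center position, e.g.\ by a thin-triangle transfer along $A_{j_1}D$ as in the paper's four-pieces lemma, and (b) replace your monovariant with one insensitive to the new junctions manufactured on the donor side.
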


\begin{subproof}
We describe a way to expand $Q_i$ so that we eliminate concave corners of $Q_i$ which are neither corners of $P$ nor star centers of neighbours (note that if a concave corner of $Q_i$ touches $P$, then it touches $P$ at a concave corner).
Let $CD$ and $DE$ be maximal segments on the boundary of $Q_i$ such that $D$ is a concave corner of $Q_i$ and no corner of $P$ and no star center of a neighbour is an interior point of $CD\cup DE$.
It follows by \cref{claim1} that this point $D$ can be assumed to lie on the boundary of two polygons $Q_j$ and $Q_l$ whose boundaries share a line segment $DF$.
Informally, we now move $D$ towards $F$.
This will expand $Q_i$ and shrink $Q_j$ and $Q_l$ and possibly also other neighbours of $Q_i$ whose boundaries contain segments or points on $CD\cup DE$.
To be precise, define $D'$ to be the first point on $DF$ from $D$ such that one of the following cases holds:
(i) one of the segments $CD'$ or $D'E$ contains a corner of $P$,
(ii) $D'$ is the intersection of $DF$ and $CE$ (if it exists),
(iii) one of the segments $CD'$ or $D'E$ contains a star center $A_m$, $m\neq i$,
(iv) $D'=F$.
The cases are shown in \cref{fig:elimConcave}.
We then assign the quadrilateral $CD'ED$ to $Q_i$, which will increase $Q_i$, decrease the pieces intersected by $CD'\cup D'E$ and, by \cref{lemma:cut}, all the involved pieces remain star-shaped.
We repeat this operation, and it remains to argue that the process eventually terminates.

\begin{figure}[h]
\centering
\includegraphics[page=2]{figs.pdf}
\caption{The process of eliminating concave corners of $Q_i$ that are not star centers of neighbouring pieces.}
\label{fig:elimConcave}
\end{figure}

In case (i), $Q_i$ now touches $P$ at a corner which clearly must be concave.
In case (ii), we have eliminated a concave corner of $Q_i$.
In case (iii), we have increased the number of star centers on the boundary of $Q_i$, which can happen at most $k-1$ times.
In case (iv), we eliminated a segment $DF$ of $Q_j$ and $Q_l$, which decreases the total number of segments of the pieces.
We conclude that the operation can be repeated at most a finite number of times and the process therefore eventually terminates.
\end{subproof}

A \emph{star neighbour} of $Q_i$ is a neighbour $Q_j$ whose star center $A_j$ is on $\partial Q_i$.
Recall that we assume $\partial Q_i$ does not contain a concave corner of $P$.
Now, additionally assume that $Q_i$ is non-convex, i.e., $Q_i$ has a concave corner.
As $\partial Q_i$ does not contain a concave corner of $P$, we can assume that this concave corner is the star center of a star neighbour by \cref{claim2}.
To obtain a contradiction with the minimality of the start partition, we show that $Q_i$ can be subsumed by the star neighbours.
To this end, we consider a triangulation of $Q_i$.
The diagonals of the triangulation that have an endpoint at a concave corner of $Q_i$ partition $Q_i$ into convex polygons $R_1,\ldots,R_m$, as illustrated in \cref{fig:eat} (left), where solid diagonals have an endpoint at a concave corner.
\begin{figure}[h]
\centering
\includegraphics[page=3]{figs.pdf}
\caption{Left: We reassign the piece $Q_i$ to the pieces of the concave star neighbours.
Right: We expand the convex piece $Q_i$ by adding the triangle $A_jCD$.}
\label{fig:eat}
\end{figure}
We assign a polygon $R_p$ to $Q_j$ if $A_j$ is on the boundary of $R_p$ and $R_p$ has not already been assigned to another star neighbour, as shown by the arrows in the figure.
Thus, the considered star partition did not consist of a minimum number of polygons, which is a contradiction. Consequently, $Q_i$ must be convex.

As we assume $k \geq 2$, there has to be a piece $Q_j$ that is a neighbour of $Q_i$ so that the common boundary $\partial Q_i\cap\partial Q_j$ contains a segment $CD$.
We expand $Q_i$ and shrink $Q_j$ by adding the triangle $A_jCD$ to $Q_i$, as shown in \cref{fig:eat} (right), which keeps $Q_i$ as well as $Q_j$ star-shaped.
This can introduce concave corners $C$ and $D$ on $Q_i$, but we can proceed as in the proof of \cref{claim2} and expand $Q_i$ until we hit a concave corner of $P$ or obtain that all concave corners are star centers.
If we do not hit a concave corner of $P$, we can again argue that $Q_i$ can be subsumed by $Q_j$ and potentially other star neighbours, contradicting the minimality of the star partition.

Note that if a concave corner of $P$ appears on the boundary of a piece before the changes described above (including in the proof of \Cref{claim2}), then the corner also appears on the piece afterwards.
We conclude that we can expand each polygon $Q_i$ that does not contain a concave corner of $P$ until it eventually does.
In the end, we obtain a minimum star partition with Property~\ref{prop:boundary}.
\medskip

\noindent
\textbf{Property~\ref{prop:endpoint}.}
We now prove that we can obtain Property~\ref{prop:endpoint}.
Consider an edge $CD$ of $P$.
For each piece $Q_i$, it holds by the optimality of the partition that the intersection $CD\cap\partial Q_i$ is either empty or a single segment $EF$ (which may be a single point); see \cref{fig:canonical}.
We call such a segment $EF$ a \emph{shared} segment, and we say that a shared segment is \emph{canonical} if both endpoints are canonical.
We define the \emph{canonical prefix} of $CD$ to be all the shared segments from $C$ to (and excluding) the first non-canonical shared segment.
We show that we can modify the partition such that the number of segments in the canonical prefix increases.
It therefore holds that the process must stop, so that all shared segments on $CD$ have canonical endpoints in the end.
The process does not change whether shared segments on other edges of $P$ are canonical, so repeating the process for all edges yields a star partition with Property~\ref{prop:endpoint}.

\begin{figure}[h]
\centering
\includegraphics[page=4]{figs.pdf}
\caption{The fat segments are edges of $P$ and the thin black segments indicate the boundaries of pieces in the interior of $P$.
The corners $C$ and $D$ as well as the interior points $E$ and $F$ are canonical, but $G$ is not.
There are three segments in the canonical prefix of $CD$, namely the intersections of $CD$ with each of $\partial Q_1,\partial Q_2,\partial Q_3$.
}
\label{fig:canonical}
\end{figure}

Consider the first non-canonical segment $EF$, where $E$ is canonical but $F$ is not.
Let $F'$ be the first point on $FD$ from $F$ such that either
(i) $F'$ is canonical,
(ii) $A_iF'$ contains another star center $A_j$, or
(iii) $A_iF'$ contains a corner $G$ of $P$.
Since the endpoint $D$ is canonical, the point $F'$ is well-defined.
The cases are illustrated in \cref{fig:canonicalcasei,fig:canonicalcaseii,fig:canonicalcaseiii}.

In case (i), we assign the triangle $A_iFF'$ to $Q_i$, which according to Lemma~\ref{lemma:cut} keeps all pieces star-shaped as the triangle does not contain any star centers of other pieces than $Q_i$.
We have then increased the number of segments in the canonical prefix.

In case (ii), we assign the triangle $A_iEF'$ to the piece $Q_j$.
The shared segment $s=CD\cap \partial Q_j$ of the piece $Q_j$ now starts at the point $E$.
The other endpoint of $s$ is either $F'$ (as in \cref{fig:canonicalcasei}) or a later point.
In the latter case, it is possible that $s$ is canonical, and we have increased the number of segments in the canonical prefix.
If $s$ is not canonical, we repeat the process of repairing the non-canonical endpoint of $s$.
Since the star center $A_j$ must be closer to the edge $CD$ than $A_i$, we encounter case (ii) less than $k$ times before we end in case (i) or (iii).

\begin{figure}[h]
\centering
\includegraphics[page=8]{figs.pdf}
\caption{Case (i).
We expand $Q_i$ with the triangle $A_i FF'$.
}
\label{fig:canonicalcasei}
\end{figure}

\begin{cfigure}[h]
\centering
\includegraphics[page=6]{figs.pdf}
\caption{Case (ii).
We expand $Q_j$ with the triangle $A_i FF'$.
}
\label{fig:canonicalcaseii}
\end{cfigure}

It remains to consider case (iii), namely that $A_iF'$ contains a corner $G$ of $P$.
We know that $G\neq D$, since otherwise, we would have been in case (i).
Let $F''$ be the first canonical point on the segment $F'E$ from $F'$, which is well-defined as $E$ is canonical.
Furthermore, we know that $F''$ is on the segment $EF$, since otherwise, we should have been in case (i).
Let $\mathcal Q'=\{Q'_1,\ldots,Q'_{f'}\} \setminus \{Q_i\}$ be the set of pieces intersected by the interior of $GF'$ in order from $G$ and excluding $Q_i$.

In a first step, we assign the triangle $A_iF''F'$ to the piece $Q_i$.
This can reduce pieces intersecting $A_i F'$, but \Cref{lemma:cut} ensures that they remain star-shaped.
However, the point $F'$ need not be canonical.
In a second step, we therefore remove the triangle $\Delta=GF''F'$ from $A_i$ and instead distribute $\Delta$ among the pieces $\mathcal Q'$, as follows.
For each $j=1,\ldots,f'-1$, we consider the segment $s$ on the shared boundary of $Q'_j$ and $Q'_{j+1}$ with an endpoint on $GF'$.
We then extend $s$ into $\Delta$ until we reach one of the other segments $GF''$ or $F''F'$ bounding $\Delta$, or we meet an extension that was already added for a smaller value of $j$.
Since $F''$ was chosen as the first canonical point on $F'E$, this results in a star partition of $P$. 
Furthermore, the shared segment of $Q_i$ on $CD$ has now become the segment $EF''$ (which might just be a single point), which is canonical, so we have increased the number of segments in the canonical prefix. Recall that we consider the number of segments in the prefix here and not the geometric length of the prefix; the geometric length of the prefix indeed can decrease.
  
\begin{figure}[h]
\centering
\includegraphics[page=7]{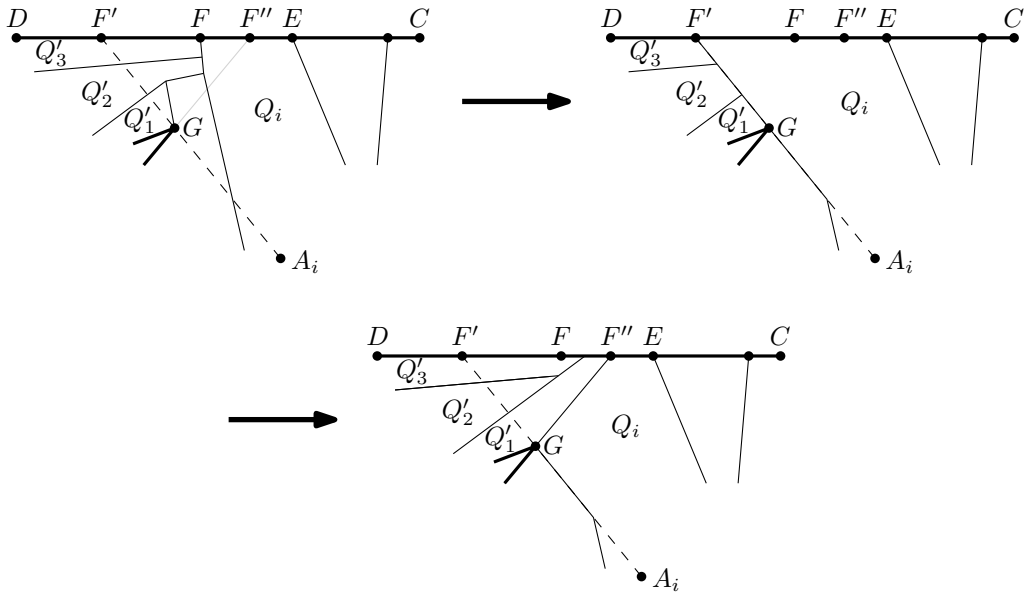}
\caption{Case (iii).
In two steps, we distribute the triangle $A_iF''F'$ among $Q_i$ and the pieces $Q''_1,Q''_2,Q''_3$ intersected by $G F'$.
}
\label{fig:canonicalcaseiii}
\end{figure}

Finally, note that none of the above modifications of the star partition cause a piece $Q$ to become non-adjacent to a concave corner $H$ of $P$ if $Q$ was adjacent to $H$ before.
More precisely, only the modifications in case (iii) changes the neighbourhood of a corner $G$ of $P$.
However, as there are no star centers in the triangle $A_i EF'$, the modifications will not remove $G$ from the boundary of any piece.
\end{proof}

\end{document}